\definecolor{blue}{rgb}{0,0,1}
\definecolor{darkgreen}{rgb}{0,.5,0}
\definecolor{darkred}{rgb}{.75,0,0}
\definecolor{red}{rgb}{1,0,0}
\newcommand{\bl}[1]{\textcolor{black}{#1}}
\newcommand{\dr}[1]{\textcolor{black}{#1}}
\newtheorem{theorem}{Theorem}
\newtheorem{lemma}{Lemma}
\newtheorem{definition}{Definition}
\newtheorem{cor}{Corollary}
\def\given{\:|\:}
\def\L{\mathsf{L}}
\def\H{\mathsf{H}}
\def\U{\mathsf{U}}
\def\B{\mathsf{B}}
\newcommand{\cfb}{C_{\mathrm{FB}}}
\newcommand{\ciid}{C_{\mathrm{IID}}}
\def\binent{\mathscr{H}}
\newcommand{\E}{\mathbb{E}}
\newcommand{\posta}{$\text{POST}(\alpha)$}
\begin{document}

\title{Capacity of a Simple Intercellular Signal Transduction Channel}

\author{
	Peter J. Thomas and Andrew W. Eckford,~\IEEEmembership{Senior Member, IEEE}%
	\thanks{Submitted to {\sc IEEE Transactions on Information Theory}, November 6, 2014; revised September 19, 2015 and April 5, 2016; accepted July 27, 2016.}
	\thanks{Material in this paper was presented in part at the 2013 IEEE International Symposium on Information Theory (ISIT 2013), Istanbul, Turkey.}
	\thanks{Peter J. Thomas is with the Department of Mathematics, Applied Mathematics, and Statistics, the 
	Department of Biology, the Department of Cognitive Science, and the Department of Electrical Engineering and Computer Science, Case Western Reserve University, Cleveland, Ohio, USA 44106-7058.
	Andrew W. Eckford is with the Department of Electrical Engineering and Computer Science,
	York University, 4700 Keele Street, Toronto, Ontario, Canada M3J 1P3. Emails:
	pjthomas@case.edu, aeckford@yorku.ca}
	\thanks{This work was supported by a grant from the Simons Foundation (259837 to Peter Thomas), by the Council for the International Exchange of Scholars (CIES), by the National Science Foundation (grants DMS-0720142, EF-1038677, DMS-1413770), by sabbatical support from Case Western Reserve University, and by the Natural Sciences and Engineering Research Council. 	}
	\thanks{Software used to generate some of the results in this paper can be downloaded from GitHub: http://github.com/andreweckford/CapacityOfSignalTransduction}%
}

\maketitle

\begin{abstract}
We model \bl{biochemical signal transduction, based on a ligand-receptor binding mechanism, as a discrete-time finite-state Markov channel, which we call the BIND channel.} 
We show how to obtain the capacity of this channel, \bl{for the case of binary output, binary channel state, and arbitrary finite input alphabets.} 
We show that the capacity-achieving input distribution is IID. Further, 
we show that feedback does not increase the capacity of this channel. 
We show how the capacity of the discrete-time channel  approaches the capacity of Kabanov's Poisson channel, in the limit of short time steps and rapid ligand release. 
\end{abstract}

\section{Introduction}
\label{sec:introduction}
\subsection{Overview}

\IEEEPARstart{R}{esearch} at the intersection of biology and information theory stretches back almost to Shannon's founding papers, with notable work by Yockey \cite{yockey1958a,yockey1958b}, Attneave \cite{Attneave1954}, Barlow \cite{barlow1961}, and Berger \cite{berger1971}. After long remaining on the margins of the wider information theory community, this research area finds itself newly in the limelight due to the convergence of two recent trends. 
First, quantitative biologists increasingly apply information-theoretic methods to the analysis of high throughput, individually resolved laboratory data \cite{BrennanCheongLevchenko2012Science,LevchenkoNemenman2014CurrOpinBiotech};
second, ``mainstream'' information theorists increasingly explore biological applications (e.g., \cite{ein11b,coleman1,coleman2,son12,eck07}),
obtaining results in fields such as molecular biology and neuroscience. 
These two trends have developed alongside increasing interest in the 
mathematical and conceptual foundations of biology, as well as interest in biologically-inspired communication systems \cite{EinolghozatiSardariFekri2012ISIT,FarsadGuoEckford2013PLoSOne}, further accentuating the importance of information theory in biology.

The current paper focuses on 
communication systems that employ chemical principles, broadly known as {\em molecular communication} \cite{hiy05}. Recent work in molecular communication can be divided into two categories.
In the first category, work has focused on the engineering possibilities:
to exploit molecular communication for specialized applications, such as nanoscale networking
\cite{hiy05,par09}. In this direction, information-theoretic work has focused on
the ultimate capacity of these channels, regardless of biological mechanisms (e.g., \cite{eck07,son12}).
In the second category, work has focused on analyzing the biological machinery of
molecular communication (particularly ligand-receptor systems), both to describe the components of a possible communication system
\cite{nak07} and to describe their capacity \cite{ata07,NIPS2003_NS03,ein11,ein11b,LiMoserGuo2014IEEE}. Our
paper, which builds on work presented in \cite{NIPS2003_NS03}, fits into the second category, and many tools in the information-theoretic literature can be used to
solve problems of this type. Related work is also found in \cite{ein11}, where capacity-achieving input distributions were 
found for a simplified ``ideal'' receptor; that paper also discusses but does not solve the capacity
for the channel model we use.

Our contribution in this paper is to prove several important properties of capacity
for a two-state signal transduction channel, \bl{which we refer to as the ``Binding IN Discrete time'' (BIND) channel}, as found in the {\em Dictyostelium} model organism \cite{Wang+Rappel+Kerr+Levine:2007:PRE,Ueda+Shibata:2007:BPJ} as well as in models of neural communication systems taking into account refractoriness or synaptic dynamics \cite{DegerHeliasCardanobileAtayRotter2010PRE}, \bl{\cite{DrosteLindner2014BiolCyb}}.
\bl{The BIND channel, introduced formally in \S \ref{sec:Capacity}, is a discrete-time analog of a ubiquitous biochemical signal-transduction mechanism, described in \S \ref{ssec:bio}.}
\bl{We show that the capacity-achieving input distribution of the BIND channel, which is a discrete-time Markov chain model, is IID,} with all the probability
weight on the minimum and maximum possible ligand concentrations. Further,
we show that feedback does not increase the capacity of the \bl{BIND} channel.
Finally, given an IID input distribution, we give a simple closed-form expression
for the mutual information, which can be maximized to find capacity.
In addition to the capacity results, we discuss the mutual information of the \bl{BIND} channel
when the channel inputs are Markov distributed, and we \bl{compare} our capacity results to earlier known
results on the capacity of Poisson counting channels. \bl{We focus on the capacity of a single
receptor, leaving the problem of multiple receptors to future work.}

\bl{Indecomposable discrete time finite state channels, of which the BIND channel is an example, have been studied extensively \cite{BlackwellBreimanThomasian1958AnnalsMathStat}.  Although the capacity is unknown for the general case, many special cases have been examined, some related
to biological signaling.  The trapdoor channel, introduced by Blackwell \cite{Blackwell1961chapter}, has been generalized as a model for communication mediated by diffusion of chemical signals, 
feedback capacity and zero-error feedback capacity of which  
has been solved \cite{PermuterCuff_vanRoy_Weissman2007IEEE_ISIT}. Channels with internal states provide models for systems with memory effects, intersymbol interference, or both \cite[Ch.~4.6]{gal68}. In some cases, the capacity of finite state channels can be increased by feedback.  For example, feedback has been shown to increase the capacity for a class of finite state Markov channels in which the channel state transition probabilities are independent of the input
(see, e.g., \cite{AsnaniPermuterWeissman2014IEEETransIT}).
Finite state channel models for which feedback \emph{does not} increase capacity are 
therefore of interest.}

\bl{  
Berger, Chen, and Yin studied a general class of {\em unit output memory} (UOM) finite state channels for which feedback does not increase capacity \cite{che05,yin-unpublished}. In these models, 
the channel state and channel output are isomorphic, and the channel output is fed back to
the transmitter with unit time delay. A key feature of UOM channels is that the 
feedback-capacity-achieving input distribution has a simple form.
As we will show in \S \ref{sec:Capacity}, the BIND model falls within this class if feedback
is introduced, and we use this fact to show that the capacity and feedback capacity are the same.  
Relatedly, but distinctly, Permuter and colleagues introduced the {\em Prior Output is the STate} (POST) channels,
a class of UOM channels for which capacity and feedback capacity may be
readily evaluated  \cite{AsnaniPermuterWeissman2013IEEE_ISIT,PermuterAsnaniWeissman2014IEEETransIT}, again
showing that capacity and feedback capacity are the same for many POST channels. We discuss the distinctions and relationship between the BIND and the  \posta~and $\text{POST}(a,b)$ channels in \S \ref{ssec:POST}.}


\subsection{Biological Motivation}
\label{ssec:bio}

As some readers of the Transactions may be unfamiliar with the details of biological signal transduction, 
we devote the remainder of the introduction to an overview of such systems.

Living cells communicate with one another through a web of biochemical interactions referred to as \emph{signal transduction networks}\cite{CaiDevreotes2011SeminCellDevBiol,Gough2005SciSTKE,RobbinsFeiRiobo2012SciSignal}.  
These biochemical networks allow individual cells to perceive, evaluate and react to chemical stimuli \cite{FedoroffFontana2002Science,Rappel+Thomas+Levine+Loomis:2002}.  Examples include chemical signaling across the synaptic cleft 
connecting the axon of one nerve cell to the dendrite of another \cite{Jonas1993EXS}, calcium signaling within the postsynaptic spines of a dendrite \cite{KellerFranksBartolSejnowski2008PLoSOne}, pathogen localization by migratory cells in the immune system \cite{KhairDaviesDevalia1996EurRespirJ}, growth-cone guidance during neuronal development \cite{Goodhill2003NECO}, phototransduction in the retina \cite{Yau1994phototransduction}, and gradient sensing by the social amoeba \textit{Dictyostelium discoideum} \cite{Iglesias+Deverotes:2008:JCellBio}.


Signal transduction at the cellular and subcellular level typically involves a complex macromolecular apparatus comprising multiple proteins. 
For example, transmission of neural signals often depends on diffusion of neurotransmitter molecules across a narrow gap (the synaptic cleft) to receptor proteins on the postsynaptic membrane.  These neurotransmitter receptors are connected to large protein ``signaling machines" \cite{sci:Kennedy:2000} that control the downstream effects of neurotransmitter signaling, including signaling mediated by the influx of extracellular calcium ions.  
%
%
In general, activation of a receptor will produce {\em second messengers} within
the cell, which control its behaviour.

In this paper we are most interested in the process at the receiving end
of signal transduction, where a signaling molecule (ligand molecule) binds to a receiver 
molecule (protein) at a destination cell.
Despite the apparent complexity of this process,
a key simplifying observation is that the receptor proteins are
driven through a finite 
series of {\em states} by the presence of signaling molecules \cite{ShifmanChoiEtAlKennedy2006PNAS}.  

A two-state example, where the receptor can be either bound to the ligand (signaling) molecule or else unbound,
is shown in Figure \ref{fig:LigandReceptor}: if the receptor is unbound, an available 
ligand can bind with it, changing its state; the receptor must then go through an ``unbinding'' process,
processing the ligand and reverting to the initial state, before it can bind with another ligand.
This two-state, bound-unbound 
receptor model is appropriate for the 3'-5'-cyclic adenosine monophosphate (cAMP)
receptor in the {\em Dictyostelium} amoeba, which is used as a model
organism for studies of signal transduction \cite{Devreotes1994Neuron,KleinSunSAxeKimmelJohnsonDevreotes1988Science,NIPS2003_NS03}. This is the simplest nontrivial example of a ligand binding to a receptor, 
and forms the basis for the results in this paper.

\begin{figure}[t!] 
   \centering
   \includegraphics[width=3.25in]{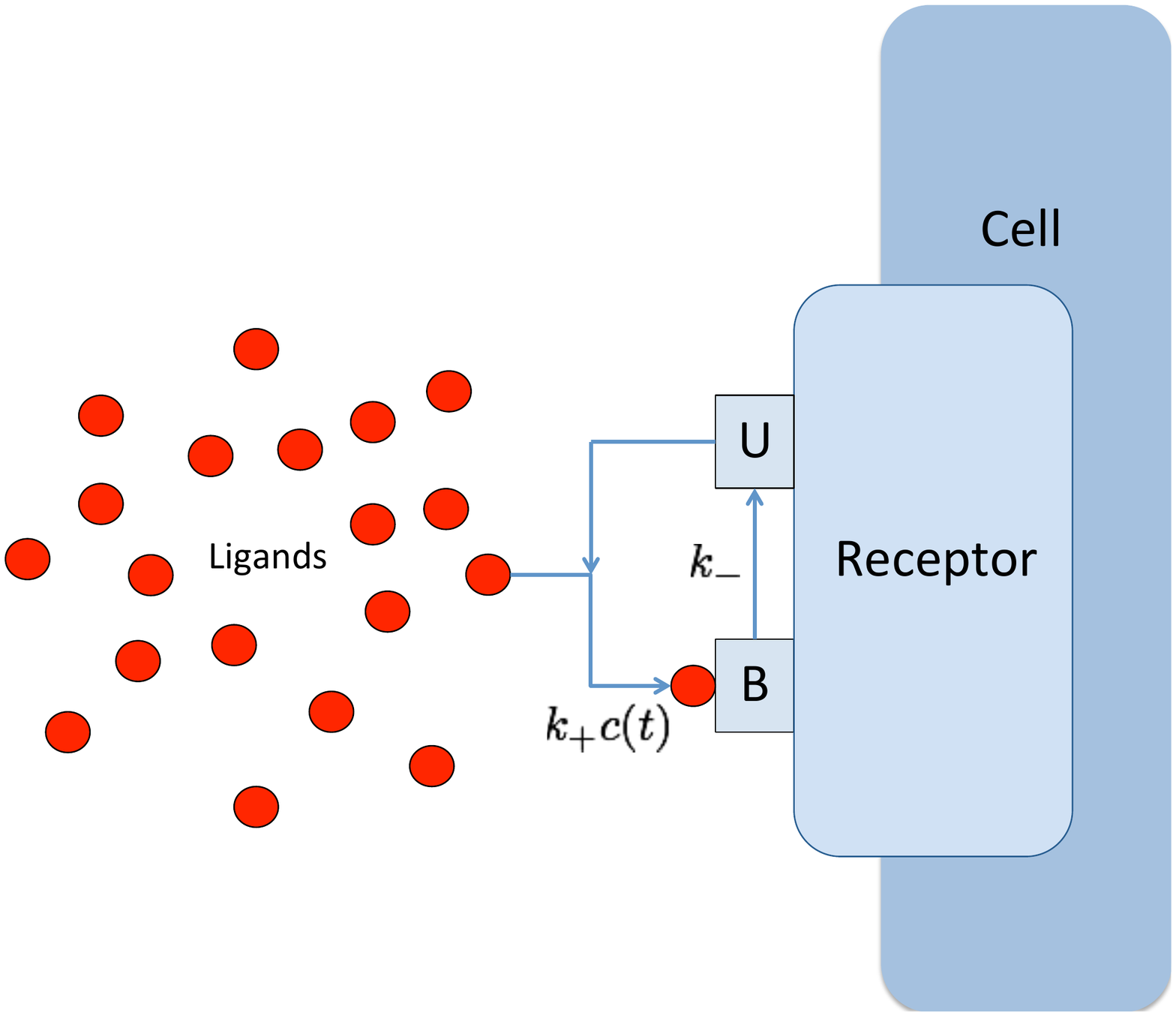} 
   \caption{An example of a two-state binding and unbinding process of the receptor. If the receptor is unbound (U), then a ligand can be absorbed by the receptor; this process occurs at rate $k_+ c(t)$, proportional to the ligand concentration $c(t)$. If the receptor is bound (B), it reverts to the unbound state
   at rate $k_-$, independent of ligand concentration. }
   \label{fig:LigandReceptor}
\end{figure}


The complexity of these systems can be much higher.
In many instances, signal transduction molecules possess a number of sites at which ligand molecules can bind to the receiver protein.  A protein with $k$ binding sites, each either bound or unbound, can have $2^k$ distinct binding states. 


The signal is expressed through the time-varying concentration $c(t)$ of ligand molecules, which affects the binding rate of the receptor (as in Figure \ref{fig:LigandReceptor}).
\bl{We assume throughout the paper that the binding of ligand molecules to a receptor protein obeys the familiar law of mass action \cite{GuldbergWaage1879JPraktischeChemie,vantHoffCohen1896book}, namely, that the rate of the reaction
\begin{equation}
\text{Ligand} + \text{Unbound Receptor} \to \text{Bound Receptor}
\end{equation}
proceeds at a rate proportional to the product of the concentration of the reactants, i.e. 
$$r_\text{binding}=k_+[\text{Ligand}][\text{Unbound Receptor}].$$
Here $k_+$ is the rate constant for the forward (binding) reaction, and [A] is the concentration of chemical species A, typically measured in nM ($10^{-9}$ moles per liter).  For the cyclic AMP (cAMP) molecule binding to the cAMP receptor in the \textit{Dictyostelium} amoeba, $k_+$ is on the order of $4\times 10^{-2}(\text{sec}\,\text{nM})^{-1}$.  
The reverse reaction also occurs:
\begin{equation}
\text{Bound Receptor} \to \text{Ligand} + \text{Unbound Receptor} 
\end{equation}
with rate 
$$r_\text{unbinding}=k_-[\text{Bound Receptor}].$$
Ueda \textit{et al.}~measured the distribution of binding durations of individual cAMP receptors and found the release time following binding is well approximated by an exponential waiting time distribution with rate $k_-\approx 1/\text{sec}$ \cite{UedaSakoTanakaDevreotesYanagida2001Science}.  The law of mass action thus dictates that the concentration of bound receptors $[\B]$ obeys the differential equation
\begin{equation}
\frac{d[\B]}{dt}=k_+[\text{Ligand}][\U]-k_-[\B].
\end{equation}
The signal available to the cell from its surroundings takes the form of the time varying ligand concentration, $c(t)=[\text{Ligand}]$.  This time-varying concentration serves as the input of the channel we consider.  For a given cell, the total number of bound and unbound receptors is a fixed constant, $[\U]+[\B]=[\text{Total}].$  Dividing by the total number of receptors, and setting $y(t)=[\B]/[\text{Total}]$ to be the fraction of receptors that are bound to ligand at time $t$, the law of mass action translates into a first order affine linear differential equation
\begin{equation}\label{eq:massaction3}
\frac{dy}{dt}=k_+c(t)(1-y) - k_-y.
\end{equation}
Such a differential equation is a simple example of a {\em chemical master equation} \cite{GeQian2013EncycSysBiol}.}

\bl{
We focus now on a single receptor, binding and releasing ligand independently of the other receptors.  At the single protein level mass action kinetics translates into a well established stochastic representation \cite{Higham2008SIREV}.  Let $p(t)$ be the probability that the receptor's binding site is occupied by a ligand molecule at time $t$.  Then $p(t)$ evolves according to a master equation of the same form as \eqref{eq:massaction3}
\begin{equation}\label{eq:cts-time-2state}
\frac{dp}{dt}=k_+c(t)(1-p) - k_-p.
\end{equation}
This system may naturally be viewed as a communications channel in which the input is the time varying concentration $c(t)$, and the output is the receptor state (bound or unbound).}

\bl{ 
The BIND channel, introduced in \S \ref{sec:Capacity}, is a discrete time analog of this system.  Both the continuous and discrete time versions of the ligand-binding channel share an important asymmetry.  When the receptor is in the unbound state, its transition rate is sensitive to the input signal (the ligand concentration).  When the receptor is in the bound state, it {\em cannot bind a second ligand molecule until releasing the one it has already bound}: the channel must leave the bound state before becoming sensitive to the input again.  This asymmetry reflects the different roles of the signal in reactions (1) and (2).  In the forward reaction (binding: reaction 1) the ligand is a \emph{reactant}, and the rate of reaction is proportional to the ligand's concentration.  In the backward reaction (unbinding: reaction 2) the ligand is instead a \emph{product}.  The reaction rate is a function of the reactant concentrations, not the product concentrations, so the unbinding reaction proceeds at an instantaneous rate that does not depend on the input signal concentration.  This asymmetry occurs naturally in any model of ligand-mediated biochemical signal transduction, but is absent from other binary channel models with memory, such as the trapdoor, Ising, Glenn-Elliott, or \posta~channels \cite{BergerBonomi1990IEEE,PermuterCuff_vanRoy_Weissman2007IEEE_ISIT,AsnaniPermuterWeissman2013IEEE_ISIT,Blackwell1961chapter,MushkinBarDavid1989IEEETransIT}. Thus ligand-receptor binding presents a novel, and intrinsically biological, type of communications channel.}

Basic mechanisms 
of signal transduction have been known for decades \cite{SpringerGoyAdler1979Nature,HunterCooper1985AnnRevBioc}. However, recent technological advances have dramatically increased the ability to manipulate and measure the signals entering and leaving signal transduction networks at the molecular level.  These advances create an opportunity for quantitative understanding of molecular communication. For example, microfluidics combined with cell-by-cell single track measurements have been used to estimate the mutual information between a chemical gradient and the motile response of the \textit{Dictyostelium} amoeba \cite{FullerChenAdlerGroismanLevineRappelLoomis2010PNAS,HuChenLevineRappel2011JStatPhys}.  Single molecule fluorescence methods have allowed visualizing the binding and unbinding of signaling molecules to single receptors in real time \cite{UedaSakoTanakaDevreotesYanagida2001Science}.  High throughput measurements have led to sufficiently precise  capacity estimates, for a  cancer-related signaling network, to extract information about the network topology \cite{CheongRheeWangNemenmanLevchenko2011Science}.  Optogenetics methods have created a new paradigm for manipulating molecular communication devices using applied light sources \cite{Deisseroth2010NatureMethods_Optogenetics}.  
Thus, our results come at an opportune time for biological researchers, both in terms of their
analytical capabilities and their interest in exploiting information theory.  \bl{At the same time, development of novel communications models based on biological systems has been identified as an important  growth area by the information theory community.}\footnote{\bl{2014 Report of the IEEE Information Theory Society Committee on New Directions \cite{ITSOC_new_directions_report_2014}.}}



\subsection{Capacity problem for a general point process channel}
\label{ssec:generalcapacityproblem}
We now introduce a \bl{general description} of the continuous time signal transduction channel with arbitrary (bounded) scalar input and discrete output.  
Finite state Markov processes conditional on an input process provide models of signal transduction and communication in a variety of biological systems,  as detailed in the preceding section. 
Typically, a single ion channel, or receptor, is in one of $n$ states.  The states form a finite directed graph $\mathcal{Y}$ with $n$ vertices, with edges connecting states that intercommunicate through a conformational or chemical change, or ligand-binding/unbinding event.  The receptor performs a continuous-time random walk on the graph, with one or more transition rates being influenced by the external input signal, $X(t)$.  The input signal can be the concentration of a diffusing signaling molecule for a ligand-gated receptor; it can be the transmembrane electrical potential for a voltage-gated receptor \cite{SchmidtThomas2014JMN}.  

There is a rich literature on the use of master equations for representing stochastic chemical reactions \cite{Higham2008SIREV} and algorithms for generating sample trajectories \cite{Gillespie1977,AndersonKurtz2010Chapter,AndersonErmentroutThomas2015JCNS}.
In the master equation representation of a signal transduction channel, the instantaneous transition rate matrix $\mathbf{Q} = [q_{jk}]$ depends on the external input $X(t)$. The probability, $p_k$, that the channel is in state \bl{$Y(t)=k$ for some $k\in\mathcal{Y}$} evolves according to 
\begin{equation}\label{eq:cts-time-general}
\frac{dp_k}{dt}=\sum_{j=1}^n p_j(t)q_{jk}(X(t))
\end{equation}
where for  $(j\ne k)$, $q_{jk}\ge 0$ is the input-dependent rate at which the receptor transitions from state $j$ to state $k$, and $q_{jj}=-\sum_{k,k\ne j}q_{jk}$.
Taking $\{X(t)\}_{t=0}^T$ as the input, and the receptor state $Y(t)\in\mathcal{Y}$ as the output, gives a channel model, the capacity of which is of general interest.  

\bl{We emphasize that equation \eqref{eq:cts-time-2state} corresponds to \eqref{eq:cts-time-general} in the case $n=2$.  Let $\mathcal{Y}=\{\U,\B\}$ be the state graph and let $p_\U$ and $p_\B$ be the probability of the receptor being in the unbound and bound states, respectively.  Set 
\begin{equation}
q_{\U\B}(c(t))=k_+ c(t),\hspace{1cm}q_{\B\U}=k_-,
\end{equation}
and identify $X(t)=c(t)$ as the input signal.  The correspondence follows, since the equations \begin{align}
\frac{dp_{\B}}{dt}&=-p_\B q_{\B\U}+p_\U q_{\U\B}(X(t))\\
\frac{dp_{\U}}{dt}&=p_\B q_{\B\U}-p_\U q_{\U\B}(X(t))
\end{align}
are equivalent to equation \eqref{eq:cts-time-2state}, given $p_\B(0)+p_\U(0)\equiv 1$.
}

The  majority of biological signal transduction systems operate without regulation by a fast clock, i.e.~they operate as continuous-time stochastic systems.  Nevertheless, discrete time channel models arise as approximations to continuous time systems by fixing a small time step.  While most of our analysis falls in the discrete time framework, we discuss the relation to continuous time systems further in \S\ref{sec:continuoustime}.

A related classical Poisson channel was solved by Kabanov \cite{Davis1980ieeeIT,Kabanov1978}. 
%
In the 
\bl{limit as $k_- \rightarrow \infty$},
in which 
transition from the bound state back to the unbound state is instantaneous, the ligand-binding channel becomes
a simple counting process, with the input encoded in the time varying intensity.  This situation
is exactly the one considered in Kabanov's analysis of the capacity of a Poisson channel, under a max/min intensity constraint \cite{Davis1980ieeeIT,Kabanov1978}.  For the Poisson channel, the capacity may be achieved by setting the input to be a two-valued random process fluctuating between the maximum and minimum intensities.  If the intensity is restricted to lie in the interval $[1,1+c]$, the capacity is \cite{Kabanov1978}
\begin{equation}\label{eq:Kabanov}
C_{\mbox{Kab}}(c)=\frac{(c+1)^{1+1/c}}{e}-\left(1+\frac{1}{c} \right)\ln(c+1).
\end{equation}
%
\bl{As shown by Wyner \cite{Wyner1988IEEE_IT_I,Wyner1988IEEE_IT_II},} 
Kabanov's formula may be obtained \bl{(nonrigorously)} by restricting the input to a two-state  discrete time process with \bl{IID} input $X(t)$ taking the values $X_\text{lo}=1$ and $X_\text{hi}=1+c$. 
In addition, Kabanov proved that the capacity of the Poisson channel cannot be increased by allowing feedback.

Kabanov's approach, focusing on instantaneous unbinding and restricted intensity, is not directly applicable to molecular signal transduction. However,
our long-term goal is to obtain expressions analogous to \eqref{eq:Kabanov} for the continuous-time systems \eqref{eq:cts-time-2state} and \eqref{eq:cts-time-general}.
As a first step, we  restrict attention to a discrete time analog of the two-state system \eqref{eq:cts-time-2state}. As we show in  \S\ref{ssec:ReductionToKabanov}, our channel model can be seen as a natural generalization of Kabanov's counting process channel model.

In the next section we \bl{define and analyze the BIND channel}, a two-state signal-transduction channel model in discrete time \bl{based on ligand-mediated biochemical signal transduction,} and we \bl{rigorously find} its capacity.

%
%

\section{Capacity of the \bl{BIND channel, a} discrete intercellular signal transduction channel}
\label{sec:Capacity}

In this section, we \bl{introduce the BIND channel and} prove our main capacity results.
A roadmap for these results is given as follows:
\begin{enumerate}
	\item In (\ref{eqn:ClosedFormMI}), 
	we give a closed-form expression for the mutual information \bl{of the BIND channel} if the inputs are IID. 
	This expression can be 
	maximized to find capacity with input distribution constrained to be IID (the {\em IID capacity}, 
	see (\ref{eqn:CIID})), 
	but IID capacity is not available in closed form.
	\item In Theorem \ref{thm:maxmin}, we show that
	the IID capacity \bl{of the BIND channel}
	is achieved when only the minimum and maximum possible ligand concentrations
	are used, and no intermediate concentration. 
	(This theorem is given first as it simplifies the proof of the main result.)
	\item In Theorem \ref{thm:main}, 
	we then show that capacity  \bl{of the BIND channel}  is achieved by
	the IID input distribution, with inputs only on minimum and maximum ligand concentration. 
	We do so by 
	showing that the feedback capacity of the channel is satisfied by an IID
	input distribution, relying on the important results on feedback capacity
	from \cite{yin03,che05}.
	\item Finally, in
	Corollary \ref{cor:capacity},
	combining Theorems \ref{thm:maxmin} and \ref{thm:main} with Equation
	(\ref{eqn:CIID}),
	IID capacity is given by (\ref{eqn:CIIDLH}). An illustration is given
	in Figure \ref{fig:mutualInfoPlot}.
\end{enumerate}

\subsection{Discrete-input, discrete-time model}
\label{ssec:mathmodel}

\bl{The BIND channel is} a discrete-time, two-state 
Markov channel representation of the signal reception process, 
an example of which may be found in the {\em Dictyostelium} cAMP receptor.  
The channel input, channel output, and input-output relationship are described
as follows.

{\em Channel input.} 
The channel input is the local concentration of ligands at the receptor:
at the interface between the receptor and the environment,
the receptor is sensitive to the concentration of ligands, binding more frequently as 
concentration increases.
We assume that the input concentration $c_j$ is one of $m$ discrete levels, and
without loss of generality, we will assume
$c_1 \leq c_2 \leq \ldots \leq c_m$. The {\em lowest} concentration $c_1$ 
and {\em highest} 
concentration $c_m$ are especially important in our analysis; we will give them the special symbols
$c_\L := c_1$ and $c_\H := c_m$.
Thus, the input (concentration) 
alphabet is 
\begin{equation}
	\label{eqn:ConcentrationAlphabet}
	\mathcal{X} = \{ c_\L,c_2,c_3,\ldots,c_{m-1},c_\H\} .
\end{equation}
Further, let $X_0^{n-1} = [X_0, X_1, \ldots, X_{n-1}] \in \mathcal{X}^n$  represent a sequence of inputs to the receptor.%
\footnote{We offset the indices of input and output because the pair $(x_i,y_i)$ 
can jointly form a Markov chain; see the discussion in the next section. Thus, it is more natural
for the input $x_i$ to affect the output $y_{i+1}$.}

{\em Channel output.}
The channel output is the state of the receptor.\bl{\footnote{\bl{In \cite{AsnaniPermuterWeissman2013IEEE_ISIT} Permuter and colleagues discuss finite state channels in which the internal state of the receptor is identified with its output.  The BIND channel falls in this general class, although it is distinct from the specific examples discussed in \cite{AsnaniPermuterWeissman2013IEEE_ISIT,PermuterAsnaniWeissman2014IEEETransIT}.  For further discussion see \S \ref{ssec:POST}.}}}
As in Figure \ref{fig:LigandReceptor}, 
the receptor may either be in an unbound state, in which the receptor is waiting for a molecule to bind; or in a bound state, in which the receptor has captured a molecule, and cannot  capture another until the molecule is degraded or released. Thus, channel output is binary: let $\mathcal{Y} = \{\U,\B\}$ represent the output alphabet, where $\U$ represents
the unbound state 
and $\B$ represents the bound state. 
Further, let $Y_1^n = [Y_1, Y_2, \ldots, Y_n] \in \mathcal{Y}^n$ represent a sequence of receptor states.
(Note the offset of index compared with $X_0^{n-1}$, clarified in the diagram below.)

{\em Input-output relationship.}
\bl{Figure \ref{fig:state_transition_figure} illustrates the state transitions of the BIND model.} 
\begin{figure}[t] 
   \centering
   \includegraphics[width=3.25in]{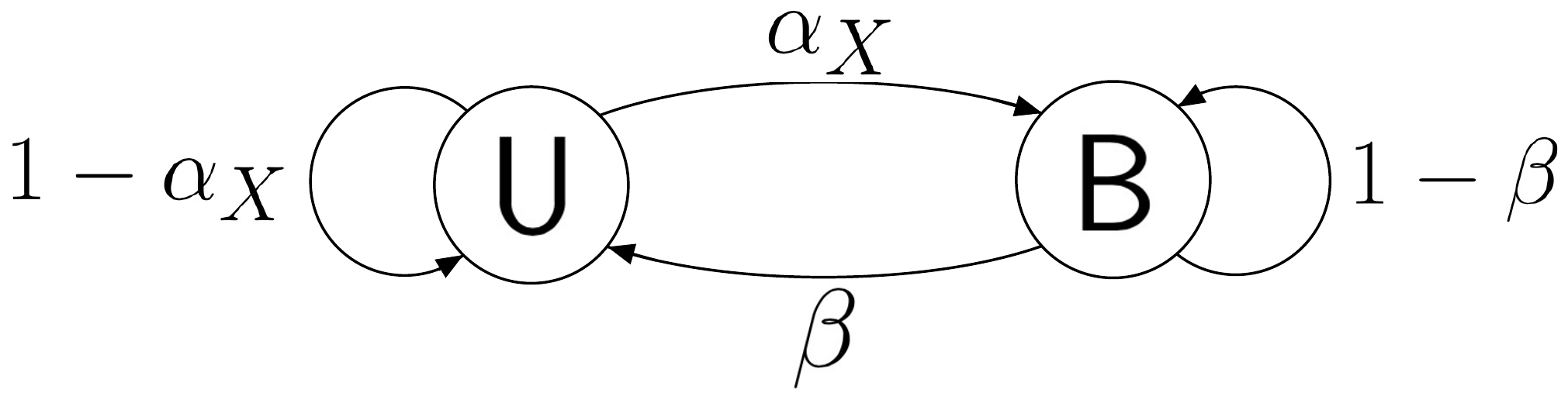} 
   \caption{\bl{State transition diagram for the BIND model.  Channel state $Y\in\{\U,\B\}$ determines sensitivity to input ($\U$: Unbound receptor, sensitive to input.  $\B$: Bound receptor, insensitive to input).  Ligand concentration $c_X$ is the input, with $c_\L\le c_X\le c_\H$. ($c_\L$ = lowest allowed input concentration. $c_\H$ = highest input concentration.)  Input binding probability $\alpha_X=k_+c_X\Delta t$ depends on the input concentration.   Unbinding probability $\beta=k_-\Delta t$ is independent of input concentration.}}
   \label{fig:state_transition_figure}
\end{figure}
The dependencies of the transition probabilities can be illustrated graphically as follows:
$$\begin{array}{ccccccccccc}
&X_0&&X_1&&X_2&&X_3&&X_4&\cdots\\
&\downarrow&&\downarrow&&\downarrow&&\downarrow&& \downarrow   \\
Y_0&\longrightarrow&Y_1&\longrightarrow&Y_2&\longrightarrow&Y_3&\longrightarrow&Y_4&\cdots
\end{array}$$
The state of the receptor is dependent on the 
previous input and the previous state, forming a Markov transition PMF
$p_{Y_i | X_{i-1},Y_{i-1}}(y_i \given x_{i-1},y_{i-1})$. Following the discussion in the previous section,
if $Y_{i-1} = \U$, i.e. the
receptor was previously unbound, then the distribution of $Y_i$ 
depends on the input concentration $X_{i-1}$.
However, if $Y_{i-1} = \B$,
then $Y_i$ 
is independent of $X_{i-1}$.

Thus, the Markov transition PMF $p_{Y_i | X_{i-1},Y_{i-1}}(y_i | x_{i-1},y_{i-1})$
has
$m+1$ parameters:
the $m$-dimensional vector $\alpha = [\alpha_1, \alpha_2, \ldots, \alpha_m]$ of
binding rates, where 
\begin{equation} 
	\label{eqn:AlphaDefinition}
	\alpha_j := p_{Y_i | X_{i-1},Y_{i-1}}(\B \given c_j, \U) ;
\end{equation}
and
$\beta$, the unbinding rate, independent of input signal concentration, where
\begin{equation}
	\label{eqn:BetaDefinition}
	\beta := p_{Y_i | X_{i-1},Y_{i-1}}(\U \given c_j, \B) 
\end{equation}
which is constant for all $c_j \in \mathcal{X}$.
This may also be written as a state transition probability matrix
\begin{equation}
	\label{eqn:StateTransitionProbabilityMatrix}
	\mathbf{P}_{Y|X = c_j} =
	\left[
		\begin{array}{cc}
			1 - \alpha_j & \alpha_j \\
			\beta & 1-\beta
		\end{array}
	\right] .
\end{equation}
Recalling the notation from (\ref{eqn:ConcentrationAlphabet}), 
we write $\alpha_\L$ and $\alpha_\H$ for the lowest and highest binding
rates, respectively. Thus, we can write $\alpha = [\alpha_\L, \alpha_2, \ldots, \alpha_{m-1}, \alpha_\H]$.

\bl{To relate this system to the master equation in (\ref{eq:cts-time-2state}), time is discretized into steps of length $\Delta t$. The parameters $\alpha$ and $\beta$ are then obtained from the rates $k_+$ and $k_-$ via $\alpha_i =  k_+ c_i\Delta t$, and $\beta =  k_-\Delta t$. The time step $\Delta t$ has to be small enough that $\mathbf{P}_{Y|X=c_j}$ is a valid transition probability matrix.}

From the above discussion, the sequence $Y_1^n$, 
given $X_1^n$ and initial input/output pair $X_0,Y_0$,
forms a time-inhomogeneous Markov chain with PMF
%
\begin{equation}
	\label{eqn:TimeInhomogeneousPMF}
	p_{Y_1^n|X_0^{n-1},Y_0}(y_1^n \given x_0^{n-1},y_0) = \prod_{i=1}^{n} 
	p_{Y_i | X_{i-1},Y_{i-1}}(y_i \given x_{i-1},y_{i-1}) .
\end{equation}

We give the following expressions and definitions, which will be useful in the remainder of this section.
For an IID input distribution $p_X(x)$,
since there are $m$ possible values for $x$, we will express $p_X(x)$ as a vector $p$,
with elements 
\begin{align}
	\label{eqn:VectorIID1}
	p &= [p_1,p_2,\ldots,p_m] \\
	\label{eqn:VectorIID2}
	&= [p_\L,p_2,p_3,\ldots,p_{m-1},p_\H] .
\end{align}
For the IID input distribution vector $p$, let $\bar{\alpha}_p$ represent
the average binding probability, given by
\begin{equation}
	\label{eqn:AlphaBar}
	\bar{\alpha}_p = \sum_{j=1}^m \alpha_j p_j .
\end{equation}
Finally, we give a condition on the parameters that will be used in many of our results:
\begin{definition}[Strictly Ordered Parameters]
	\label{defn:StrictlyOrdered}
	The parameters $\alpha$ and $\beta$ are said to be {\em strictly ordered} 
	if they satisfy
	\begin{equation}
		0 < \alpha_\L < \alpha_2 < \alpha_3 < \ldots < \alpha_{m-1} < \alpha_\H < 1 
	\end{equation}
	and
	\begin{equation}
		0 < \beta < 1 .
	\end{equation}
\end{definition}

\subsection{Mutual information and capacity under IID inputs}

Let $C$ represent the Shannon capacity of the system; as \bl{the BIND channel} is a channel with memory,
capacity is defined by
\begin{equation}
	\label{eqn:Capacity}
	C = \lim_{n \rightarrow \infty} \max_{p_{X_0^{n-1}}(x_0^{n-1})}  \frac{1}{n} I(X_0^{n-1};Y_1^n) ,
\end{equation}
Let $\ciid$ represent the capacity from (\ref{eqn:Capacity}) where $p_{X_0^{n-1}}(x_0^{n-1})$ 
is constrained to be IID, i.e., we can write
$p_{X_0^{n-1}}(x_0^{n-1}) = \prod_{i=0}^{n-1} p_X(x_i)$.

If the input distribution is IID, then $Y_1^n$ forms a time-homogeneous Markov chain \cite{che05}.
To see this, again assuming for convenience that $y_0$ is given, we start with
\begin{align}
	\nonumber\lefteqn{p_{Y_1^n,X_0^{n-1}|Y_0}(y_1^n, x_0^{n-1}\given y_0)}&\\ 
	&= p_{Y_1^n|X_0^{n-1},Y_0}(y_1^n \given x_0^{n-1},y_0) p_{X_0^{n-1}}(x_0^{n-1}) \\
	&= p_{Y_1^n|X_0^{n-1},Y_0}(y_1^n \given x_0^{n-1},y_0) \prod_{i=0}^{n-1} p_X(x_i) \\
	\label{eqn:YStationaryMarkov}
	&= 
	\prod_{i=1}^n p_{Y_i|X_{i-1},Y_{i-1}}(y_i \given x_{i-1}, y_{i-1})p_X(x_{i-1}) ,
\end{align}
where (\ref{eqn:YStationaryMarkov}) follows from (\ref{eqn:TimeInhomogeneousPMF}).
%
%
%
Continue by letting 
\begin{align}
	\nonumber\lefteqn{p_{Y_i | Y_{i-1}}(y_i \given y_{i-1})}&\\ 
	&= \sum_{x_{i-1}} p_{Y_i|X_{i-1},Y_{i-1}}(y_i \given x_{i-1}, y_{i-1})p_X(x_{i-1}).
\end{align}
Finally, marginalizing over $X_0^{n-1}$,
\begin{eqnarray}
	\nonumber\lefteqn{p_{Y_1^n | Y_0}(y_1^n \given y_0)} & & \\ 
	& = & \sum_{x_0^{n-1}} p_{Y_1^n,X_0^{n-1}|Y_0}(y_1^n, x_0^{n-1}\given y_0) \\
	& = & \prod_{i=1}^n \sum_{x_{i-1}} p_{Y_i|X_{i-1},Y_{i-1}}(y_i \given x_{i-1}, y_{i-1})p_X(x_{i-1}) \\
	\label{eqn:YStationaryMarkov2}
	& = & \prod_{i=1}^n p_{Y_i | Y_{i-1}}(y_i \given y_{i-1}) ,
\end{eqnarray}
which is the distribution of a time-homogeneous Markov chain. 
%
If $y_{i-1} = \U$, 
\begin{align}
	p_{Y_i | Y_{i-1}}(\B \given \U)
	&= \sum_{x_i} p_{Y_i|X_{i-1},Y_{i-1}}(\B \given x_{i-1}, \U)p_X(x_{i-1}) \\
	&= \sum_{j=1}^m \alpha_j p_j \\
	\label{eqn:Transitions1}
	&= \bar{\alpha}_p ,
\end{align}
with $p_{Y_i | Y_{i-1}}(\U \given \U) = 1-\bar{\alpha}_p$. If $y_{i-1} = \B$, 
\begin{align}
	p_{Y_i | Y_{i-1}}(\U \given \B)
	&= \sum_{x_{i-1}} p_{Y_i|X_{i-1},Y_{i-1}}(\U \given x_{i-1}, \B)p_X(x_{i-1}) \\
	&= \sum_{j=1}^m \beta p_j \\
	\label{eqn:Transitions2}
	&= \beta ,
\end{align}
with $p_{Y_i | Y_{i-1}}(\B \given \B) = 1 - \beta$.
The transition probability matrix for $Y$ is
given by
\begin{equation}
	\label{eqn:YTransitionProbabilityMatrix}
	\mathbf{P}_Y =
	\left[
		\begin{array}{cc}
			1 - \bar{\alpha}_p  & \bar{\alpha}_p \\
			\beta & 1-\beta
		\end{array}
	\right] .
\end{equation}

Suppose the parameters are strictly ordered (Definition 1). Then $Y$ has
a stationary distribution, by inspection of (\ref{eqn:YTransitionProbabilityMatrix}).
The stationary probability of state $\U$ is given by
\begin{equation}
	\label{eqn:StationaryUnbound}
	p_Y(\U) = 
	\frac{1}{1+\bar{\alpha}_p/\beta} ,
\end{equation}
and $p_Y(\B) = 1-p_Y(\U)$.

When $Y_1^n$ is a time-homogeneous Markov chain, we can write the mutual information rate as
\begin{align}
	\mathcal{I}(X;Y) &= \lim_{n \rightarrow \infty} \frac{1}{n} I(X_0^{n-1};Y_1^n) \\
	\label{eqn:iidMutualInformationRate}	
	&= H(Y_i \given Y_{i-1}) - H(Y_i \given X_{i-1},Y_{i-1}) 
\end{align}
for any $i \in \{1,2,\ldots,n\}$. Let 
\begin{equation}
	\label{eqn:BinaryEntropyFunction}
	\binent(p) = - p \log p - (1-p) \log (1-p)
\end{equation}
represent the binary
entropy function.
Dealing with each term on the right hand side of (\ref{eqn:iidMutualInformationRate})
individually,
\begin{align}
	\nonumber\lefteqn{H(Y_i \given Y_{i-1})}&\\
	&= p_{Y}(\U) H(Y_i \given Y_{i-1}=\U) + p_Y(\B) H(Y_i \given Y_{i-1}=\B) \\
	&= p_Y(\U)\binent(\bar{\alpha}_p) + p_Y(\B) \binent(\beta) 
\end{align}
which follows from (\ref{eqn:Transitions1})-(\ref{eqn:Transitions2}); and
\begin{align}
	\nonumber\lefteqn{H(Y_i \given X_{i-1},Y_{i-1})}&\\
	=& \sum_{x_{i-1}} p_X(x_{i-1}) 
	p_{Y}(\U) H(Y_i \given X_{i-1}=x_{i-1},Y_{i-1}=\U) \nonumber \\ 
	& +  \sum_{x_{i-1}} p_X(x_{i-1}) p_Y(\B) H(Y_i \given X_{i-1}=x_{i-1},Y_{i-1}=\B)  \\
	=&\ p_Y(\U) \sum_{j=1}^m p_j \binent(\alpha_j) 
		+ p_Y(\B) \binent(\beta).
\end{align}
%
Then the mutual information rate is given by
\begin{align}
	\mathcal{I}(X;Y) 
	&= H(Y_i \given Y_{i-1}) - H(Y_i \given X_{i-1},Y_{i-1}) \\
	&= p_Y(\U) \left( \binent(\bar{\alpha}_p) - \sum_{j=1}^m p_j \binent(\alpha_j) \right) \\
	\label{eqn:ClosedFormMI}
	&= \frac{\binent(\bar{\alpha}_p) 
		- \sum_{j=1}^m p_j \binent(\alpha_j)}
		{1 + \bar{\alpha}_p/\beta} .
\end{align}
Finally, $\ciid$ is given by
\begin{equation}
	\label{eqn:CIID}
	\ciid = \max_{p} \frac{\binent(\bar{\alpha}_p) 
		- \sum_{j=1}^m p_j \binent(\alpha_j)}
		{1 + \bar{\alpha}_p/\beta} .
\end{equation}

\subsection{$\ciid$ is achieved with all probability mass on $x=\L,\H$}
 
Here we show that the $\ciid$-achieving input distribution $p_X(x)$ uses only
the extreme values of concentration: $\L$ and $\H$.
The result is stated as follows.
\begin{theorem}
	\label{thm:maxmin}
	Let $p^* = [p_1^*, p_2^*, \ldots, p_m^*]$ represent an IID distribution that
	maximizes (\ref{eqn:CIID}). 
	If the parameters are strictly ordered (see Definition \ref{defn:StrictlyOrdered}), 
	then $p_2^* = p_3^* = \ldots = p_{m-1}^* = 0$.
\end{theorem}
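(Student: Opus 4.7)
The plan is to prove the theorem by a perturbation argument that exploits the strict concavity of the binary entropy function $\binent$. The key structural observation is that in the closed-form mutual information (\ref{eqn:ClosedFormMI}), the denominator $1+\bar{\alpha}_p/\beta$ depends on $p$ only through the scalar $\bar{\alpha}_p=\sum_j\alpha_j p_j$. This suggests seeking a mass-redistribution that preserves $\bar{\alpha}_p$ (hence the denominator) while strictly increasing the numerator.

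Suppose, for contradiction, that the optimizer $p^*$ has $p_k^*>0$ for some intermediate index $k\in\{2,\dots,m-1\}$. Under strict ordering we have $\alpha_\L<\alpha_k<\alpha_\H$, so there is a unique $\lambda\in(0,1)$, namely $\lambda=(\alpha_\H-\alpha_k)/(\alpha_\H-\alpha_\L)$, with $\alpha_k=\lambda\alpha_\L+(1-\lambda)\alpha_\H$. Define a new input distribution $p'$ by
\begin{align}
p_\L' &= p_\L^* + \lambda p_k^*, \\
p_\H' &= p_\H^* + (1-\lambda)p_k^*, \\
p_k' &= 0,
\end{align}
leaving all other components unchanged. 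This $p'$ is a valid PMF, and by construction $\bar{\alpha}_{p'}=\bar{\alpha}_{p^*}$, so the denominator of (\ref{eqn:ClosedFormMI}) and the term $\binent(\bar{\alpha}_p)$ in the numerator are both unchanged.

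The only term that changes is $\sum_j p_j \binent(\alpha_j)$. Its change equals
\begin{equation}
\bigl[\lambda\binent(\alpha_\L) + (1-\lambda)\binent(\alpha_\H)\bigr]p_k^* - \binent(\alpha_k)\,p_k^*.
\end{equation}
By strict concavity of $\binent$ on $(0,1)$ (guaranteed here by strict ordering, which puts $\alpha_\L,\alpha_k,\alpha_\H$ in the open interval where $\binent$ is strictly concave) and since $\lambda\in(0,1)$, we have
\begin{equation}
\binent(\alpha_k) > \lambda\binent(\alpha_\L) + (1-\lambda)\binent(\alpha_\H),
\end{equation}
so the change is strictly negative. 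Consequently the numerator of (\ref{eqn:ClosedFormMI}) strictly increases while the denominator is unchanged, giving $\mathcal{I}(X;Y)|_{p'}>\mathcal{I}(X;Y)|_{p^*}$, contradicting optimality of $p^*$. Iterating (or applying the argument jointly to all intermediate $k$) yields $p_2^*=\cdots=p_{m-1}^*=0$.

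I do not anticipate a real obstacle here: the result follows essentially from a one-line concavity argument once one notices that $\bar{\alpha}_p$ is the only functional of $p$ appearing in the denominator. The only mild subtleties are verifying that $\lambda\in(0,1)$ strictly (which needs $\alpha_\L<\alpha_k<\alpha_\H$, supplied by the strict-ordering hypothesis), and that $\alpha_\L,\alpha_\H\in(0,1)$ so that $\binent$ is strictly concave at the relevant points (again supplied by strict ordering).
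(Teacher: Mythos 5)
Your proof is correct and is essentially identical to the paper's own argument: the paper likewise assumes an intermediate index with positive mass, splits it onto $\alpha_\L$ and $\alpha_\H$ via the same convex-combination weights so that $\bar{\alpha}_p$ (and hence the denominator and the $\binent(\bar{\alpha}_p)$ term) is preserved, and invokes strict concavity of $\binent$ to obtain a strict increase, contradicting optimality. One tiny simplification: no iteration is needed, since the contradiction already rules out \emph{any} intermediate index carrying positive mass.
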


\begin{proof}
	The proof proceeds by
	contradiction.			
	Assume the theorem is false: that $p_i^* > 0$ for at least one index $i$ in 
	$\{2,3,\ldots,m-1\}$. 
	Let $u$ represent the smallest index in $\{2,3,\ldots,m-1\}$ such that
	$p_u^* > 0$. From the initial assumption, $u$ must exist, and $\alpha_u$ is the corresponding
	binding probability.
	
	Since $\alpha_1 < \alpha_u < \alpha_m$, 
	there exist constants $\pi_1$ and $\pi_m$ such that 
	\begin{align}
		\label{eqn:InBetween}
		0 < \pi_1,\pi_m &< 1 \\
		\pi_1 + \pi_m &= 1 \\
		\label{eqn:AlphaUDef}
		\pi_1 \alpha_1 + \pi_m \alpha_m &= \alpha_u .
	\end{align}
	
	Let $q = [q_1,\ldots,q_m]$ represent a distribution constructed as follows:
	\begin{align}
		\label{eqn:QDef1}
		q_1 &= p_1^* + p_u^*\pi_1 \\
		q_m &= p_m^* + p_u^*\pi_m \\
		\label{eqn:QDef1a}
		q_u &= 0 \\
		\label{eqn:QDef2}
		q_j &= p_j^* \:\:\:\: \forall j \neq \{1,u,m\} .
	\end{align}
	Note that $q$ is constructed so that $\bar{\alpha}_q = \bar{\alpha}_{p^*}$ (see (\ref{eqn:AlphaBar})).
	
	From (\ref{eqn:ClosedFormMI}),
	the mutual information under distribution $p^*$ is
	\begin{equation}
		\label{eqn:MIUnderPStar}
		\mathcal{I}_{p^*}(X;Y) =
		\frac{
			\binent(\bar{\alpha}_{p^*}) - \sum_{i=1}^m p_i^* \binent(\alpha_i)
		}
		{
			1 + \bar{\alpha}_{p^*} / \beta
		} ,
	\end{equation}
	and under distribution $q$ it is (recalling $\bar{\alpha}_q = \bar{\alpha}_{p^*}$)
	\begin{equation}
		\label{eqn:MIUnderQ}
		\mathcal{I}_{q}(X;Y) =
		\frac{
			\binent(\bar{\alpha}_{p^*}) - \sum_{i=1}^m q_i \binent(\alpha_i)
		}
		{
			1 + \bar{\alpha}_{p^*} / \beta
		} .
	\end{equation}
	Equations (\ref{eqn:MIUnderPStar})-(\ref{eqn:MIUnderQ}) differ only in the term 
	under summation.
	We can write
	\begin{align}
		\nonumber 
		\lefteqn{\sum_{i=1}^m p_i^* \binent(\alpha_i) 
			- \sum_{i=1}^m q_i \binent(\alpha_i) } & \\
		\nonumber
		=&
		\Big( p_1^* \binent(\alpha_1) 
			+ p_u^* \binent(\alpha_u) + p_m^* \binent(\alpha_m) \Big)\\
		\label{eqn:ConcaveDifference1} & - \Big( q_1 \binent(\alpha_1) 
			+ q_u \binent(\alpha_u) + q_m \binent(\alpha_m) \Big) \\
		\label{eqn:ConcaveDifference2}
		=& p_u^* \bigg( \binent(\alpha_u) 
			- \Big( \pi_1 \binent(\alpha_1) + \pi_m \binent(\alpha_m) \Big) \bigg)\\
		\label{eqn:ConcaveDifference}
		=& p_u^* \bigg( \binent(\pi_1 \alpha_1 + \pi_m \alpha_m) 
			- \Big( \pi_1 \binent(\alpha_1) + \pi_m \binent(\alpha_m) \Big) \bigg) ,
	\end{align}
	where (\ref{eqn:ConcaveDifference1}) follows from (\ref{eqn:QDef2}),
	(\ref{eqn:ConcaveDifference2}) follows from
	(\ref{eqn:QDef1})-(\ref{eqn:QDef1a}), and 
	(\ref{eqn:ConcaveDifference}) follows from (\ref{eqn:AlphaUDef}).

	$\binent$ is strictly concave, $0 < \pi_1,\pi_m < 1$ (from (\ref{eqn:InBetween})), and
	$\alpha_1 < \alpha_m$ (by assumption), so
	(\ref{eqn:ConcaveDifference}) is always positive.
	This implies that
	\begin{equation}
		\mathcal{I}_{p^*}(X;Y) < \mathcal{I}_q(X;Y) .
	\end{equation}
	However, $p^*$ is the maximizing IID input distribution (by definition), which is a
	contradiction. The theorem follows.
\end{proof}

\subsection{Capacity and feedback capacity are achieved by an IID input distribution}


The directed information \cite{Massey1990ieee-conf-IT} between vectors $X_0^{n-1}$ and $Y_1^n$, written $I(X_0^{n-1} \rightarrow Y_1^n)$, is
given by 
\begin{equation}
	\label{eqn:DirectedInfo}
	I(X_0^{n-1} \rightarrow Y_1^n) = \sum_{i=1}^n I(X_0^{i-1};Y_i \given Y_1^{i-1}) .
\end{equation}
The per-symbol directed information rate is given by
\begin{equation}
	\lim_{n \rightarrow \infty} \frac{1}{n} I(X_0^{n-1} \rightarrow Y_1^n) .
\end{equation}

\bl{We use Kramer's double-bar notation for causal-conditional distributions \cite{Kramer2003IEEE_Trans_IT}. 
In the form we require in this paper,
\begin{equation}
	\label{eqn:CausalConditional}
	p(x_0^{n-1} || y_1^{n-1}) = \prod_{k=0}^{n-1} 
	p_{X_k|X_0^{k-1},Y_1^k}(x_k \given x_0^{k-1}, y_1^k) ,
\end{equation}
where vectors $x_0^{-1}$ and $y_1^0$ are null.}
Let $\mathcal{P}$ represent the set of causal-conditional feedback input distributions, i.e., 
$p_{X_0^{n-1}|Y_1^{n-1}}(x_0^{n-1}\given y_1^{n-1}) \in \mathcal{P}$ if and only if
$p_{X_0^{n-1}|Y_1^{n-1}}(x_0^{n-1}\given y_1^{n-1}) = p(x_0^{n-1} || y_1^{n-1})$.

In our channel, $Y_1^n$ forms both the channel output and the channel state; therefore,
the feedback received by the transmitter is the channel state.
\bl{Following \cite{che05}, in finite state channels where the channel state is the channel output,
and where the transmitter receives this output (causally) as feedback,} the feedback capacity $\cfb$ is given by
\begin{equation}
	\label{eqn:FeedbackCapacity}
	\cfb = \max_{p_{X_0^{n-1}|Y_1^{n-1}}(x_0^{n-1}\given y_1^{n-1}) \in \mathcal{P}}
	\left(
		\lim_{n \rightarrow \infty} \frac{1}{n} I(X_0^{n-1} \rightarrow Y_1^n) 
	\right).
\end{equation}

Our capacity result is stated as follows.
\begin{theorem}
	\label{thm:main}
	If 
	the parameters are strictly ordered (see Definition \ref{defn:StrictlyOrdered}), then
	\begin{equation}
		\cfb = C = \ciid .
	\end{equation}
\end{theorem}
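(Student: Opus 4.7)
The plan is to establish the sandwich $\ciid \leq C \leq \cfb$, where the first two inequalities are trivial (restricting to IID inputs cannot increase capacity, and feedback cannot decrease it), and then to prove the reverse bound $\cfb \leq \ciid$ by showing that feedback capacity is itself attained by an IID distribution. The key structural fact is that the BIND channel is a unit-output-memory (UOM) finite state channel: the channel output $Y_i$ coincides with the channel state, and the transition kernel depends only on $(X_{i-1}, Y_{i-1})$. This places BIND in the class treated in \cite{che05,yin03}, whose main result is that $\cfb$ is achieved by an input distribution of the form $p(x_{i-1} \given y_{i-1})$, under which $(X_i, Y_i)$ is jointly Markov and the directed information rate becomes a single-letter expression evaluated at the induced stationary distribution.

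Next, I would compute the directed information rate under a generic feedback input kernel $p(x \given y)$. Because the process $(X_i, Y_i)$ is Markov, the per-symbol rate reduces to $\mathcal{I}(X\to Y) = H(Y_i \given Y_{i-1}) - H(Y_i \given X_{i-1}, Y_{i-1})$, evaluated under the stationary distribution of $Y$. The central observation, and the pivot of the proof, is the asymmetry described in \S\ref{ssec:bio}: when $Y_{i-1} = \B$, the transition to $Y_i$ is governed solely by $\beta$ and is independent of $X_{i-1}$. Consequently the contribution of the bound state to both $H(Y_i \given Y_{i-1})$ and $H(Y_i \given X_{i-1}, Y_{i-1})$ is $\binent(\beta)$, and the conditional $p(x \given \B)$ drops out entirely. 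Only $q(x) := p(x \given \U)$ influences the directed information.

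Writing $\bar{\alpha}_q = \sum_j \alpha_j q_j$ and using the stationary weight $p_Y(\U) = (1+\bar{\alpha}_q/\beta)^{-1}$ obtained from the transition matrix analogous to \eqref{eqn:YTransitionProbabilityMatrix}, the directed information rate collapses to
\begin{equation}
\mathcal{I}(X\to Y) = \frac{\binent(\bar{\alpha}_q) - \sum_{j=1}^m q_j \binent(\alpha_j)}{1 + \bar{\alpha}_q/\beta},
\end{equation}
which is precisely the IID mutual information formula \eqref{eqn:ClosedFormMI} with $q$ in place of $p$. Maximizing over $q$ therefore gives $\cfb = \ciid$. Moreover, an IID input distribution $p_X = q^*$ (the maximizer) achieves the same rate without feedback, completing the sandwich.

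The main obstacle is the careful invocation of the UOM feedback-capacity theorem of \cite{che05,yin03}: one must verify that BIND satisfies its hypotheses (indecomposability and the state-equals-output property) and justify that the supremum over causal-conditional distributions $p(x_0^{n-1} \| y_1^{n-1})$ is indeed attained by a stationary Markov kernel $p(x \given y)$, so that the single-letter reduction above is legitimate. Once that machinery is in place, the input-insensitivity in the $\B$ state makes the collapse to the IID formula essentially immediate, and Theorem \ref{thm:maxmin} further specifies that the optimizing $q^*$ is supported on $\{c_\L, c_\H\}$.
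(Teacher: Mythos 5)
Your proposal follows essentially the same route as the paper: the sandwich $\cfb \geq C \geq \ciid$, the reduction via \cite{yin03} to feedback kernels of the form $p_{X_i|Y_i}(x_i \given y_i)$, the pivotal observation that the bound-state kernel $p(x \given \B)$ drops out of the directed information so that the rate collapses to the IID expression \eqref{eqn:ClosedFormMI} with $q = p(\cdot \given \U)$, and the final appeal to the Chen--Berger machinery. The one step you flag as an obstacle but do not execute --- showing the optimum over causal-conditional distributions is attained by a \emph{stationary} kernel --- is precisely where the paper spends its remaining effort (Lemma~3, Appendix~\ref{app:stationary}), verifying not just indecomposability but strong irreducibility, strong aperiodicity, and the three-part condition of \cite[Defn.~6]{che05} on the matrices $\mathbf{Q}_\U$ and $\mathbf{Q}_\B$, the last of which uses Corollary~\ref{cor:maxmin} and a directional-derivative estimate.
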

%

\bl{The roadmap to the proof is as follows.
We give several lemmas prior to proving the main result, involving subsets of $\mathcal{P}$:}
\begin{itemize}
	\item \bl{Let $\mathcal{P}^*$ 
	represent the set of feedback input distributions that 
	can be written
	\begin{equation}
		p_{X_0^{n-1}|Y_1^{n-1}}(x_0^{n-1}\given y_1^{n-1}) = \prod_{i=0}^{n-1} 
		p_{X_i|Y_i}(x_i \given y_i) ,
	\end{equation}
	where $y_0$ is null.
	(Note that distributions in $\mathcal{P}^*$ need not be stationary:
	$p_{X_i|Y_i}(x \given y)$ can depend on $i$.)
	Then $\mathcal{P}^* \subset \mathcal{P}$ for $n>2$. }
	\item \bl{Let 
$\mathcal{P}^{**}$ represent the feedback input distributions
that can be written with stationary $p_{X_i|Y_i}(x_i \given y_i)$, i.e.,
with some time-independent distribution $p_{X|Y}$ such that 
\begin{equation}
	p_{X_0^{n-1}|Y_0^n}(x_0^{n-1} \given y_1^n) = \prod_{i=0}^{n-1} p_{X|Y}(x_i \given y_i) ,
\end{equation}
where $y_0$ is null.
($\mathcal{P}^{**}$ is used in Lemma 2.)}

\end{itemize}
\bl{It should be clear from these definitions that $\mathcal{P}^{**} \subset \mathcal{P}^* \subset \mathcal{P}$. We use an existing result to show that $\cfb$ is 
satisfied by a distribution in $\mathcal{P}^*$ (Lemma 1). We then show that, {\em if we restrict ourselves to the stationary distributions $\mathcal{P}^{**}$}, then the optimal input distribution
is IID (Lemma 2). Finally, we show that the optimal input distribution is stationary, because our system satisfies certain conditions given by Chen and Berger \cite{che05} (Lemma 3). Taking these lemmas together, the capacity-achieving input distribution must be IID.
These results are laid out in the sequel.}

We begin with the following lemma,
stating there is at least one feedback-capacity--achieving
input distribution in $\mathcal{P}^*$.
\begin{lemma}
	Taking the maximum in (\ref{eqn:FeedbackCapacity}) 
	over $\mathcal{P}^* \subset \mathcal{P}$,
	\begin{align}
		\nonumber \lefteqn{\cfb =} & \\
		\label{eqn:Lemma1}
		& \max_{p_{X_0^{n-1}|Y_1^{n-1}}(x_0^{n-1} \given y_1^{n-1}) \in \mathcal{P}^{*}} 
		\left(
			\lim_{n \rightarrow \infty} \frac{1}{n} I(X_0^{n-1} \rightarrow Y_1^n)
		\right) .
	\end{align}
\end{lemma}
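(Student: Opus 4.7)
The plan is to invoke the unit-output-memory (UOM) channel framework of Chen and Berger \cite{che05}, since the BIND channel with feedback fits that setting exactly. First I would verify the UOM hypotheses: the channel state and channel output coincide (both are $Y_i$), and by (\ref{eqn:StateTransitionProbabilityMatrix}) the one-step transition $p_{Y_i\mid X_{i-1},Y_{i-1}}$ depends only on the previous state and previous input, not on earlier history. Thus, conditional on receiving $Y_1^{i}$ as feedback, the only part of the past that influences the distribution of $Y_{i+1}$ through the channel is $Y_i$ together with the next input $X_i$.

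Second, I would use the standard directed-information decomposition. Each term in (\ref{eqn:DirectedInfo}) can be written as
\begin{equation}
I(X_0^{i-1};Y_i\mid Y_1^{i-1}) = H(Y_i\mid Y_1^{i-1}) - H(Y_i\mid X_0^{i-1},Y_1^{i-1}),
\end{equation}
and by the Markov structure of the BIND channel the second term reduces to $H(Y_i\mid X_{i-1},Y_{i-1})$. The first term is an entropy of $Y_i$ whose distribution given $Y_1^{i-1}$ depends on the past only through the conditional law of $X_{i-1}$ given $Y_1^{i-1}$ (and $Y_{i-1}$). Consequently, replacing an arbitrary causal-conditional input law in $\mathcal{P}$ by the input law in $\mathcal{P}^*$ obtained by taking the marginal $p_{X_i\mid Y_i}(x_i\mid y_i)$ induced from the original law leaves the joint distribution of each pair $(X_{i-1},Y_{i-1},Y_i)$ unchanged, and hence leaves every summand of (\ref{eqn:DirectedInfo}) unchanged. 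This is precisely the content of the UOM reduction in \cite{che05}.

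Finally I would conclude that the $\sup$ over $\mathcal{P}$ in (\ref{eqn:FeedbackCapacity}) equals the $\sup$ over $\mathcal{P}^*$, yielding (\ref{eqn:Lemma1}), with the understanding that $\mathcal{P}^*$ need not impose stationarity of $p_{X_i\mid Y_i}$ across $i$ (stationarity is addressed separately in Lemma 2). The main technical obstacle I anticipate is a careful bookkeeping argument showing that the induced marginals $p_{X_i\mid Y_i}$ together with the channel law reproduce the same joint distribution $p(X_0^{n-1},Y_1^n)$ on the statistics that enter each term of the directed information sum; once that is verified, the lemma follows immediately from monotonicity of the supremum under restriction to $\mathcal{P}^*\subset\mathcal{P}$ (giving $\le$) and from the fact that the restricted law achieves the same directed information as the original (giving $\ge$).
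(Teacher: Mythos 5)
Your route is genuinely different from the paper's: the paper disposes of this lemma in a single line, citing \cite[Thm.~1]{yin03}, whereas you attempt to reprove the underlying unit-output-memory reduction directly. Your bookkeeping step is in fact verifiable: an induction on $i$ shows that replacing the original causal-conditional law by the product of its induced marginals $p_{X_i\mid Y_i}$ preserves the joint law of every triple $(X_{i-1},Y_{i-1},Y_i)$ --- preservation of the marginal of $Y_{i-1}$ forces preservation of the pair $(X_{i-1},Y_{i-1})$ by the very definition of the induced marginal, and composing with the (unchanged) channel kernel $p_{Y_i\mid X_{i-1},Y_{i-1}}$ preserves the triple and hence the marginal of $Y_i$. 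The gap is the conclusion you draw from this: the summand $I(X_0^{i-1};Y_i\given Y_1^{i-1})$ in (\ref{eqn:DirectedInfo}) is \emph{not} a functional of that triple. Writing it as $H(Y_i\given Y_1^{i-1})-H(Y_i\given X_{i-1},Y_{i-1})$, the subtracted term is indeed triple-determined, but the first term conditions on the entire output past; under a general feedback law in $\mathcal{P}$ the input $X_{i-1}$ may depend on the whole history $(X_0^{i-2},Y_1^{i-1})$, the output process is then not Markov, and $H(Y_i\given Y_1^{i-1})$ can be strictly smaller than $H(Y_i\given Y_{i-1})$. So the claim that the replacement ``leaves every summand of (\ref{eqn:DirectedInfo}) unchanged'' is false as stated, and the two-sided argument you build on it ($\le$ from restriction, $\ge$ from exact equality) does not go through as written.

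Fortunately the failure is one-sided in exactly the direction you need. Under the replacement law the output process \emph{is} Markov, so its first term equals $H(Y_i\given Y_{i-1})$ evaluated on the preserved pair law, and since conditioning reduces entropy, $H^{\mathrm{orig}}(Y_i\given Y_1^{i-1})\le H^{\mathrm{orig}}(Y_i\given Y_{i-1})=H^{\mathrm{new}}(Y_i\given Y_1^{i-1})$, while the subtracted term is identical under both laws. Hence each summand weakly \emph{increases} under the marginalization, giving $\sup_{\mathcal{P}^*}\ge\sup_{\mathcal{P}}$, and $\mathcal{P}^*\subset\mathcal{P}$ gives the reverse inequality; the lemma follows. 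Repaired in this way, your argument amounts to a self-contained proof of the cited Ying--Berger theorem specialized to the BIND channel, which is more informative than the paper's bare citation --- but the sentence asserting invariance of the summands must be replaced by this one-sided comparison, since invariance is what the whole proposal currently rests on and it is simply not true off the Markov subclass.
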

\begin{proof}
	The lemma follows from \cite[Thm. 1]{yin03}.
\end{proof}
%

If the feedback-capacity--achieving input distribution is
in $\mathcal{P}^*$, then $Y_1^n$ is a Markov chain (the reader may check; 
see also \cite{yin03,che05}).
That is,
\begin{equation}
	p_{Y_i|Y_1^{i-1}}(y_i \given y_1^{i-1}) = p_{Y_i|Y_{i-1}}(y_i \given y_{i-1}) .
\end{equation}
Using the following shorthand notation:
\begin{eqnarray}
	p_{j|\B}^{(i)} & := & p_{X_i |Y_i}(c_j \given \B) \\
	p_{j|\U}^{(i)} & := & p_{X_i | Y_i}(c_j \given \U) ,
\end{eqnarray}
where the superscripts represent the time index,
the transition probability $p_{Y_i|Y_{i-1}}(y_i \given y_{i-1})$
may be represented as a matrix $\mathbf{P}_Y^{(i)}$, where
\begin{equation}
	\label{eqn:PYMatrix}
	\mathbf{P}_Y^{(i)} 
	=
	\left[
		\begin{array}{cc}
			1-  \sum_{j=1}^m \alpha_j p_{j|\U}^{(i)}
				& \sum_{j=1}^m \alpha_j p_{j|\U}^{(i)} \\
			\beta & 1-\beta
		\end{array}
	\right] 
\end{equation}
%
(cf. (\ref{eqn:YTransitionProbabilityMatrix}), where
the input distribution is IID).  


%


Then:
\begin{lemma}
	\label{lem:Independent}
	Suppose the parameters are strictly ordered (Definition 1). 
	Taking the maximum in (\ref{eqn:FeedbackCapacity}) 
	over $\mathcal{P}^{**} \subset \mathcal{P}^* \subset \mathcal{P}$,
		\begin{align}
			\nonumber\lefteqn{\ciid =} & \\
			\label{eqn:Lemma2}
			& \max_{p_{X_0^{n-1}|Y_1^{n-1}}(x_0^{n-1} \given y_1^{n-1}) \in \mathcal{P}^{**}} 
			\left(
				\lim_{n \rightarrow \infty} \frac{1}{n} I(X_0^{n-1} \rightarrow Y_1^n)
			\right) .
		\end{align}
\end{lemma}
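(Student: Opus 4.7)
The plan is to show that for every stationary feedback distribution $p_{X|Y} \in \mathcal{P}^{**}$ the per-symbol directed information rate coincides with the IID formula (\ref{eqn:ClosedFormMI}) with $p$ replaced by the conditional $p_{X|Y}(\cdot \given \U)$, and then to observe that every IID input distribution can in turn be realised as an element of $\mathcal{P}^{**}$ (by taking $p_{X|\U} = p_{X|\B}$). The two maxima then coincide with $\ciid$. The key structural observation powering the whole argument is the channel asymmetry already noted in \S \ref{ssec:bio}: when $Y_{i-1}=\B$ the next state $Y_i$ is independent of $X_{i-1}$, so neither the transition law of the chain $Y$, nor the conditional entropy $H(Y_i\given X_{i-1},Y_{i-1})$, depends on the conditional $p_{X|Y}(\cdot\given\B)$.

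First I would verify that for any $p_{X|Y}\in\mathcal{P}^{**}$ the output process $\{Y_i\}$ is a time-homogeneous Markov chain with transition matrix (\ref{eqn:PYMatrix}) (time index suppressed), whose $\U$-row depends only on $\bar{\alpha}_{p_{X|\U}}=\sum_j \alpha_j p_{j|\U}$ and whose $\B$-row equals $[\beta,\,1-\beta]$. Under strict ordering of parameters this chain is irreducible and has unique stationary distribution $p_Y(\U) = 1/(1+\bar{\alpha}_{p_{X|\U}}/\beta)$. Next, using $I(X_0^{n-1}\to Y_1^n) = \sum_{i=1}^n[H(Y_i\given Y_1^{i-1}) - H(Y_i\given X_0^{i-1},Y_1^{i-1})]$, the Markov structure of $Y$ and the fact that $Y_i$ given $(X_0^{i-1},Y_1^{i-1})$ depends only on $(X_{i-1},Y_{i-1})$ reduce each summand to $H(Y_i\given Y_{i-1}) - H(Y_i\given X_{i-1},Y_{i-1})$. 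Stationarity and ergodicity then allow me to pass to the limit, yielding a single-letter rate
\begin{equation*}
\lim_{n\to\infty}\frac{1}{n}I(X_0^{n-1}\to Y_1^n) = H(Y_i\given Y_{i-1}) - H(Y_i\given X_{i-1},Y_{i-1}).
\end{equation*}

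Now I would compute both conditional entropies. The first equals $p_Y(\U)\binent(\bar{\alpha}_{p_{X|\U}}) + p_Y(\B)\binent(\beta)$. For the second, conditioning on $Y_{i-1}=\U$ gives $\sum_j p_{j|\U}\binent(\alpha_j)$, while conditioning on $Y_{i-1}=\B$ gives $\binent(\beta)$ regardless of $X_{i-1}$ (this is exactly where the asymmetry enters). Averaging, $H(Y_i\given X_{i-1},Y_{i-1}) = p_Y(\U)\sum_j p_{j|\U}\binent(\alpha_j) + p_Y(\B)\binent(\beta)$. The $p_Y(\B)\binent(\beta)$ contributions cancel, producing
\begin{equation*}
\lim_{n\to\infty}\frac{1}{n}I(X_0^{n-1}\to Y_1^n) = \frac{\binent(\bar{\alpha}_{p_{X|\U}}) - \sum_j p_{j|\U}\binent(\alpha_j)}{1+\bar{\alpha}_{p_{X|\U}}/\beta},
\end{equation*}
which is precisely the right-hand side of (\ref{eqn:ClosedFormMI}) with $p\leftrightarrow p_{X|\U}$.

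To close, I would note that the right-hand side of (\ref{eqn:Lemma2}) depends only on $p_{X|\U}$ and ranges over all PMFs on $\mathcal{X}$ as $p_{X|Y}$ ranges over $\mathcal{P}^{**}$; conversely, any IID input distribution $p_X$ belongs to $\mathcal{P}^{**}$ via $p_{X|\U}=p_{X|\B}=p_X$. Hence the maximum over $\mathcal{P}^{**}$ equals the maximum of (\ref{eqn:ClosedFormMI}) over IID PMFs, namely $\ciid$, establishing (\ref{eqn:Lemma2}). The main obstacle I anticipate is the bookkeeping around the directed information decomposition and the passage to the single-letter limit; the conceptual work is essentially done once the asymmetry observation has isolated $p_{X|\U}$ as the only relevant parameter.
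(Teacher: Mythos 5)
Your proposal is correct and takes essentially the same route as the paper's proof: both reduce each directed-information term to $H(Y_i \given Y_{i-1}) - H(Y_i \given X_{i-1},Y_{i-1})$ and exploit the channel asymmetry that $Y_i$ is independent of $X_{i-1}$ when $Y_{i-1}=\B$, so that only $p_{X|\U}$ matters and the maximum over $\mathcal{P}^{**}$ is matched by the IID distribution with $p_X = p_{X|\U}$. The only cosmetic difference is that you evaluate the stationary rate in closed form and identify it with (\ref{eqn:ClosedFormMI}), whereas the paper establishes invariance of the directed information under changes to $p_{j|\B}^{(k)}$ and then substitutes $p_{j|\B}^{(k)} = p_{j|\U}^{(k)}$ to exhibit an IID maximizer.
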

%
%
\begin{proof}
	We start by showing that
	$I(X_0^{i-1}; Y_i \given Y_1^{i-1})$ is independent of $p_{L|B}^{(k)}$ for all $k$.
	There is a feedback-capacity--achieving input distribution in $\mathcal{P}^*$ (from Lemma 1).
	Using this input distribution,
	\begin{eqnarray}
		\lefteqn{I(X_0^{i-1}; Y_i \given Y_1^{i-1})} & & \nonumber \\ 
		& = & H(Y_i \given Y_1^{i-1}) - H(Y_i \given Y_{i-1}, X_0^{i-1}) \\
		\label{eqn:DirectedTermSimplified}
		& = & H(Y_i \given Y_{i-1}) - H(Y_i \given Y_{i-1}, X_{i-1}) .
	\end{eqnarray}
	where (\ref{eqn:DirectedTermSimplified}) follows since (by definition)
	$Y_i$ is conditionally independent of $X_0^{i-2}$ given $(Y_{i-1},X_{i-1})$,
	and since (from the parameters being strictly ordered) 
	$Y_1^i$ is a time-homogeneous, first-order Markov chain.
	Expanding (\ref{eqn:DirectedTermSimplified}),
	\begin{eqnarray}
		\nonumber
		\lefteqn{I(X_0^{i-1}; Y_i \given Y_1^{i-1}) = } & & \\ 
		\label{eqn:DirectedInfoSums}
		& & \sum_{y_{i-1}} p_{Y_{i-1}}(y_{i-1}) \sum_{x_{i-1}} p_{X_{i-1}|Y_{i-1}}(x_{i-1} \given y_{i-1}) \\
		\nonumber
		& & \cdot \sum_{y_i} p_{Y_i | Y_{i-1}, X_{i-1}}(y_i | y_{i-1}, x_{i-1}) \\
		\nonumber
		& & \cdot \log \frac{p_{Y_i | Y_{i-1}, X_{i-1}}(y_i | y_{i-1}, x_{i-1})}
			{p_{Y_i | Y_{i-1}}(y_i | y_{i-1})} .
	\end{eqnarray}
	From (\ref{eqn:PYMatrix}), 	
	$p_{Y_{i-1}}(y_{i-1})$ is calculated from parameters in 
	$\mathbf{P}_Y^{(i)}$ and the initial state,
	so $p_{Y_{i-1}}(y_{i-1})$ is independent of $p_{j|\B}^{(k)}$ for all $j$ and $k$.
	Further, everything under the last sum (over $y_i$) is independent of $p_{j|\B}^{(k)}$, from 
	(\ref{eqn:PYMatrix}) and the definition of $p_{Y_i | Y_{i-1}, X_i}(y_i \given y_{i-1}, x_i)$.
	There remains the term $p_{X_i|Y_{i-1}}(x_i \given y_{i-1})$, which is dependent on
	$p_{j|\B}^{(i-1)}$ when $y_{i-1} = \B$.
	However, if $y_{i-1} = \B$, then 
	\begin{eqnarray}
		\nonumber \lefteqn{\sum_{y_i} p_{Y_i | Y_{i-1}, X_{i-1}}(y_i \given \B, x_{i-1})
			} & & \\
		\nonumber
		 \lefteqn{\cdot \log \frac{p_{Y_i | Y_{i-1}, X_{i-1}}(y_i \given \B, x_{i-1})}
			{p_{Y_i | Y_{i-1}}(y_i \given \B)}} & & \\
		\label{eqn:InnerSumZero}
		& = & \sum_{y_i} p_{Y_i | Y_{i-1}}(y_i \given \B)
			\log \frac{p_{Y_i | Y_{i-1}}(y_i \given \B)}
			{p_{Y_i | Y_{i-1}}(y_i \given \B)} \\
			& = &  \sum_{y_i} p_{Y_i | Y_{i-1}}(y_i \given \B) \log 1 \\
		& = & 0 ,
	\end{eqnarray}
	where (\ref{eqn:InnerSumZero}) follows since $y_i$ is independent of $x_i$ in state $\B$.
	Thus, the entire expression is independent of $p_{L|\B}^{(k)}$ for all $k$.
	Moreover, from (\ref{eqn:DirectedInfo}), directed information
	is independent of $p_{j|\B}^{(k)}$ for all $k$.
	
	To prove (\ref{eqn:Lemma2}), distributions in $\mathcal{P}^{**}$ have
	$p_{j|\U}^{(1)} = p_{j|\U}^{(2)} = \ldots$, and 
	$p_{j|\B}^{(1)} = p_{j|\B}^{(2)} = \ldots$. 
	Since $I(X_0^{i-1}; Y_i \given Y_1^{i-1})$ is independent of $p_{j|\B}^{(k)}$ for all $k$ (by the 
	preceding argument),
	we may set $p_{j|\B}^{(k)} = p_{j|\U}^{(k)}$ for all $j$ and $k$, 
	without changing $I(X_0^{i-1}; Y_i \given Y_1^{i-1})$. Thus, inside $\mathcal{P}^{**}$,
	there exists a maximizing input 
	distribution that is independent for each channel use. By the definition
	of $\mathcal{P}^{**}$, that maximizing input distribution is IID, and there cannot exist
	an IID input distribution outside of $\mathcal{P}^{**}$. 
\end{proof}
%

Finally, we must show that $\cfb$ is itself achieved by
a distribution in $\mathcal{P}^{**}$.
To do so, we rely on \cite[Thm. 4]{che05}, which shows that this is the case, 
as long as several technical conditions are satisfied.
Stating the conditions and proving that they hold for this channel requires restatement of 
definitions from \cite{che05}, so we give this result in 
Appendix \ref{app:stationary} as Lemma 3. 

We can now return to the proof of Theorem \ref{thm:main}, where we relate these 
results to the Shannon capacity $C$.

\begin{proof}
	From Lemma 1, $\cfb$ is satisfied by an input distribution in $\mathcal{P}^*$.
	From Lemma 2, if we restrict ourselves to the stationary 
	input distributions $\mathcal{P}^{**}$ (where $\mathcal{P}^{**} \subset \mathcal{P}^*$), 
	then the feedback capacity is $\ciid$. From Lemma 3, the conditions of \cite[Thm. 4]{che05} are
	satisfied, which implies that there is a 
	feedback-capacity--achieving input distribution in $\mathcal{P}^{**}$.
	Therefore,
	\begin{equation}
		\label{eqn:FBequalsIID}
		\cfb = \ciid .
	\end{equation}
	For general channels,
	\begin{equation}
		\label{eqn:CapacityInequality}
		\cfb \geq C \geq \ciid ,
	\end{equation}
	because the set $\mathcal{P}$ includes the set of input distributions without feedback,
	and because the set of input distributions without feedback includes the IID input distributions.
	The theorem follows from (\ref{eqn:FBequalsIID}) 
	and (\ref{eqn:CapacityInequality}).
\end{proof}

From Theorems \ref{thm:maxmin} and \ref{thm:main}, and Equation (\ref{eqn:CIID}),
we have the following.
\begin{cor}
	\label{cor:capacity}
	Capacity $C$ of the discrete channel model given in (\ref{eqn:TimeInhomogeneousPMF})
	is given by
	\begin{equation}
			\label{eqn:CIIDLH}
			C = \max_{p_\H} \frac{\binent(p_\L \alpha_\L + p_\H \alpha_\H) 
			- p_\L \binent(\alpha_\L) - p_\H \binent(\alpha_\H)}
			{1 + (p_\L \alpha_\L + p_\H \alpha_\H)/\beta} ,
	\end{equation}
	where it is sufficient to maximize over $p_\H$, since $p_\L = 1-p_\H$.
\end{cor}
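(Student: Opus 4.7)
The plan is to combine the three ingredients already established in the section, which together pin down both the form of the capacity-achieving input distribution and the closed form of the capacity. First I would invoke Theorem \ref{thm:main}, which gives $C=\ciid$, so that the Shannon capacity coincides with the capacity under an IID input constraint. This reduces the problem to evaluating the variational formula for $\ciid$ in equation (\ref{eqn:CIID}).

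Next, I would apply Theorem \ref{thm:maxmin}, which states that any IID distribution $p^*=[p_1^*,\dots,p_m^*]$ achieving $\ciid$ must satisfy $p_j^*=0$ for all $j\in\{2,\dots,m-1\}$ under the strictly ordered parameter assumption. Plugging the two-point support constraint into the closed-form expression (\ref{eqn:ClosedFormMI}), the average binding rate collapses from $\bar{\alpha}_p=\sum_{j=1}^m\alpha_j p_j$ to $\bar{\alpha}_p=p_\L\alpha_\L+p_\H\alpha_\H$, and the weighted sum of binary entropies reduces to $p_\L\binent(\alpha_\L)+p_\H\binent(\alpha_\H)$. Substituting these into (\ref{eqn:CIID}) yields the right-hand side of (\ref{eqn:CIIDLH}).

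Finally, since $p_\L+p_\H=1$ on the reduced two-point support, the optimization over $(p_\L,p_\H)$ collapses to a single-parameter maximization over $p_\H\in[0,1]$ with $p_\L=1-p_\H$, which is exactly the form stated in the corollary.

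There is no hard step here: the corollary is essentially a bookkeeping consequence of Theorems \ref{thm:maxmin} and \ref{thm:main} together with (\ref{eqn:CIID}). The only minor point worth verifying is that the strict ordering hypothesis required by both theorems is the standing assumption under which the corollary is stated; in particular, one should note that the $\alpha_\L,\alpha_\H,\beta$ appearing in (\ref{eqn:CIIDLH}) are strictly interior to $(0,1)$, so the binary entropy terms are well-defined and the denominator is bounded away from zero, making the maximization over $p_\H\in[0,1]$ a maximization of a continuous function on a compact set and hence attained.
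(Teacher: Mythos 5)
Your proposal matches the paper's argument exactly: the paper derives the corollary precisely by combining Theorem \ref{thm:main} ($C=\ciid$), Theorem \ref{thm:maxmin} (support concentrated on $\{c_\L,c_\H\}$), and the closed-form expression \eqref{eqn:CIID}, reducing the maximization to the single parameter $p_\H$ with $p_\L=1-p_\H$. Your added remark on the standing strict-ordering hypothesis and attainment of the maximum is a harmless (and correct) elaboration beyond what the paper states.
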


This result has an intuitively appealing form: the mutual information rate appearing in \eqref{eqn:CIID} and \eqref{eqn:CIIDLH} is the product of the binary channel MI rate with transition probabilities $\{\alpha_\L,\alpha_\H\}$, and the fraction of time the channel is in the sensitive (unbound) state. 

In Figure \ref{fig:mutualInfoPlot}, we illustrate the behaviour of the maximizing value of $p_\H$: in this
figure, the mutual information in (\ref{eqn:ClosedFormMI}) is plotted for 
$\alpha_\L = 0.1$, $\beta = 0.9$, and various values of $\alpha_\H$; in the input distribution,
all $p_j$ are equal to zero except $p_\L$ and $p_\H$. The maximum values
on each mutual information curve are illustrated.
In Figure \ref{fig:capacityPlot}, we give a contour plot of capacity for values of $\alpha_\H$ and $\alpha_\L$,
where $\beta = 0.9$.

\begin{figure}[t!]
	\begin{center}
	\includegraphics[width=3.25in]{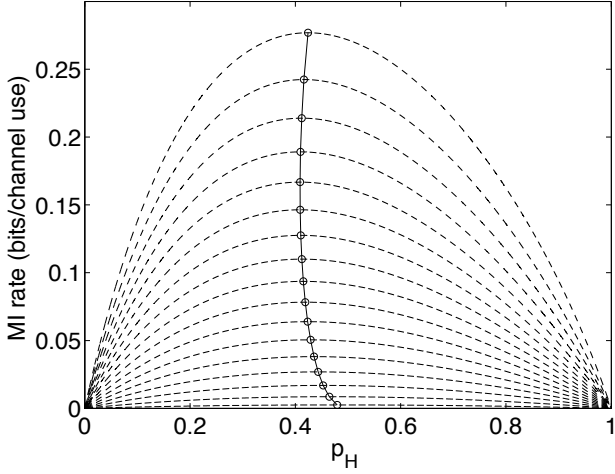}
	\end{center}
	\caption{\label{fig:mutualInfoPlot} Information maximizing values of 
	$p_\H$, with $\alpha_\L = 0.1$ and $\beta = 0.9$. Each 
	dashed curve corresponds to a particular
	value of $\alpha_\H$: from the bottom, $\alpha_\H = 0.15$; each higher 
	curve increases $\alpha_\H$ by
	0.05, up to $\alpha_\H = 0.95$ in the topmost curve.
	The maxima are circled and connected with a solid line.}
\end{figure}

\begin{figure}[t!]
	\begin{center}
	\includegraphics[width=3.25in]{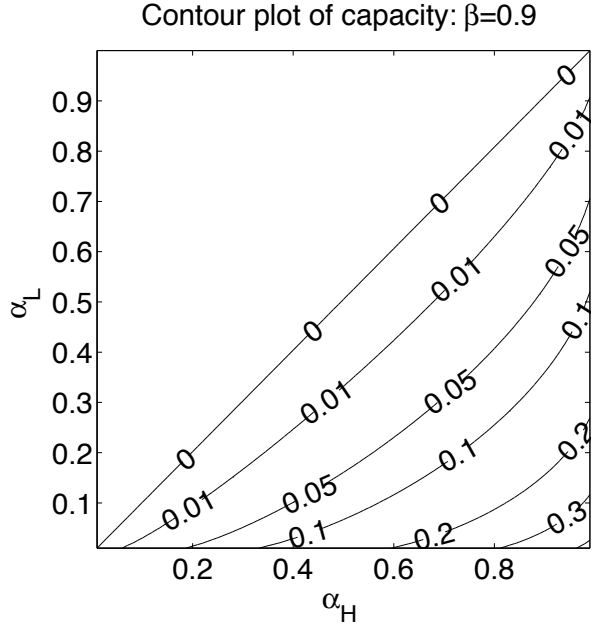}
	\end{center}
	\caption{\label{fig:capacityPlot} Contour plot of capacity with respect to $\alpha_\L$ and $\alpha_\H$,
	fixing $\beta = 0.9$. Note that $\alpha_\L > \alpha_\H$ in the upper left triangle, so
	capacity here is undefined.}
\end{figure}

\dr{With equation (\ref{eqn:CIIDLH}) we have rigorously solved the capacity for the discrete time BIND channel.  The mutual information, and hence the capacity, depend on the parameters $\alpha_\L, \alpha_\H,$ and $\beta$.  In Appendix \ref{sec:CapMaxParameters} we show that the capacity is an increasing function of $\alpha_\H$ and $\beta$, and a decreasing function of $\alpha_\L$, and that the capacity is finite for all $0\le \alpha_\L\le \alpha_\H\le 1$ and $0\le\beta\le 1$.}


To close this section,
we may wonder if it is true that $\cfb = C$
in general signal transduction models. 
The answer is no: \bl{Permuter \textit{et al.}~give an example of an $n$-state POST channel for which $\cfb > C$ \cite{AsnaniPermuterWeissman2013IEEE_ISIT,PermuterAsnaniWeissman2014IEEETransIT}.}
\bl{We may also ask under what conditions $\cfb = C$:
the existence of at most one sensitive transition (such as the $U \rightarrow B$ transition
in our example) is a sufficient condition for $\cfb = C$, but the necessary conditions are 
presently unknown.
}

\section{Markov inputs}
\label{sec:MarkovInputs}

Although we found in the previous section that \dr{the discrete time BIND channel has} a capacity-achieving input distribution 
that is IID,
physical concentration does not behave like an IID random variable: concentrations, either 
low $\L$ or high $\H$,
can persist for long periods of time. 
\dr{In order to increase the applicability of our analysis to biological or 
bioengineered systems, }
\bl{we can model these persistent input concentrations using a two-state $\{\L,\H\}$ Markov chain,
which we analyze in this section. Though this generalizes the IID input process to an input process with memory, we use Markov chain \dr{inputs} for simplicity, as they allow us to gain insight into the 
behaviour of the system in the presence of correlated input processes. Finite-state Markov chains may not capture the complete dynamics of the diffusion process in full generality, and we leave more general analysis to future work.}

In this section, we analyze the capacity of the discrete time channel when the
channel inputs are Markov, though we restrict ourselves to binary Markov inputs ($\L$ and $\H$) for simplicity.

\subsection{Mathematical model with Markov inputs}

Assume the sequence $X^n$  forms a Markov chain with
two parameters, $r$ (the $\L$-to-$\H$ transition probability) and 
$s$ (the $\H$-to-$\L$ transition probability),
giving a transition probability matrix of
\begin{eqnarray}
	\mathbf{P}_X =
	\left[
		\begin{array}{cc}
			1-r & r \\ s & 1-s 
		\end{array}
	\right] ,
\end{eqnarray}
with entries for $\L$ on the first row and column, and $\H$ on the second row and column.

%
%

The joint sequence $Z^n$ forms a four-state Markov chain
with states $\{\L\U, \L\B, \H\U, \H\B\}$, with transition probability matrix
given in equation (\ref{eq:4-state-chain}).
(See Figure \ref{fig:Capacity2StateMarkov-diagram}.)
%
\begin{figure*}[t!]
\normalsize
%
\begin{eqnarray}\label{eq:4-state-chain}
	\mathbf{P}_Z =\left[
		\begin{array}{cccc}
			(1-\alpha_\L)(1-r)& 
				\alpha_\L(1-r)& 
				(1-\alpha_\L) r&
				\alpha_\L r\\
			\beta (1-r)& 
				(1-\beta)(1-r)& 
				\beta r& 
				(1-\beta) r\\
			(1-\alpha_\H) s& 
				\alpha_\H s& 
				(1-\alpha_\H)(1-s)& 
				\alpha_\H (1-s)\\
			\beta s& 
				(1-\beta) s& 
				\beta (1-s)& 
				(1-\beta)(1-s)
			\end{array}
	\right] &
\end{eqnarray}
%
\hrulefill
\vspace*{4pt}
\end{figure*}
\begin{figure}[t!] 
   \centering
   \includegraphics[width=3.25in]{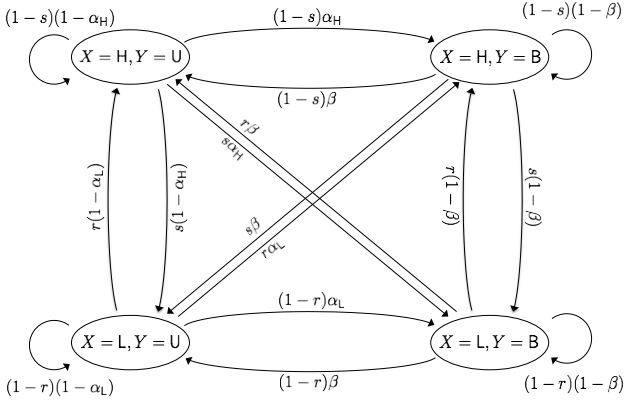} 
   \caption{Transition diagram for a 2-state channel ($Y=\U$, unbound receptor; $Y=\B$, bound receptor) driven by a 2-state input Markov process ($X=\L$, low concentration of signaling molecule; $X=\H$, high concentration of signaling molecule). Probability per time step of $X_k=\L$ to $X_{k+1}=\H$ transition is $0< r \le 1$.  Probability per time step of $X_k=\H$ to $X_{k+1}=\L$ transition is $0< s \le 1$.  Probability per time step of $Y_k=\U$ to $Y_{k+1}=\B$ transition is $\alpha_\L$ when $X_{k}=\L$, and is $\alpha_\H$ when $X_{k}=\H$; these probabilities satisfy the strictly-ordered condition (Def.~\ref{defn:StrictlyOrdered}). Probability per time step of $Y_k=\B$ to $Y_{k+1}=\U$ is $0< \beta< 1$, regardless of $X_{k}$. Compare Equation \eqref{eq:4-state-chain}. }
\label{fig:Capacity2StateMarkov-diagram}
\end{figure}
The input $X^n$ has a unique stationary distribution if $0 < r,s < 1$. The chain $Z^n$ has
a stationary distribution if $X^n$ has a stationary distribution, and the parameters
are strictly ordered. (These conditions are sufficient, but not necessary.) The
steady-state distribution on $X$ is given by
\begin{eqnarray}
p_\L=p_X(\L)=\frac{s}{r+s},\hspace{1cm}
p_\H=p_X(\H)=\frac{r}{r+s}.
\end{eqnarray}
%
%

\newpage
The stationary distribution of $Z$ is given by the (normalized) eigenvector of $\mathbf{P}_Z$ 
with unit eigenvalue.
This is given by $[p_{X,Y}(\L,\U),p_{X,Y}(\L,\B),p_{X,Y}(\H,\U),p_{X,Y}(\H,\B)]$,
where
\begin{eqnarray}\label{eq:P00}
\lefteqn{p_{X,Y}(\L,\U) =} & & \\ 
\nonumber& &\frac{1}{K}\left(\beta s (-r-s+\alpha_\H (r+s-1)+\beta (r+s-1))\right)\\
\lefteqn{p_{X,Y}(\L,\B) =} & & \\ 
\nonumber & &\frac{1}{K}\left(s (\alpha_\H (\alpha_\L (r+s-1)-r)+\alpha_\L (\beta (r+s-1)-s))\right) \\
\lefteqn{p_{X,Y}(\H,\U) =} & & \\ 
\nonumber & & \frac{1}{K}\left(\beta r (-r-s+\alpha_\L (r+s-1)+\beta (r+s-1))\right)\\
\label{eq:Z} \lefteqn{p_{X,Y}(\H,\B) =} & & \\ 
\nonumber & &   \frac{1}{K}\left(r (\alpha_\H (-r+\alpha_\L (r+s-1)+\beta (r+s-1))-\alpha_\L s)\right) ,
\end{eqnarray}
where $K$ is the normalization constant, to ensure the probabilities sum to 1.
The expressions (\ref{eq:P00}-\ref{eq:Z}) may be simplified by introducing the notation
\begin{eqnarray}
\bar{\alpha}&=&\frac{r\alpha_\H + s\alpha_\L}{r+s}\\
\lambda&=&1-r-s\\
\mu&=&\frac{\lambda}{1-\lambda}.
\end{eqnarray}
The quantity $\bar{\alpha}$ is the mean value of $\alpha$ under the equilibrium distribution for $X$ (cf. (\ref{eqn:AlphaBar}) for IID inputs); $ \lambda $ is the second eigenvalue of the matrix $\mathbf{P}_{X}$; and $0\le \mu < \infty$ is a monotonically increasing function of $\lambda$.
With this notation, the stationary distribution of the joint process satisfies
\begin{eqnarray}\label{eq:P00-subs}
p_{X,Y}(\L,\U)&=&\frac{1}{K}\beta s (1+(\alpha_\H +\beta) \mu)\\
p_{X,Y}(\L,\B) &=&\frac{1}{K} s (\bar{\alpha}+\alpha_\L(\alpha_\H + \beta)\mu) \\
p_{X,Y}(\H,\U) &=& \frac{1}{K}\beta r(1 + (\alpha_\L +\beta)\mu)\\
p_{X,Y}(\H,\B)&=&  \frac{1}{K}r(\bar{\alpha}+\alpha_\H(\alpha_\L + \beta)\mu) ,
\label{eq:P11-subs}
\end{eqnarray}
with normalization constant
\begin{equation}
K = \lambda(\alpha_\L+\beta)(\alpha_\H+\beta)+(r+s)(\bar{\alpha}+\beta)\label{eq:Z-subs}.
\end{equation}

From $p_{X,Y}(X,Y)$ one may obtain the stationary marginal distribution
$p_Y(y)$.
Define
\begin{equation}
	\Delta = \frac{s-r}{s+r}
\end{equation}
to represent the relative difference in  probabilities between the low-to-high and high-to-low transitions.
Then
\begin{eqnarray}
p_Y(\U)&=& \label{eq:P_Y(0)}
\left(1+\mu(\bar{\alpha}+\beta + \Delta (\alpha_\H-\alpha_\L) )   \right)/K'
\\
p_Y(\B)&=&  \label{eq:P_Y(1)}
(\bar{\alpha}+\mu(\bar{\alpha}\beta+\alpha_\H\alpha_\L))/K' ,
\end{eqnarray}
with normalization constant
\begin{equation}
	K'=1+\bar{\alpha}+\mu(\bar{\alpha}+\bar{\alpha}\beta+\beta+\alpha_\H\alpha_\L + \Delta(\alpha_\H-\alpha_\L)) .
\end{equation}   
%

\subsection{Capacity estimates for Markov inputs}
\label{ssec:MarkovCapacity}

To estimate capacity for Markov inputs, we need the entropy rates for $X$, $Y$, and $Z$.
Since $X$ and $Z$ are stationary Markov processes, their entropy rates are
available in closed form.
The entropy rate of $X$ is given as a function of $r$ and $s$ by 
\begin{align}
	\mathcal{H}(X) &= \lim_{n \rightarrow \infty} \frac{1}{n} H(X^n)\\
	&= H(X_i \given X_{i-1})\\
	\label{eqn:InputEntropyRate}
	&=
	p_\L\left(r\log\frac{1}{r} + (1-r)\log\frac{1}{1-r}\right)\\ 
	& \:\: \nonumber+
	p_\H\left(s\log\frac{1}{s} + (1-s)\log\frac{1}{1-s}\right).
\end{align}
The joint entropy rate $\mathcal{H}(Z) = \mathcal{H}(X,Y)$ can be calculated directly from (\ref{eq:P00-subs})-(\ref{eq:Z-subs}).
Let
\begin{equation}
	\label{eqn:pi}
	\pi_{xy}=p_{X,Y}(x,y)
\end{equation}
denote the stationary density of the Markov process, i.e.~the four terms
given in \eqref{eq:P00-subs}-\eqref{eq:P11-subs}. 
Denote the joint transition probabilities from the matrix $\mathbf{P}_Z$ as
\begin{align}
	T_{xy \rightarrow x'y'} &= p_{X_i,Y_i|X_{i-1},Y_{i-1}}(x^\prime,y^\prime \given x,y) \\
	\label{eqn:M}
	&= p_{Y_{i}|X_{i-1},Y_{i-1}}(y^\prime \given x, y) 
	p_{X_{i}|X_{i-1}}(x^\prime \given x) .
\end{align}
Further let 
\begin{equation}
	\label{eqn:PhiEquation}
	\phi(p) = - p \log p .
\end{equation}
Then the joint entropy rate is
\begin{equation}
\label{eqn:JointEntropyRate}
\mathcal{H}(X,Y)=\sum_{x=\L}^{\H}\sum_{y=\U}^{\B}\pi_{xy}\left(\sum_{x'=\L}^{\H}\sum_{y'=\U}^{\B}\phi\left(  T_{xy\rightarrow x'y'} \right)\right).
\end{equation}

However, the output process $Y$ is not a Markov process in general, and
its entropy rate is \dr{not available in closed form}.
To bound the entropy rate of  $Y$, we use the fact that this rate $\mathcal{H}(Y)$ is bounded above and below by entropies conditioned on a finite number of previous channel states.  From standard inequalities (\cite{CoverThomas1990}, Theorem 4.4.1) we have, for each $n$,
\begin{align}
\nonumber\lefteqn{H(Y_n|X_0,Y_0,\cdots,Y_{n-1})} & \\
\label{eq:CoverThomasInequalities}
&\le \mathcal{H}(Y)\le H(Y_n|Y_0,\cdots,Y_{n-1})
\end{align}
and
\begin{align}
\nonumber\lefteqn{\lim_{n\to\infty}H(Y_n|X_0,Y_0,\cdots,Y_{n-1})} & \\
\label{eqn:LimitInequalities}
&=\mathcal{H}(Y)=\lim_{n\to\infty}H(Y_n|Y_0,\cdots,Y_{n-1}).
\end{align}
%

Using the inequalities (\ref{eq:CoverThomasInequalities}), we have
\begin{align}
	\mathcal{I}(X;Y) &= \mathcal{H}(X) + \mathcal{H}(Y) - \mathcal{H}(X,Y)\\
	\label{eqn:UpperInfoRate}
	&\leq \mathcal{H}(X) - \mathcal{H}(X,Y) +  H(Y_n|Y_0,\cdots,Y_{n-1}) \\
	&=: \mathcal{I}_n^+(X;Y)
\end{align}
and
\begin{align}
	\mathcal{I}(X;Y) &\geq 
		\mathcal{H}(X) - \mathcal{H}(X,Y) +  H(Y_n|X_0,Y_0,\cdots,Y_{n-1})\\
\label{eqn:LowerInfoRate}
		&=: \mathcal{I}_n^-(X;Y) .
\end{align}
The required bounds on $\mathcal{H}(Y)$ are derived below.

\subsubsection{Upper Bounds on $\mathcal{H}(Y)$ and $\mathcal{I}(X;Y)$}

First consider the one-step conditional entropy of the $Y$ sequence,
\begin{eqnarray}
\nonumber\lefteqn{H(Y_1|Y_0)} & & \\
\label{eq:hy2y1-a}
&=&\sum_{y_0}p_Y(y_0)
\binent\left(p_{Y_1|Y_0}(\B \given y_0) \right) \\
\label{eq:hy2y1-b}
&=&\sum_{y_0} p_Y(y_0) 
\binent\left( \sum_{x_0,x_1}  \frac{\pi_{x_0y_0} T_{x_0 y_0 \rightarrow x_1 \B} }{\pi_{\L y_0}+\pi_{\H y_0}}\right) ,
\end{eqnarray}
where $\binent$ is the binary entropy function, and $\pi_{xy}$ and $T_{x_0 y_0 \rightarrow x_1 \B}$ are the steady-state probability (resp. transition probability) of the 
$(X,Y)$ Markov chain, defined in (\ref{eqn:pi}) (resp. (\ref{eqn:M})).
At the same time, the mutual information rate is bounded above by the entropy rate of the input \eqref{eqn:InputEntropyRate}.  Thus, from (\ref{eqn:InputEntropyRate}), (\ref{eqn:JointEntropyRate}), and (\ref{eq:hy2y1-b}), the first upper bound $\mathcal{I}_1^+(X;Y)$ is given by
\begin{align}
	\nonumber
	\lefteqn{\mathcal{I}_1^+(X;Y) =} & \\
	\nonumber
	&p_\L\left(r\log\frac{1}{r} + (1-r)\log\frac{1}{1-r}\right)\\
	\nonumber 
	& + p_\H\left(s\log\frac{1}{s} + (1-s)\log\frac{1}{1-s}\right) \\
	\nonumber
	& +\left\{\left[ \:-\: \sum_{x=\L}^{\H}\sum_{y=\U}^{\B}\pi_{xy}
	\left(\sum_{x'=\L}^{\H}\sum_{y'=\U}^{\B}\phi\left(  T_{xy\rightarrow x'y'} \right)\right)\right.\right.\\
	\label{eqn:InfoRateUpper1}
	&  \left.\left.\:+\: \sum_{y_0} p_Y(y_0) 
\binent\left( \sum_{x_0,x_1} \frac{T_{x_0 y_0 \rightarrow x_1 \B} \pi_{x_0y_0}}{\pi_{\L y_0}+\pi_{\H y_0}}\right) \right] \wedge\: 0 \right\},
\end{align}
where $\{A\wedge B\}$ represents the lesser of $A$ and $B$.
The bound $\mathcal{I}_1^+(X;Y)$ is illustrated in Figure \ref{fig:UB1}.

\begin{figure}[t!] 
   \centering
   \includegraphics[width=3in]{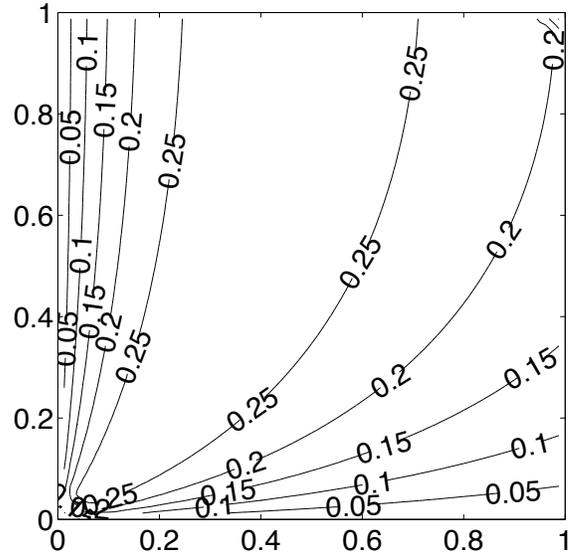} 
   \caption{Mutual information upper bound given by 
   (\ref{eqn:InfoRateUpper1}), for $\alpha_\L=0.1, \alpha_\H=0.9,\beta=0.5$. Horizontal axis: $r$; vertical axis: $s$. }
   \label{fig:UB1}
\end{figure}

Next consider the two-step entropy, $H(Y_2|Y_0,Y_1)$. We calculate this entropy explicitly as follows:
\begin{equation}
	H(Y_2|Y_0,Y_1) = \sum_{y_0,y_1} p_{Y_0,Y_1}(y_0,y_1)
	\binent\left(p_{Y_2 | Y_0,Y_1}(\B \given y_0,y_1) \right) ,
\end{equation}
where
\begin{align}
	p_{Y_0,Y_1}(y_0,y_1) 
		&= \sum_{x_0,x_1} \pi_{x_0y_0}T_{x_0 y_0\to x_1 y_1}\\
	p_{Y_2 | Y_0,Y_1}(\B \given y_0,y_1) 
		&= \sum_{x_2} p_{X_2,Y_2|Y_0,Y_1}(x_2,\B \given y_0,y_1)
\end{align}
and, writing $X_0^n$ for $(X_0,\cdots,X_n)$,
\begin{align}
	\nonumber\lefteqn{p_{X_2,Y_2|Y_0,Y_1}(x_2,\B \given y_0,y_1)} & \\
		&= \sum_{x_0,x_1} p_{X_0^2,Y_2|Y_0,Y_1}(x_0^2,\B \given y_0,y_1) \\
		&= \sum_{x_0,x_1} 
		\left( \frac{T_{x_1 y_1 \to x_2 \B} p_{X_0,Y_0,X_1,Y_1}(x_0,y_0,x_1,y_1)}
		{\sum_{x_0,x_1} p_{X_0,Y_0,X_1,Y_1}(x_0,y_0,x_1,y_1)} \right)\\
		&= \sum_{x_0,x_1} 
		\left( \frac{T_{x_1 y_1 \to x_2 \B} T_{x_0 y_0 \rightarrow x_1 y_1} \pi_{x_0 y_0}}
		{\sum_{x_0,x_1} T_{x_0 y_0 \rightarrow x_1 y_1} \pi_{x_0 y_0} } \right) .
\end{align}
%
In general, the $n^{\text{th}}$ upper bound of this form is obtained from the $n$-step upper bound of the entropy rate of the channel state.  Writing $Y_0^{n-1}$ for $(Y_0,\cdots,Y_{n-1})\in\{\U,\B\}^n$, 
the $n$-step upper bound is given by a sum involving $2^n$ terms
\begin{align}
\mathcal{H}^+_n &:= H(Y_n|Y_0^{n-1})\\
\label{eq:H-upper-bound-general}
&=\sum_{y_0^{n-1}\in\{\U,\B\}^n}p_{Y_0^{n-1}}(y_0^{n-1})\binent( p_{Y_n | Y_0^{n-1}}(\B \given y_0^{n-1}) ) .
\end{align}
In appendix \ref{sec:journalpaper-appendix-sum-product} we 
briefly show how to use the sum-product algorithm to calculate the general $n$-step bound.  
Figure \ref{fig:Capacity2StateMarkovNstep_fig1} illustrates the convergence of the sequence of upper bounds $\mathcal{H}_n^+$ with a similar sequence of lower bounds (next section) for $n=2,3,4,5$.

\subsubsection{Lower Bounds on $\mathcal{H}(Y)$ and $\mathcal{I}(X;Y)$}

In a similar fashion,  we can formulate a lower bound on $\mathcal{H}(Y)$ involving $n$ prior states of $Y$ and the initial state of $X$, namely
\begin{align}
\mathcal{H}^-_n &:= H(Y_n|X_0,Y_0^{n-1})\\
\nonumber
&=\sum_{x_0\in\{\L,\H\}}\sum_{y_0^{n-1}\in\{\U,\B\}^n}
p_{X_0,Y_0^{n-1}}(x_0,y_0^{n-1})\\
\label{eq:H-lower-bound-general}& \:\:\cdot
\binent(p_{Y_n | X_0,Y_0^{n-1}}(\B \given x_0,y_0^{n-1})) .
\end{align}
Moreover, we also have the trivial lower bound on the mutual information rate $\mathcal{I}(X;Y)\ge 0$. 

Again, appendix \ref{sec:journalpaper-appendix-sum-product} briefly shows how to perform this calculation using the sum-product algorithm. 
%
%
Figure \ref{fig:Capacity2StateMarkovNstep_fig1} illustrates the convergence of $\mathcal{H}^\pm_n$ in the interior of the region $0<r,s<1$, for  $n=2,3,4,5$.
\begin{figure}[htbp] 
   \centering
   \includegraphics[width=3.25in]{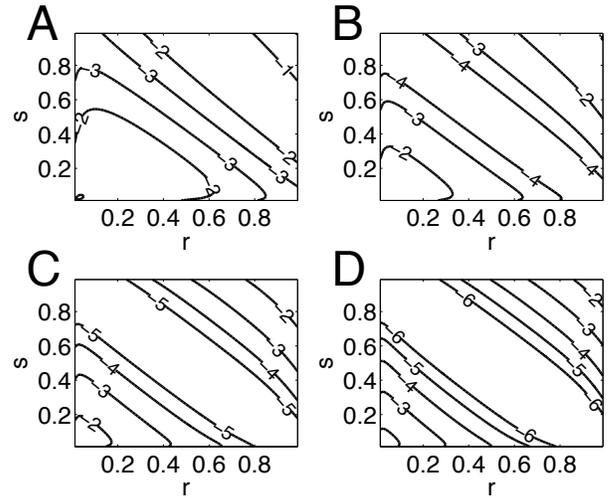} 
   \caption{Difference between upper and lower bounds for the mutual information rate $\mathcal{I}(X;Y)$ of the 2-state Markov channel, as a function of switching parameters $0<r<1$ and $0<s<1$. The upper and lower bounds are given by (\ref{eqn:UpperInfoRate}) and (\ref{eqn:LowerInfoRate}), respectively. Each panel shows $\log_{10}(\mathcal{I}^+_n-\mathcal{I}^-_n)$ for $n=2$ (panel \textbf{A}), $n=3$ (panel \textbf{B}), $n=4$ (panel \textbf{C}), $n=5$ (panel \textbf{D}). Parameter values are $\alpha_\L=0.1, \beta=0.5, \alpha_\H=0.9$.  Each increase in the depth of conditioning decreases the gap between the upper and lower bounds by roughly an order of magnitude.}
   \label{fig:Capacity2StateMarkovNstep_fig1}
\end{figure}
The upper and lower bounds obtained by conditioning $Y$ to a depth of five steps constrains the mutual information to within less than 1\% for input switching rates satisfying $|r+s-1|\lesssim 0.9$, or roughly all but 1\% of the  $(r,s)$ plane, for the parameters  ($\alpha_\L=0.1, \beta=0.5, \alpha_\H=0.9$) illustrated in Fig.~\ref{fig:Capacity2StateMarkovNstep_fig1}. \bl{(Elsewhere, the bounds can be obtained to greater depth using the same procedure.)}  We confirmed this result using Monte Carlo sampling to obtain empirical mutual information rates.  \begin{figure}[htbp] 
   \centering
   \includegraphics[width=3in]{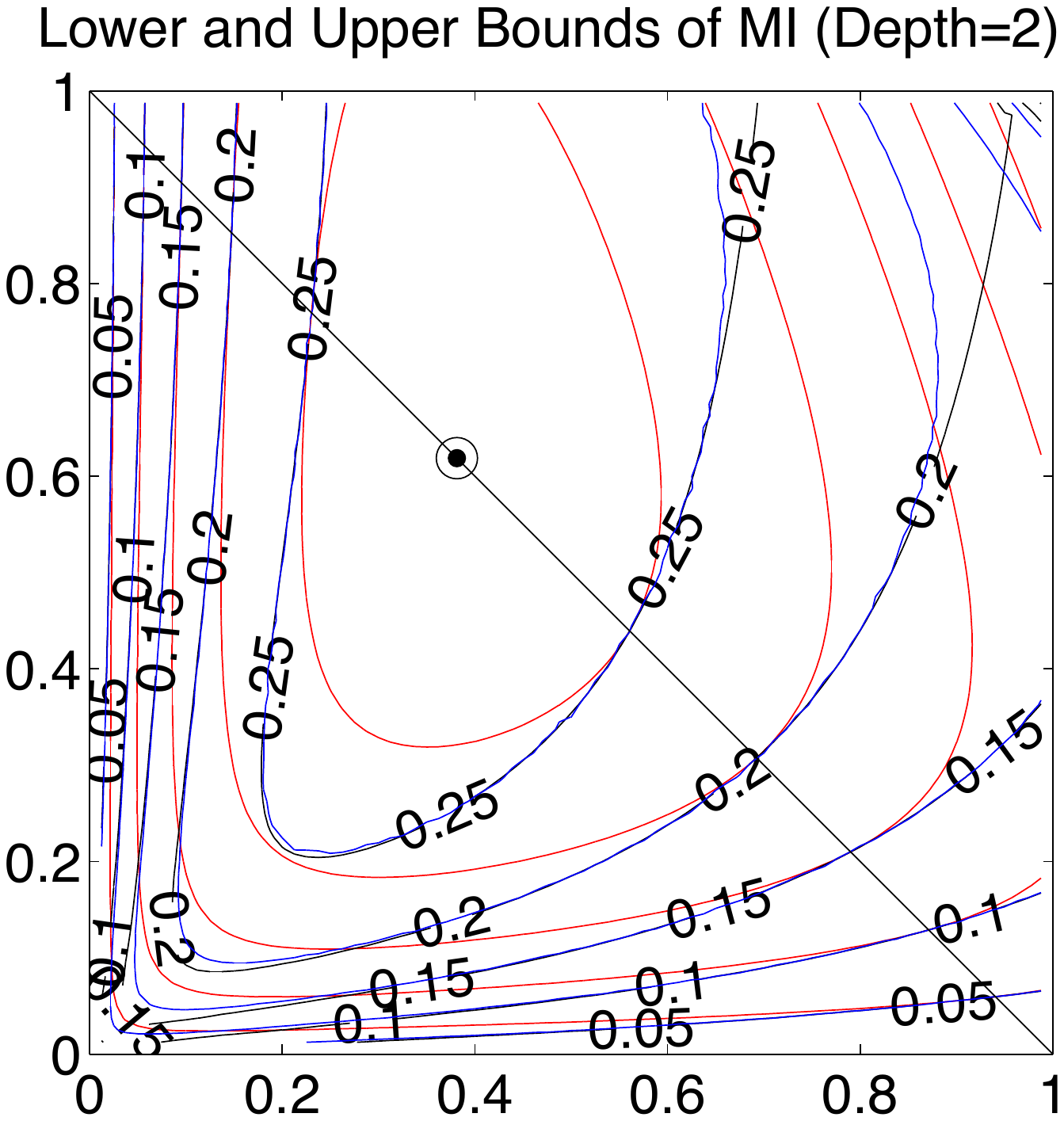}
   \includegraphics[width=3in]{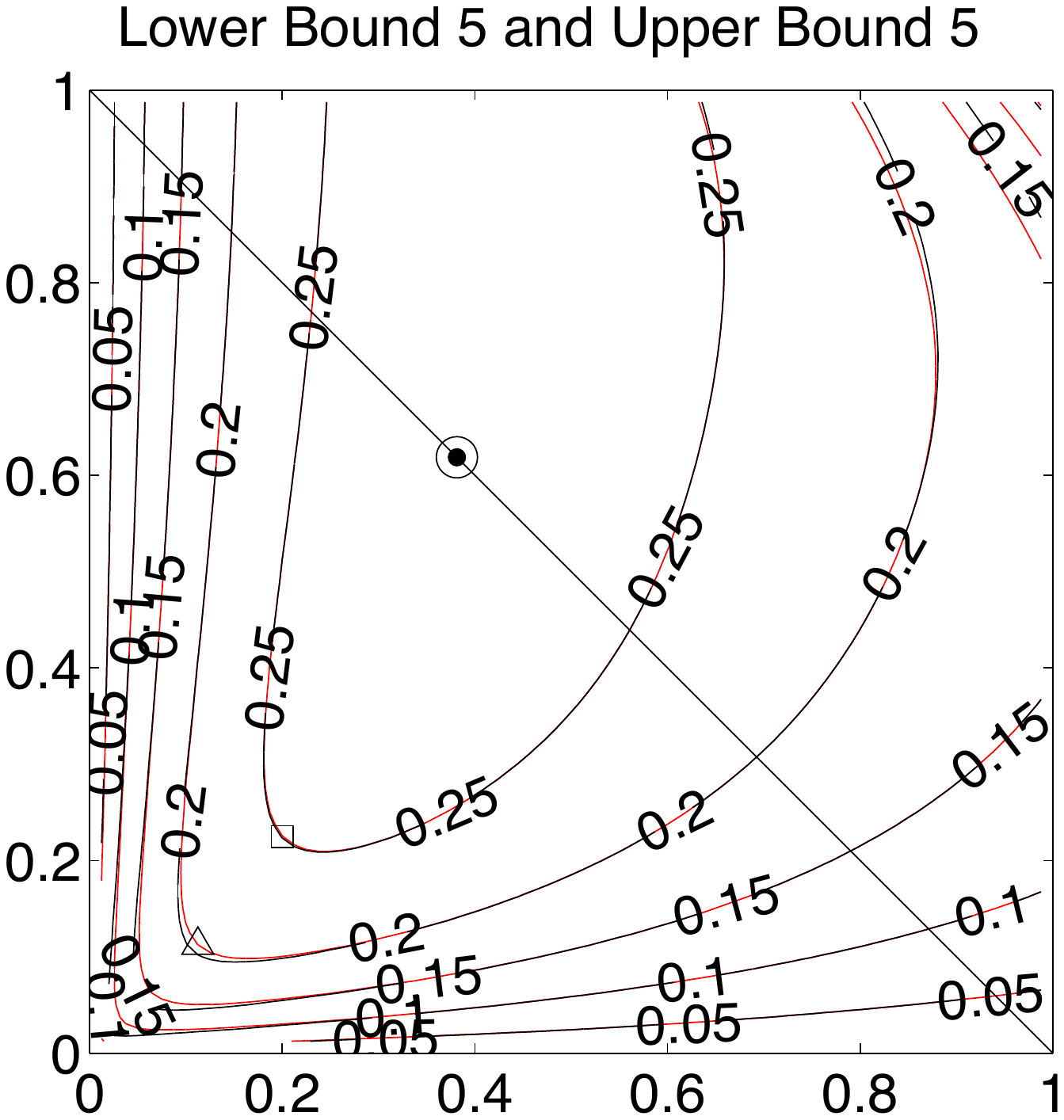} 
   \caption{Mutual information of the two-state channel driven by a two-state Markov input process. The channel parameters are  $\alpha_\L=0.1, \beta=0.5, \alpha_\H=0.9$.  The input switching rates are $r$ (low-to-high) and $s$ (high-to-low). \textbf{Top:} Three sets of curves show the upper bound $\mathcal{I}_2^+(X;Y)$ (black), lower bound $\mathcal{I}_2^-(X;Y)$ (red), and Monte Carlo estimate (blue). \bl{The diagonal line represents $r+s=1$, along which the Markov input process reduces to IID; the capacity-achieving input distribution occurs along this line.} \dr{Contour labels refer to the upper bound (black curve).   The UB, LB and MC curves coincide when $r+s=1$.}
    \textbf{Bottom:} Two sets of curves show the \dr{upper bound (black curves) and lower bound (red curves)} $\mathcal{I}_5^\pm(X;Y)$, which are indistinguishable over most of the $(r,s)$ plane. Horizontal axis: $r$; vertical axis: $s$. \dr{Circled dot indicates capacity-achieving IID input switching rates $r_\text{opt}\approx0.381, s_\text{opt}\approx 0.619$; at this point $\mathcal{I}_5^\pm(X;Y)\approx 0.279$. Square: $(r,s)\approx(0.200,0.225)$.   Triangle: $(r,s)\approx(0.112,0.112)$.  See text for details.}}
   \label{fig:MI_upper_lower_5deep}
\end{figure}

\dr{Figure \ref{fig:MI_upper_lower_5deep}, which shows the mutual information surface for parameters $\alpha_\L=0.1,\beta=0,5,\alpha_\H=0.9$, as a function of low-to-high switching rate $r$, and high-to-low switching rate $s$,  provides several insights.  First, the upper panel shows reasonable agreement between the upper bound, the lower bound, and direct Monte Carlo sampling, even when the UB and LB are only calculated to a depth of two levels of conditioning (the top panel plots $\mathcal{I}_2^\pm(X;Y)$ together with the Monte Carlo estimate).
Second, consistent with Figure \ref{fig:Capacity2StateMarkovNstep_fig1}, conditioning  five steps deep gives indistinguishable upper and lower bounds for all but a small portion of the $(r,s)$ plane.  Third, the channel can endure a significant departure from the idealized IID input case with only a modest loss of efficiency.  In the lower panel the $\text{MI}=0.25$ contour represents a roughly 10\% decrement in the mutual information rate relative to the capacity ($\approx 0.279$ bits per time step, for these parameters).  This contour extends to $r+s$ values as low as $r+s\approx 0.425$.  Finally, by introducing memory into the channel (deviating from the line $r+s=1$) we gradually change the optimal strategy for deploying high- \textit{versus} low-concentration input signals. To see this, note that the closest point to the origin along the $\text{MI}=0.25$ contour is $(r_\square,s_\square)\approx(0.200,0.225)$, marked $\square$ on the bottom panel of Figure \ref{fig:MI_upper_lower_5deep}.  When the input is IID, the optimal low- \textit{versus} high-input frequency is biased towards low-concentration inputs ($s_\text{opt}/r_\text{opt}\approx 0.619/0.381\approx1.62$).
As the sum of the switching rates decreases, this bias is gradually reduced;
for low switching rates (close to the origin of the $(r,s)$ plane) the optimal ratio approaches unity.  For example, at the point $\square$ the low-input--frequency to high-input--frequency ratio is $s_\square/r_\square\approx .225/.200\approx 1.12$.  At the next contour ($\mathcal{I}_5^\pm(X;Y)\approx 0.20$) the closest point to the origin, marked $\triangle$, is $(r_\triangle,s_\triangle)\approx(0.112,0.112)$.  At this point the low-input--frequency to high-input--frequency ratio is approximately unity.  Our analysis of the two-state discrete time BIND channel, with input constrained to a two-state Markov process, suggests that we could expect to see different signaling strategies employed in specific biological channels, depending on the persistence times of diffusion-mediated signals in those channels.}

\section{Continuous-time \dr{limits of the discrete time} channel}
\label{sec:continuoustime}

\dr{
The BIND channel arises from an underlying physical system -- ligand molecules binding to a receptor protein -- that operates in continuous rather than discrete time.  The \emph{per timestep transition probabilities} $\alpha_{\L/\H}$ and $\beta$ derive from \emph{continuous time transition rates} $k_+$ and $k_-$ in the sense that $\alpha_{\L/\H}=k_+c_{\L/\H}\,\Delta t+o(\Delta t)$ and $\beta=k_-\,\Delta t+o(\Delta t)$ (cf.~\S I-B).   Rigorous analysis of point process channels in continuous time requires additional probability theoretic techniques beyond the scope of the present paper 
(for results in this direction, see\cite{Kabanov1978,Bremaud1981,Verdu1999PoissonTalk,SundaresanVerdu2006ieeeIT}).
Nevertheless it is of interest to study how the mutual information and  capacity of the discrete time BIND channel behave in the limit of small time steps.}

\dr{In this section we \dr{therefore} consider the capacity \dr{of the discrete time BIND channel} in two limiting cases.  
In \S IV-A we \dr{evaluate} the limiting behavior of the discrete time mutual information rate in the limit of short time steps, and \dr{its supremum} with respect to parameters.  While this approach does not provide a rigorous proof of \dr{a continuous-time} capacity formula, \dr{the limiting form of the mutual information per time step takes} an intuitively appealing form, namely} the product of the mutual information rate of a counting process when the channel is in the receptive or unbound state, multiplied by the fraction of time it is in that state under stationary input conditions.  In \S IV-B we again consider the short time-step limit of the mutual information, but do so while fixing the per time-step release  probability to be unity.  \bl{Although again not a rigorous proof of capacity, in this case it is interesting to note that} the continuous time channel without \bl{an insensitive or bound state} gives the same \dr{limiting} mutual  information rate and capacity expression as Kabanov's Poisson channel (Wyner 1988a, Wyner 1988b).  
\dr{This limit provides an important consistency check on the discrete time BIND model, and indicates its connection to existing point process models.}


\subsection{Derivation of a capacity expression for the 2-state signal transduction channel}
\label{ssec:short-time-analysis} 

We start with the expression for the discrete time mutual information rate \eqref{eqn:CIIDLH}.
Assuming the input distribution is IID, the
mutual information per discrete time step is given by
\begin{equation}
	\label{eqn:MIperdiscretestep}
	I(X;Y) = 
	\frac{\binent(\alpha_\H p_\H + \alpha_\L p_\L) 
		- p_\H \binent(\alpha_\H) - p_\L \binent(\alpha_\L)}
	{1 + (p_\H \alpha_\H + p_\L \alpha_\L)/\beta} ,
\end{equation}
where
\begin{equation}
	\binent(p) = - p \log p - (1-p) \log (1-p) .
\end{equation}
The IID capacity is obtained by maximizing over the set $p_\H\in[0,1]$ with $p_\L=1-p_\H$.  For convenience, we use $x$ to represent  $p_\H$ in the rest of this section.

The \dr{discrete time channel} model  assumes that the probability of transition \emph{per time step} is $\alpha_\H,\alpha_\L, $ or $\beta$, depending on the state of the input and the state of the channel.  The IID input approximation assumes the input can flicker back and forth arbitrarily fast, so that successive time steps are uncorrelated.  For the following calculation, we will assume that the input can remain IID even \dr{in the limit of vanishing time step}.  To represent discretization \dr{with an arbitrary time step,} we set
\begin{eqnarray*}
\alpha^*_\H&=&\epsilon\alpha_\H\\
\alpha^*_\L&=&\epsilon\alpha_\L\\
\beta^*&=&\epsilon\beta
\end{eqnarray*}
where $\epsilon > 0$ is the size of the time step.  The case $\epsilon=1$ corresponds to the discrete time model considered \dr{in \S\ref{sec:Capacity}}.  The fixed constants $\alpha_{\H/\L}$ and $\beta$ now represent transition rates per unit time, rather than probabilities per time step.

Figure \ref{fig:MI-rate-asymptotic} shows the per-time-step mutual information, as a function of $0\le x \le 1$, for $\alpha^*_\L=0.1\epsilon$, $\alpha^*_\H=0.9\epsilon$, and $\beta^*=0.5\epsilon$, for $\epsilon$ ranging from 1 to $10^{-4}$.  
\begin{figure}[htbp] 
   \centering
   \includegraphics[width=2.75in]{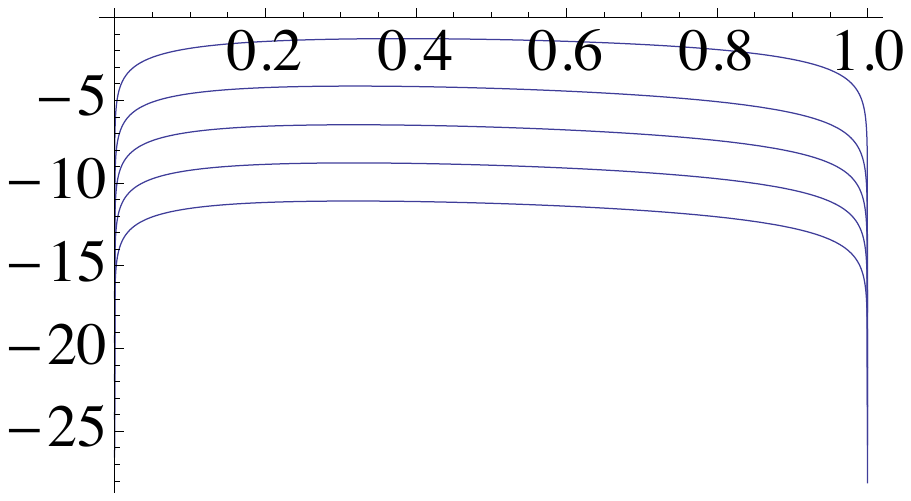} 
   \includegraphics[width=2.75in]{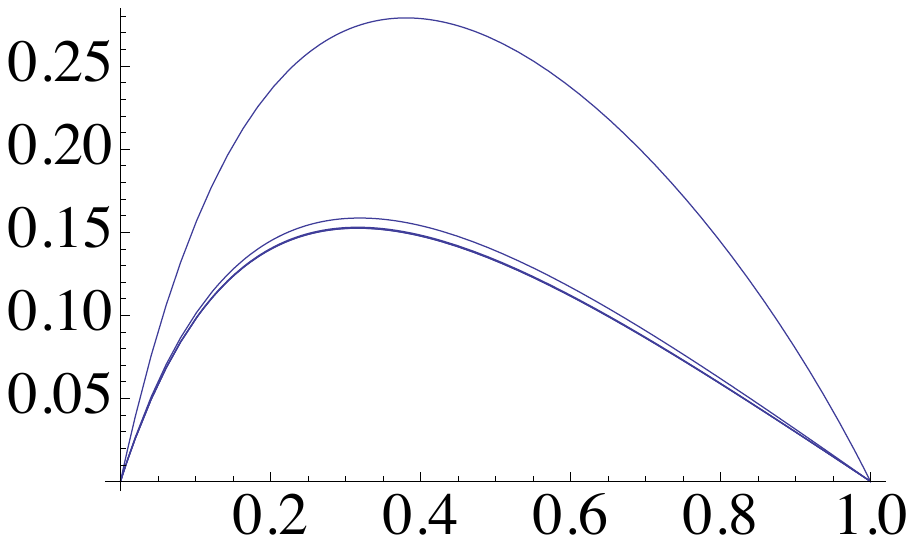}
   \caption{Convergence of mutual information rate as $\epsilon\to 0$.  \textbf{Left:} Log of the mutual information per time step, $\log(I(X;Y))$, plotted as a function of $x\in[0,1]$, for $\alpha_\L^*=0.1\epsilon$, $\alpha_\H^*=0.9\epsilon$, and $\beta^*=0.5\epsilon$.  The curves stacked from top to bottom have $\epsilon=1, 0.1, 0.01, 0.001$ and $0.0001$, respectively.
   \textbf{Right:} Mutual information \emph{rate}, $I/\epsilon$, for the same parameters.  The top curve has $\epsilon=1$; the next has $\epsilon=0.1$; the rest are indistinguishable.  
The curves suggest that as $\epsilon\to 0$, the mutual information rate $I/\epsilon$ converges to a finite quantity, and that the location of the maximum converges to a given value $x_\text{opt}$ as well.  
}
   \label{fig:MI-rate-asymptotic}
\end{figure}
The curves suggest that, as expected, the optimal value of $x$ lies in the interior of the interval $(0,1)$.  Moreover, the optimal value appears to converge to a given value $x_\text{opt}$ as $\epsilon\to 0$, \dr{distinct from the optimal value when $\epsilon=1$.}

We define the mutual information rate for a given $\epsilon>0$, $\mathcal{I}_\epsilon$, to be $I/\epsilon$, and we study how this quantity scales as $\epsilon\to 0^+$. 
%
In the remainder of this section we do the following: 
\begin{enumerate}
\item We study the information rate $\mathcal{I}_\epsilon(x)$ for small $\epsilon$ and show that it has a unique maximum in the interior of the interval $0\le x \le 1$, where $x=p_\H$ is the probability that the input signal is in the ``high'' concentration state.
\item By taking the limit as $\epsilon\to 0^+$, and optimizing over $x$,  we obtain an expression \dr{for the capacity of the discrete time channel in the continuous time limit}, as a function of the binding and unbinding rates $\alpha_\H$, $\alpha_\L$, and $\beta$.  
\end{enumerate}

In the following section \S \ref{ssec:ReductionToKabanov}, we further show that by taking the limit of the capacity for the continuous time channel, as the unbinding rate $\beta\to \infty$, we recover Kabanov's expression for the capacity for the Poisson channel.

\subsubsection{Critical point of the information rate $\mathcal{I}_\epsilon$ for small $\epsilon >0$}
\label{sssec:criticalpoint}
First we study the behavior of the optimal value of $x$ in the limit of small $\epsilon$.  Assuming an interior maximum for $I$, we set the derivative of the RHS of Equation (\ref{eqn:MutualInfo}) equal to zero to obtain the necessary and sufficient condition in equation (\ref{eq:interiormax}).
\begin{figure*}[!t]
\normalsize
\begin{eqnarray}
\nonumber 0&=&
-\alpha_\H \beta \epsilon  \log \left(\frac{1}{-\alpha_\H x \epsilon +\alpha_\L
   (x-1) \epsilon +1}\right)
   +\alpha_\L \beta \epsilon  \log \left(\frac{1}{-\alpha_\H x
   \epsilon +\alpha_\L (x-1) \epsilon +1}\right)\\
   \nonumber &&+\beta \epsilon  (\alpha_\H-\alpha_\L) \log
   \left(\frac{1}{\alpha_\H x \epsilon -\alpha_\L x \epsilon +\alpha_\L \epsilon
   }\right)
   -\alpha_\H \epsilon  (\alpha_\L+\beta) \log \left(\frac{1}{\alpha_\H \epsilon
   }\right)\\
   \nonumber &&+\alpha_\L \epsilon  (\alpha_\H+\beta) \log \left(\frac{1}{\alpha_\L \epsilon
   }\right)-\alpha_\H \log \left(\frac{1}{-\alpha_\H x \epsilon +\alpha_\L (x-1)
   \epsilon +1}\right)\\
   \nonumber &&+\alpha_\L \log \left(\frac{1}{-\alpha_\H x \epsilon +\alpha_\L
   (x-1) \epsilon +1}\right)
   -\alpha_\L \log \left(\frac{1}{1-\alpha_\H \epsilon}\right)
   +\alpha_\H \alpha_\L \epsilon  \log \left(\frac{1}{1-\alpha_\H \epsilon
   }\right)\\
   \nonumber &&+\alpha_\H \log \left(\frac{1}{1-\alpha_\L \epsilon }\right)
   -\alpha_\H
   \alpha_\L \epsilon  \log \left(\frac{1}{1-\alpha_\L \epsilon }\right)
   -\beta \log
   \left(\frac{1}{1-\alpha_\H \epsilon }\right)\\
   &&+\alpha_\H \beta \epsilon  \log
   \left(\frac{1}{1-\alpha_\H \epsilon }\right)
   +\beta \log \left(\frac{1}{1-\alpha_\L
   \epsilon }\right)
  -\alpha_\L \beta \epsilon  \log \left(\frac{1}{1-\alpha_\L \epsilon
   }\right). \label{eq:interiormax}
\end{eqnarray}
\hrulefill
\vspace*{4pt}
\end{figure*}
In Appendix \ref{app:continuous} we show that this condition leads to an interior maximum at a unique value of $x$ as $\epsilon\to 0$. 

\subsubsection{Implicit expression for $x_\text{opt}$ in the limit $\epsilon\to 0$, and an expression for the \dr{limiting capacity of the discrete time channel in the small $\epsilon$ limit}}

Define the continuous time information rate, as a function of the fraction  of time the input is in the higher state ($x=p_\H$), as
\begin{align}
\mathcal{I}&=\lim_{\epsilon\to 0^+} \left(\frac{1}{\epsilon}I(X;Y)\right).
\end{align}
with $I(X;Y)$ given in (\ref{eqn:MIperdiscretestep}).
From the preceding section, we know this expression converges to a finite value, and moreover $\mathcal{I}$ has a unique maximum in the range $0\le x \le 1$.
Let $x_\text{opt}$ denote the optimal value of the high-input probability.  It is straightforward to show that 
\begin{align}\label{eq:continuoustimecapacity0}
\lefteqn{\mathcal{I}(x)=} & \\
& -\left(\frac{\beta}{\beta+\bar{\alpha}} \right)\Big(  \bar{\alpha}\log(\bar{\alpha}) - (x\alpha_\H\log\alpha_\H + (1-x)\alpha_\L\log\alpha_\L)  \Big)\nonumber \end{align}
where, as above, $\bar{\alpha}(x)=x\alpha_\H + (1-x)\alpha_\L$ is the average value of $\alpha$ given $x$.
Thus the mutual information rate is given by the product of the fraction of time the channel is in the receptive state,
$$f(x)\equiv\left(\frac{\beta}{\beta+\bar{\alpha}(x)} \right)$$
and the mutual information rate conditional on the channel being in the receptive state,
$$g(x)\equiv -( \bar{\alpha}(x)\log\bar{\alpha}(x) - (x\alpha_\H\log\alpha_\H + (1-x)\alpha_\L\log\alpha_\L)).$$
Although the optimal value of the high input probability $x$ is not available explicitly, we can obtain a useful implicit expression for the \dr{small $\epsilon$ limiting capacity of the discrete time channel}.  Setting $\mathcal{I}'=f'g+f g'=0$, and noting that  $f'<0,$ 
we have $g(x_\text{opt})=f(x_\text{opt})g'(x_\text{opt})/f'(x_\text{opt})$, from which 
\begin{eqnarray}
\nonumber\lefteqn{\mathcal{I}(x_\text{opt})} & & \\
&=& \frac{g'(x_\text{opt})}{f'(x_\text{opt})}f(x_\text{opt})^2\\
\nonumber
&=&\left(\frac{\beta}{\alpha_\H-\alpha_\L}\right)
\Big( \alpha_\H\log\alpha_\H - \alpha_\L\log\alpha_\L\\ 
& & -\: (\alpha_\H-\alpha_\L)(1+\log\bar{\alpha}(x_\text{opt})) \Big). \label{eq:continuoustimecapacity}
\end{eqnarray}
In the continuous time setting, we have an ambiguity associated with the choice of the time unit.  Provided the low binding rate is not identically zero, we can choose time units with respect to which the low binding rate $\alpha_\L\equiv 1$.  Let $\alpha_\H=1+c$ in the same units.  Thus, from \eqref{eq:continuoustimecapacity}, the capacity is given by 
\begin{equation}\label{eq:continuoustimecapacity2}
\mathcal{I}(x_\text{opt})=\beta\left(\frac{1+c}{c}\log(1+c) - 1-\log(1+x_\text{opt} c) \right).
\end{equation}
Although $x_\text{opt}$ is not known explicitly, it lies in the interior of the unit interval, so we have the upper and lower bounds
\begin{eqnarray}
\nonumber
& \beta\left(\frac{1+c}{c}\log(1+c) - 1-\log(1+ c) \right) &\\
& \le \mathcal{I}(x_\text{opt}) \le &  \\
\nonumber
& \beta\left(\frac{1+c}{c}\log(1+c) - 1 \right) . &
\end{eqnarray}

Kabanov obtained the capacity of the Poisson channel with signal intensity bounded above by a constant $c$, and unit background intensity.  The ``high'' and ``low'' rates of the incoming process (combining signal and noise) were $1+c$ and $1$, respectively.   In the limit as the unbinding rate grows without bound, we expect that our channel should be equivalent to Kabanov's Poisson channel.  Note that because the optimal sending input distribution depends on the channel parameters, including $\beta$, the expression (\ref{eq:continuoustimecapacity2}) does not necessarily diverge as $\beta\to\infty$.  In the next section we recover Kabanov's capacity formula in this limit.

\subsection{Reduction of the 2-State Signal-Transduction Channel to Kabanov's Poisson Channel}
\label{ssec:ReductionToKabanov}

\bl{In the introduction, we discussed Kabanov's capacity \cite{Kabanov1978}, which assumes $k_- \rightarrow \infty$ (i.e., the $\B \rightarrow \U$ transition
is immediate and instantaneous). In this section, we come full circle by showing that Kabanov's capacity formula
emerges when we take the limit as $\epsilon \rightarrow 0$ and $k_- \rightarrow \infty$ \dr{of the discrete time BIND channel}.\footnote{\bl{Note that $k_- \rightarrow \infty$ 
is not realistic in a biological system. Kabanov's result is normally applied to optical detection systems.}}}

Let us suppose that with a discrete time step $\epsilon$, we may set the unbinding \emph{rate} $k_- =1/\epsilon$, so that the unbinding \emph{probability} is \bl{fixed at} $\beta=\epsilon k_- =1$; thus, $k_-$ is set 
to the highest possible rate such that $\beta$ is a valid probability. Further, recall that $\alpha_\H = \epsilon k_+ c_\H$ and $\alpha_\L = \epsilon k_+ c_\L$, and let $\bar{k}_x = x k_+ c_\H + (1-x) k_+ c_\L$ represent the average binding rate.

Suppose 
$\epsilon \rightarrow 0$; setting $k_- =1/\epsilon$, this means $k_- \rightarrow \infty$.
The continuous time channel capacity is still bounded, provided the low sending rate $\alpha_\L>0$,   
%
so we can write
\begin{eqnarray}
\nonumber \lefteqn{ \mathcal{I}_\epsilon(x) =} & & \\ 
\nonumber & & \frac{1}{\epsilon}\left(\frac{1}{1+\epsilon\bar{k}_x}\right)
\Big(\binent(\epsilon\bar{k}_x)\\
& & \:\:-\:x\binent(\epsilon k_+ c_\H)-(1-x)\binent(\epsilon k_+ c_\L) \Big)
\\
\nonumber \lefteqn{\lim_{\epsilon\to 0}\mathcal{I}_\epsilon(x)\equiv\mathcal{I}_0(x)} &&\\
\nonumber & &  =x k_+ c_\H \log k_+ c_\H \\
& & \:\:+\: (1-x)k_+ c_\L \log k_+ c_\L - \bar{k}_x \log \bar{k}_x\\
\nonumber \lefteqn{\frac{d}{dx}\mathcal{I}_0(x) =} & & \\ 
\nonumber & & k_+ c_\H \log k_+ c_\H - k_+ c_\L \log k_+ c_\L\\
& & \:\:-\:( k_+ c_\H -k_+ c_\L)(1+\log\bar{k}_x)
\end{eqnarray}
and $\frac{d}{dx}\mathcal{I}_0(x) = 0$ when
%
\begin{align}
\log\bar{k}_x =& \left(\frac{1}{k_+ c_\H - k_+ c_\L}\right)\cdot
\\ &\cdot\left(k_+ c_\H \log\left(\frac{k_+ c_\H}{e} \right)- k_+ c_\L \log\left(\frac{k_+ c_\L}{e} \right) \right).\nonumber
\end{align}
Recall that $x_\text{opt}$ denotes the optimal probability of the high-concentration signal. Since $\bar{\alpha}=\alpha_\L+x_\text{opt}(\alpha_\H-\alpha_\L)$, we have
\begin{equation}
x_\text{opt}=\frac{\bar{k}_x-k_+ c_\L}{k_+ c_\H - k_+ c_\L}
\end{equation}
and
\begin{equation}
(1-x_\text{opt})=\frac{k_+ c_\H -\bar{k}_x}{k_+ c_\H-k_+ c_\L}.
\end{equation}
Consequently the capacity, $\mathcal{I}_0^*=\mathcal{I}_0(x_\text{opt})$, reduces to 
%
\begin{align}
\mathcal{I}^*_0 
&=\bar{k}_{x_\text{opt}}-\left(\frac{k_+ c_\H k_+ c_\L}{k_+ c_\H-k_+ c_\L} \right)\log\left(\frac{k_+ c_\H}{k_+ c_\L} \right)\\
\nonumber&=\exp\left[\left(\frac{1}{c_\H-c_\L}\right)\left(c_\H \log\left(\frac{k_+ c_\H}{e} \right)-c_\L \log\left(\frac{k_+ c_\L}{e} \right) \right) \right]\\
&  \:\:-\:\left(\frac{k_+ c_\H c_\L}{c_\H- c_\L} \right)\log\left(\frac{c_\H}{c_\L} \right).
\end{align}
Now let $\lambda = k_+ c_\L$ and $c = k_+ c_\H - k_+ c_\L$.
Since $k_- \rightarrow \infty$, this corresponds to a Poisson channel, alternating
between rates $\lambda$ and $\lambda + c$; these are Kabanov's parameters.
Substituting into the above equation,
%
\begin{eqnarray}
\nonumber\lefteqn{\mathcal{I}_o^*(c,\lambda)}&&\\
&=&
\lambda+x_\text{opt}c - \left(\frac{\lambda(\lambda+c)}{c}\right)\log\left( \frac{\lambda+c}{\lambda} \right)  \\
\nonumber &=&\exp\left[ \frac{\lambda+c}{c}\log\left(\frac{\lambda+c}{e}\right)-\frac{\lambda}{c}\log\left( \frac{\lambda}{e} \right) \right]\\
& & \:\:-\:\frac{\lambda}{c}(\lambda+c)\log\left( \frac{\lambda+c}{\lambda} \right)\\
\nonumber &=&\exp\left[\left(1+\frac{\lambda}{c}\right)\log(\lambda+c)-\frac{\lambda}{c}\log\lambda -1 \right]\\
& & \:\:-\:\lambda\left(1+\frac{\lambda}{c}\right)\log\left(1+\frac{c}{\lambda}\right)\\
\nonumber &=&\exp\left[\log\left[\frac{(\lambda+c)^{(1+\lambda/c)}}{\lambda^{(\lambda/c)}e}\right]\right]\\
& & \:\:-\:\lambda\left(1+\frac{\lambda}{c}\right)\log\left(1+\frac{c}{\lambda}\right)\\
\nonumber&=&\lambda\left[ \frac{1}{e}\left(1+\frac{c}{\lambda}\right)^{(1+\lambda/c)}
-\left(1+\frac{\lambda}{c}\right)\log\left(1+\frac{c}{\lambda}\right) \right].\\
\label{eq:davis-capacity}
\end{eqnarray}
Setting $\lambda=1$ in (\ref{eq:davis-capacity}) 
yields
\begin{align}
\label{eq:kabanov-capacity}
\mathcal{I}_o^*(c,1)
&= \frac{1}{e}\left( (c+1)^{1+c^{-1}}\right)-\left(1+\frac{1}{c} \right)\log(c+1)\\
&= C_{\text{Kabanov}}(c) .
\end{align}


The analysis of the Kabanov/Poisson channel has been elaborated in numerous ways.  In \cite{Davis1980ieeeIT}, Davis gives the following formula for the capacity of the Poisson channel with noise rate $\lambda$ and signal rate bounded by $c$, namely
$$C_{\text{Davis}}(c,\lambda)=\frac{1}{e}(\lambda+c)\left(1+\frac{c}{\lambda}\right)^{\lambda/c}-\lambda\left(1+\frac{\lambda}{c}\right)\log\left(1+\frac{c}{\lambda}\right)$$
which is identical to (\ref{eq:davis-capacity}), see Equation (4b) in \cite{Davis1980ieeeIT} and also Equation (5) in \cite{Frey1991IEEE}.

We emphasize that Kabanov did much more than derive the formula.  He proved in \cite{Kabanov1978} that (\ref{eq:kabanov-capacity})  is the capacity for the Poisson channel and also that the capacity cannot be increased via feedback.  While our rigorous proofs are restricted to the discrete time case of the ligand binding/unbinding channel,  the consistency of the limiting \dr{(vanishing time step)} expressions with Kabanov's formula suggests that the analogy is sound.

\section{Discussion}
\label{sec:discuss}

\subsection{\bl{The POST channel, the BIND channel, and finite state channels}}
\label{ssec:POST}

\bl{The POST channel \cite{AsnaniPermuterWeissman2013IEEE_ISIT,PermuterAsnaniWeissman2014IEEETransIT} the trapdoor or chemical channel \cite{Blackwell1961chapter,PermuterCuff_vanRoy_Weissman2007IEEE_ISIT}
and our BIND channel are examples of finite state Markov channels, a broad class of channels which are essential for understanding 
signal transduction systems.
In this section we compare the POST
and BIND channel models, and show that they are not reducible to each other, while putting both channels in the wider context of finite state channels.}

\bl{Finite state channels have a long history in information theory \cite{BlackwellBreimanThomasian1958AnnalsMathStat}.  For instance, Blackwell discusses them in his 1961 book chapter \cite{Blackwell1961chapter}, and introduces the trapdoor channel as a simple, but still unsolved, example.
(Permuter and colleagues obtained the feedback capacity for the trapdoor channel by formulating and solving an equivalent dynamic programming problem \cite{PermuterCuff_vanRoyWeissman2006arXiv,PermuterCuff_vanRoy_Weissman2007IEEE_ISIT}.)
Capacity of finite state channels has long been an interesting, and difficult, problem for
information theorists (see, e.g., \cite{GoldsmithVaraiya1996ieeeIT-markovchannels}).
Important recent results were provided by Chen, Yin, and Berger \cite{che05,yin-unpublished} for
the class of unit output memory (UOM) finite state channels, where the channel output and channel state are identical, and where the channel output (i.e., state) is provided as feedback to the transmitter with unit delay (see \cite[Fig. 2]{che05}). It should be clear that the BIND channel is a UOM channel, as we used some of these results in \S \ref{sec:Capacity}.}

\bl{The Past Output is the STate (POST) channel, also a UOM channel, was introduced by Permuter, Asnani and Weissman  \cite{AsnaniPermuterWeissman2013IEEE_ISIT,PermuterAsnaniWeissman2014IEEETransIT}.
Two specific channel models, \posta~and $\text{POST}(a,b)$, were analyzed; the state transition probabilities in these models were carefully selected to be symmetric, in the sense that the channel architecture is invariant under simultaneous relabeling of the binary inputs and outputs. This symmetry
allows the authors to establish that feedback does not increase the capacity of the
channels they study. The BIND model, which is derived from the physiology of biological signal transduction, does not have this symmetry; this makes it both biologically relevant, and distinct from   the \posta~and $\text{POST}(a,b)$ channels.}

\bl{To illustrate the difference,
Table \ref{tab:POSTchannel} shows the transition probabilities for the \posta, $\text{POST}(a,b)$, and BIND channels.  
\begin{table}[htbp]
   \centering
   \begin{tabular}{@{} cccccc @{}} 
      \toprule
      $Y_{i-1}$ & $X_i$ & $Y_{i}$  &  \posta & $\text{POST}(a,b)$ &
      $\text{BIND}(\alpha_\L,\alpha_\H,\beta)$ \\ \hline
      0&0&0&1&$a$ & $1-\alpha_\L$\\
      0&0&1&0&$1-a$ & $\alpha_L$\\
      0&1&0&$\alpha$&$1-b$ & $1-\alpha_\H$\\
      0&1&1&$1-\alpha$&$b$ & $\alpha_\H$\\
      \hline
      1&0&0&$1-\alpha$&$b$ & $\beta$\\
      1&0&1&$\alpha$ &$1-b$ & $1-\beta$\\
      1&1&0&0&$1-a$&$\beta$\\
      1&1&1&1&$a$&$1-\beta$\\
      \bottomrule
   \end{tabular}
   \caption{Transition probabilities for the \posta, $\text{POST}(a,b)$, and BIND channels. Given the previous channel state, $Y_{i-1}$, and the next input, $X_i$, the table gives the probability of the next channel state, $Y_i$, under different channel models.}
   \label{tab:POSTchannel}
\end{table}
The symmetry of the \posta~channel is clear from the table: under simultaneous relabeling of the binary input and output states ($0\leftrightarrows 1$) the probabilities in the \posta~column remain unchanged.  The asymmetry of the BIND channel is similarly clear; since the two channel states exhibit entirely different behaviours, no relabeling of the states and inputs can recover the POST channel, except in the trivial case where $\alpha_\L = \alpha_\H = \beta$.}

\bl{Although the BIND and \posta/ $\text{POST}(a,b)$ channels are fundamentally different, they share the property that their capacities are not increased by feedback.  However, this property arises through distinct mechanisms.  The label-exchange symmetry of the POST channels guarantees that an optimal input strategy exists that is agnostic about the channel output, even when feedback information is available.  In contrast, the BIND channel has one input-sensitive and one input-insensitive state.  As established through our application of Chen and Berger's conditions, knowing when the channel is in the insensitive bound state does not change the optimal input strategy.}

\subsection{\bl{Biological Significance}}
\bl{Advances in high-throughput technologies that can measure the responses of populations of cells to chemical signals at the individual cell level have made possible the quantitative application of information theory by experimental biologists and biophysicsts.  Examples include information theoretic analysis of experiments measuring the encoding of visual information in the H1 neuron of the fly \cite{Ruyter-van-SteveninckLewenStrongKoberleBialek1997Science,spikesbook1999}, the encoding of gradient direction in the movement of the \textit{Dictyostelium} ameoba \cite{FullerChenAdlerGroismanLevineRappelLoomis2010PNAS,nips:ThomasEtAl:2004}, and the encoding of tumor necrosis factor (TNF) signal intensity in the response of the nuclear factor kappa B (NF-$\kappa$b) and activating transcription factor-2 (ATF-2)  pathways \cite{CheongRheeWangNemenmanLevchenko2011Science,Thomas:2011:Science}.  In these experiments, information theory does not so much provide a \emph{prediction} that can be confirmed or falsified by the experimental outcome; rather it provides a \emph{framework} that allows the experimenter to meaningfully ask ``how much information does this biological pathway carry"?  }

\bl{Signaling \textit{via} diffusible ligand molecules is a ubiquitous mechanism for communication between living cells. In this paper we have formulated and solved a novel discrete time finite state channel -- the BIND channel -- that captures the ligand-receptor binding/unbinding process present in the simplest type of signal transduction mechanism.  Although the introduction and analysis of the channel is the main contribution of the paper, it is natural to ask how our results compare with known properties of ligand-receptor--based signaling systems.  We offer two observations.}

\bl{In Theorem \ref{thm:maxmin} we show that, given a range of possible input concentrations, the optimal use of the channel concentrates the input signal on the extreme values, $x=\L$ and $x=\H$.   This conclusion directly contradicts the common assumption that input signals are ``small''.  The latter assumption has been made in order to approximate biochemical signaling systems with linear time-invariant systems (see e.g.~\cite{ClausznitzerEndres2011BMCSysBiol,PierobonAkyildiz2010IEEE_JSAC}), which are easier to analyze than systems that function at the extremes of their operating range.  The prediction that biological pathways should tend to use binary (alternately large or small, rather than graded) signals is confirmed in many biological systems.  For example, neurotransmitter release in central nervous system synapses is all-or-none, with large transient changes in concentration rather than smoothly graded changes.  The social amoeba \textit{Dictyostelium discoideum} signals in sharply concentrated waves separated by very low signal concentrations \cite{Kessin:2001}.  Our BIND channel (originally motivated by \textit{Dictyostelium's} cyclic AMP receptor) is consistent with this behavior.}

\bl{Our Theorem \ref{thm:main} establishes that feedback does not increase the capacity of the BIND channel.  The \textit{Dictyostelium} amoeba uses cAMP to orient towards other conspecific cells during aggregation of the colony; each amoeba responds to the received cAMP signal by secreting its own discharge of cAMP, which serves to relay the aggregation signal to other amoebas further from the aggregation center.  However, the identity of the cell from which a particular cAMP molecule originated is unknown to the amoeba receiving that molecule.  We are not aware of any mechanism by which the amoeba can regulate its pattern of cAMP secretion taking into account the state of the receptor(s) on other cells.  That is, the \textit{Dictyostelium} amoeba does not, to our knowledge, use feedback to enhance signaling via the cAMP receptor.
However, biological systems are diverse, and the BIND channel reflects only the simplest form of ligand-receptor pathway.  Some signaling systems with more elaborate pathway structure are known to use bidirectional signaling \cite{KullanderKlein2002NatRevMolecCellBiol}, which could be interpreted as a form of feedback.  In \S \ref{ssec:general} we provide an example of a ligand-receptor channel with two binding sites, for which feedback \emph{would} appear to increase the capacity.  Clearly, more elaborate channel models provide fertile ground for further investigation.} 

\subsection{Towards the Capacity of General Signal Transduction Channels} \label{ssec:general}

In this paper, we calculated the capacity of a simple signal transduction channel,
related to the cAMP receptor in {\em Dictyostelium}, and derived many useful properties of
mutual information. 
Our contribution is one of a rapidly growing body of work applying information theory
to biological communication problems. 
Indeed, a natural open problem suggested by our work is to extend Kabanov's continuous time Poisson channel to a family of channels defined by continuous time Markov chains on finite graphs.
Here we consider some features of this generalized problem.

One may consider the input signal to a general ``signal-transduction"  continuous-time Markov channel 
as any physical or biochemical process that varies the transition rate intensities between the vertices of the graph, with the output signal comprising either the transitions themselves or a related counting process on one or more vertices.  Viewed in this way,  the Kabanov-Poisson channel comprises a ``graph'' with a single vertex, with a single counting process instead of a multicomponent marked point process.

Analysis of the capacity for a general $n$-state signal-transduction channel, such as described by \eqref{eq:cts-time-general}, remains an interesting open problem.  In this paper, we considered the case $n=2$, in a sense the simplest generalization of the Poisson channel.  For our two-state signal-transduction channel, the mutual information rates in both the discrete time setting  \eqref{eqn:CIIDLH} and in the continuous time setting \eqref{eq:continuoustimecapacity0} decompose into the product of an information rate conditional on occupying a ``sensitive" state, and the fraction of time the system occupies that state.  However, as we already stated in the introduction, many higher-order Markov models are available for different kinds of receptors, so the 
generalized problem is of significant practical interest.

First, a simple extension of our results in Section \ref{sec:Capacity} gives the mutual information
of a general $n$-state receptor under IID inputs. For receptor states $i$ and $j$, $1 \leq i,j \leq n$, and input concentration $x$, taking discrete levels in $1 \leq x \leq m$,
let $\alpha_{i,j,x}$ represent the transition probability from state $i$ to state $j$ under
input concentration $x$. Let $p$ represent the $m$-dimensional 
vector containing the IID input distribution.
Let $\bar{\alpha}_{i,j} = \sum_{x=1}^m \alpha_{i,j,x} p_x$ represent the average transition
probability from $i$ to $j$. Under an IID input distribution,  
the sequence of receptor states $Y$ forms a regular Markov chain with transition probability matrix
$\mathbf{P}_Y = [\bar{\alpha}_{i,j}]$. If $p_Y$ is the stationary distribution on the receptor states,
given by the normalized eigenvector of $\mathbf{P}_Y$ with eigenvalue 1, 
and recalling $\phi(\cdot)$ from (\ref{eqn:PhiEquation}),
then the mutual information under IID inputs is given by
\begin{equation}\label{eq:MI_IID_general}
	I(X;Y) = \sum_{i = 1}^n p_{Y,i} \sum_{j=1}^n 
	\left(\phi(\bar{\alpha}_{i,j}) - \sum_{x = 1}^m p_x \phi(\alpha_{i,j,x}) \right) .
\end{equation}

However, it is clear that for a Markov channel taking the form of an arbitrary network, it is not generally true that $C_{\text{IID}}=C_{\text{FB}}$, as the following example illustrates. 

Consider a channel with three states arranged in  a chain
\begin{equation}
\begin{array}{ccccc}
&(\alpha_\L\text{ or }\alpha_\H) & & (\epsilon\text{ or }1-\beta) & \\
1 & \rightleftharpoons & 2 & \rightleftharpoons & 3\\
&\beta & & \epsilon &
\end{array},
\end{equation}
where $0<\alpha_\L<\alpha_\H<1$, $0< \beta < 1$, and $0<\epsilon<1-\beta$.  The $1\to 2$ and $2\to 3$ transition probabilities depend on the input (assumed binary for this example) in the same manner as in \S \ref{sec:MarkovInputs}.  That is, we have $\alpha_{1,2,\L}=\alpha_\L$ and $\alpha_{1,2,\H}=\alpha_\H$, and $\alpha_{2,3,\L}=\epsilon$ while $\alpha_{2,3,\H}=1-\beta$.  The other transitions are insensitive to the input value $x$, i.e.~$\alpha_{2,1,x}=\beta$ and $\alpha_{3,2,x}=\epsilon$ independently of $x$.  Hence the transitions out of state 3 do not carry information about the input.  Given the input probabilities $p_\H+p_\L=1$, the transition matrix of the channel state for IID input is
\begin{align}
\nonumber\lefteqn{T_{\text{IID}}=}&\\
& \left(\begin{array}{ccc}1-\bar{\alpha} & \bar{\alpha} & 0 \\
\beta & 1-\beta-(p_\L\epsilon +p_\H(1-\beta)) & p_\L\epsilon +p_\H(1-\beta)\\
0 & \epsilon & 1-\epsilon
\end{array}
\right),
\end{align}
where $\bar{\alpha}=p_\L\alpha_\L + p_\H\alpha_\H$,
and the stationary distribution is 
\begin{align}
p_{Y,1}&=\frac{\beta}{Z_{\text{IID}}},\\
p_{Y,2}&=\frac{\bar{\alpha}}{Z_{\text{IID}}},\\
p_{Y,3}&=\frac{p_\H(1-\beta)\bar{\alpha}}{\epsilon Z_{\text{IID}}},\\
Z_{\text{IID}}&=\bar{\alpha}+\beta+\frac{p_\H(1-\beta)\bar{\alpha}}{\epsilon}
\end{align}
\bl{From \eqref{eq:MI_IID_general} we obtain the mutual information for the three-state channel with IID inputs:
\begin{align}
\nonumber\lefteqn{I_3^{\text{IID}} =} & \\
\nonumber& \frac{1}{Z_\text{IID}}\Big(\beta( \binent(\bar{\alpha})-\left[ p_\H\binent(\alpha_\H)+p_\L\binent(\alpha_\L) \right] )\\
&+\bar{\alpha}(\binent(p_\H(1-\beta)+p_\H\epsilon)-\left[ p_\H\binent(1-\beta)+p_\L\binent(\epsilon) \right] ) \Big).
\end{align}
}%
The mutual information for a given input distribution is reduced, compared to that of the two-state channel, because the channel gets trapped in the long-lived, insensitive state 3, thus reducing the fraction of time spent in the sensitive state 1.

In case the sender is informed of the state of the channel, the sender may arrange to send input $x=\L$ whenever the channel is in state 2, thus reducing the rate at which the channel enters the trap in state 3.  In case the sender adopts this strategy, the transition matrix for the channel state becomes
\begin{equation}
T_{\text{FB}}=\left(\begin{array}{ccc}1-\bar{\alpha} & \bar{\alpha} & 0 \\
\beta & 1-\beta-\epsilon & \epsilon \\
0 & \epsilon & 1-\epsilon
\end{array}
\right),
\end{equation}
and the stationary distribution is 
\begin{equation}
p_{Y,1}=\frac{\beta}{Z_{\text{FB}}},\quad p_{Y,2}=\frac{\bar{\alpha}}{Z_{\text{FB}}},\quad p_{Y,3}=\frac{\bar{\alpha}}{Z_{\text{FB}}},\quad Z_\text{FB}=2\bar{\alpha}+\beta.
\end{equation}
The capacity under this feedback scheme is 
\bl{
\begin{equation}
I_3^\text{FB}=\frac{1}{Z_\text{FB}}(\beta( \binent(\bar{\alpha})-\left[ p_\H\binent(\alpha_\H)+p_\L\binent(\alpha_\L) \right] )).
\end{equation}
}
To compare the mutual information for any choice of input probabilities $p_{\L/\H}$ and parameters $\alpha_{\L/\H},\beta,\epsilon$, consider the ratio of the mutual information under the IID inputs versus the feedback scheme:
\bl{\begin{align}
\lefteqn{\frac{I_3^\text{IID}}{I_3^\text{FB}}=
\frac{\epsilon(2\bar{\alpha}+\beta)}{\epsilon(\bar{\alpha}+\beta)+p_\H(1-\beta)\bar{\alpha}} } & \\
&
\nonumber \cdot\Bigg(1+\\
&\left(\frac{\bar{\alpha}}{\beta}\right)\frac{\binent(p_\H(1-\beta)+p_\H\epsilon)-\left[ p_\H\binent(1-\beta)+p_\L\binent(\epsilon) \right] }{ \binent(\bar{\alpha})-\left[ p_\H\binent(\alpha_\H)+p_\L\binent(\alpha_\L) \right] }  \Bigg)\\
&\to 0,\hspace{1cm}\text{ as }\epsilon\to 0^+.
\end{align}}
That is, the ratio of the mutual information under the IID inputs versus inputs informed by the channel state can be made arbitrarily small, by taking the slow transition rates $\epsilon$ sufficiently small.  This suggests that the feedback capacity and the IID capacity cannot be equal for this simple example. 

The question of the regular capacity for this channel, and channels with arbitrary state graphs, remains an interesting problem for future work.

\appendix

\subsection{Stationary distributions achieve feedback capacity}
\label{app:stationary}


\begin{figure*}[h!]
\normalsize
%
%
\begin{equation}
	\label{eqn:QMatrix}
	\mathbf{Q}_j = 
	\left[ 
		\begin{array}{cccc}
			p_{Y_i|X_{i-1},Y_{i-1}}(\B \given c_1, j) &  
			p_{Y_i|X_{i-1},Y_{i-1}}(\B \given c_2, j) & 
			\ldots &
			p_{Y_i|X_{i-1},Y_{i-1}}(\B \given c_{m}, j) \\
			p_{Y_i|X_{i-1},Y_{i-1}}(\U \given c_1, j) &  
			p_{Y_i|X_{i-1},Y_{i-1}}(\U \given c_2, j) & 
			\ldots &
			p_{Y_i|X_{i-1},Y_{i-1}}(\U \given c_{m}, j)
		\end{array}
	\right] ,
\end{equation}
%
%
\hrulefill
\vspace*{4pt}
\end{figure*}

We start with several definitions.
Assuming that the input distribution is in $\mathcal{P}^*$ (i.e., $Y_1^n$ is a Markov chain),
and recalling (\ref{eqn:StateTransitionProbabilityMatrix}),
let $\hat{\mathbf{P}} = [\hat{P}_{ij}]$ represent a $2 \times 2$ matrix, taking values in $\{0,1\}$, with elements
\begin{equation}
	\hat{P}_{ij} =
	\left\{
		\begin{array}{cl}
			1, & \min_{k \in \{1,2,\ldots,m\}} P_{Y|X=k,ij} > 0 \\
			0, & \mathrm{otherwise} ,
		\end{array}
	\right.
\end{equation}
and for positive integers 
$\ell$, let $\hat{P}_{ij}^\ell$ represent the $i,j$th element of $\hat{\mathbf{P}}^\ell$.
Further, for the $i$th diagonal element of the $\ell$th matrix power $\hat{P}_{ii}^\ell$, 
let $\mathcal{D}_i$ contain the set of integers $\ell$ such that 
$\hat{P}_{ii}^\ell \neq 0$. 
Then: 
\begin{itemize}
	\item $Y_1^n$ is strongly irreducible if, for each pair $i,j$, there exists 
an integer $h > 0$
such that $\hat{P}_{ij}^h \neq 0$; and
	\item If $Y_1^n$ is strongly irreducible, it is also strongly aperiodic if,
	for all $i$, the greatest common divisor of $\mathcal{D}_i$ is 1.
\end{itemize}
These conditions are described in terms of graphs in \cite{che05}, but our description
is equivalent.

Let $\mathbf{Q}_j$ be a $2 \times m$ matrix,
		defined as in (\ref{eqn:QMatrix}),
and let
\begin{equation}
	\label{eqn:IPQ}
	I(p,\mathbf{Q}_i) = I(X_{j-1} ;Y_j \given Y_{j-1} = i)
\end{equation}
(cf. (\ref{eqn:DirectedTermSimplified})). 
For example, 
if $i=\U$, we have
\begin{equation}
	\label{eqn:IPQ1}
	I(p,\mathbf{Q}_\U) = \mathcal{H} ( \bar{\alpha}_p ) 
		- \sum_{i=1}^{m} p_i \mathcal{H}(\alpha_i) .
\end{equation}
We will use the following corollary to Theorem \ref{thm:maxmin}.
\begin{cor}
	\label{cor:maxmin}
	Let $p^\prime = [p_1^\prime, p_2^\prime, \ldots, p_m^\prime]$ represent
	the distribution satisfying
	\begin{equation}
		p^\prime = \arg \max_p I(p,\mathbf{Q}_\U) ,
	\end{equation}
	using $I(p,\mathbf{Q}_\U)$ from (\ref{eqn:IPQ1}).
	Then $p_2^\prime = p_3^\prime = \ldots = p_{m-1}^\prime = 0$.
\end{cor}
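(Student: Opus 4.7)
The plan is to reuse the strict-concavity argument from the proof of Theorem \ref{thm:maxmin}. In fact, the proof here is cleaner, because $I(p,\mathbf{Q}_\U) = \binent(\bar{\alpha}_p) - \sum_{i=1}^m p_i \binent(\alpha_i)$ is just the numerator of the mutual-information rate \eqref{eqn:ClosedFormMI}; the denominator $1+\bar{\alpha}_p/\beta$ from \eqref{eqn:CIID} is absent, so we do not need to worry about keeping it fixed.

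I would proceed by contradiction. Suppose that $p_u^\prime > 0$ for some index $u \in \{2,3,\ldots,m-1\}$, and let $\alpha_u$ denote the corresponding binding probability. Under strict ordering (Definition \ref{defn:StrictlyOrdered}) we have $\alpha_1 < \alpha_u < \alpha_m$, so as in \eqref{eqn:InBetween}--\eqref{eqn:AlphaUDef} I can pick $\pi_1, \pi_m \in (0,1)$ with $\pi_1+\pi_m=1$ and $\pi_1\alpha_1 + \pi_m\alpha_m = \alpha_u$. Define a replacement distribution $q$ by reassigning all the mass at index $u$ to the two extremes, exactly as in \eqref{eqn:QDef1}--\eqref{eqn:QDef2}: $q_1 = p_1^\prime + p_u^\prime\pi_1$, $q_m = p_m^\prime + p_u^\prime\pi_m$, $q_u = 0$, and $q_j = p_j^\prime$ for $j \notin \{1,u,m\}$. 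By construction $\bar{\alpha}_q = \bar{\alpha}_{p^\prime}$, so $\binent(\bar{\alpha}_p)$ is identical under $p^\prime$ and $q$.

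For the remaining term, the same telescoping calculation as in \eqref{eqn:ConcaveDifference1}--\eqref{eqn:ConcaveDifference} gives
\begin{equation*}
\sum_{i=1}^m p_i^\prime \binent(\alpha_i) - \sum_{i=1}^m q_i \binent(\alpha_i) = p_u^\prime \Bigl( \binent(\pi_1\alpha_1 + \pi_m\alpha_m) - \pi_1\binent(\alpha_1) - \pi_m\binent(\alpha_m) \Bigr).
\end{equation*}
Strict concavity of $\binent$, together with $\pi_1,\pi_m \in (0,1)$ and $\alpha_1 < \alpha_m$, makes this quantity strictly positive, hence $I(q,\mathbf{Q}_\U) > I(p^\prime,\mathbf{Q}_\U)$, contradicting the optimality of $p^\prime$. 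I do not anticipate any real obstacle: the argument is essentially a stripped-down version of the proof of Theorem \ref{thm:maxmin}, and the only thing to verify is that the mass-reallocation step leaves $\bar{\alpha}$ invariant while exploiting strict concavity of $\binent$ to boost the negative-entropy term.
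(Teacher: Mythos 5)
Your proposal is correct and is essentially identical to the paper's own proof, which simply observes that $I(p,\mathbf{Q}_\U)$ equals the numerator of (\ref{eqn:CIID}) and that the argument of Theorem \ref{thm:maxmin} used only numerator terms (the construction fixes $\bar{\alpha}_p$, so both the $\binent(\bar{\alpha}_p)$ term and the denominator are unchanged, and the strict-concavity gain appears in the summation term alone). You have merely written out in full the mass-reallocation argument that the paper invokes by reference, so there is nothing to add.
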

\begin{proof}
	The quantity in 
	(\ref{eqn:IPQ1}) is equal to the numerator of (\ref{eqn:CIID}).
	To prove the theorem, we relied only on terms in the numerator, so the same
	argument applies to this corollary.
\end{proof}

\begin{lemma}
	If the parameters are strictly ordered (Definition \ref{defn:StrictlyOrdered}),
	then the conditions of \cite[Thm. 4]{che05} are satisfied, namely:
	\begin{enumerate}
		\item $Y^n$
		is strongly irreducible and strongly aperiodic.
		\item 
		(Reiterating \cite[Defn. 6]{che05}) for $j \in \{\U,\B\}$,  for the set of possible input distributions in $\mathcal{P}^*$, 
and for all $j \in \{\U,\B\}$, 
there exists a subset $\tilde{\mathcal{P}}^*$ satisfying%
\begin{enumerate}
	\item $\{\mathbf{Q}_j p : p \in \mathcal{P}^*\} = \{\mathbf{Q}_j p : p \in \tilde{\mathcal{P}}^*\}$. 
	\item For any $q \in \{\mathbf{Q}_j p : p \in \mathcal{P}^*]\}$,
	\begin{eqnarray}
	\label{eqn:Defn6Part2}
		\left\{ \arg \max_{p:p\in\mathcal{P}^*,\mathbf{Q}_j p=q} I(p,\mathbf{Q}_j) \right\} 
		\nonumber & & \\
		\cap 
		\left\{ \arg \max_{p:p\in\tilde{\mathcal{P}}^*,\mathbf{Q}_j p=q} I(p,\mathbf{Q}_j) \right\} 
		& \neq & \emptyset
	\end{eqnarray}
	\item There exists a positive constant $\lambda$ such that
	\begin{equation}
		\frac{\partial I(p,\mathbf{Q}_j)}{\partial \ell} - \frac{\partial I(q,\mathbf{Q}_j)}{\partial \ell}
		\leq
		- \lambda || p - q ||
	\end{equation}
	for any nonidentical $p,q \in \tilde{\mathcal{P}}^*$, where $\ell$ is in the direction from 
	$q$ to $p$, and the norm is the Euclidean vector norm.
\end{enumerate}
\end{enumerate}
\end{lemma}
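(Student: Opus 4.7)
The plan is to verify the two conditions in turn, exploiting the strict ordering hypothesis to reduce everything to elementary facts about the binary entropy function $\binent$. For the irreducibility/aperiodicity condition, I would observe that under strict ordering every entry of (\ref{eqn:StateTransitionProbabilityMatrix}) lies in $(0,1)$, so $\min_k P_{Y|X=k,ij}>0$ for every pair $(i,j)\in\{\U,\B\}^2$ and $\hat{\mathbf{P}}$ is the $2\times 2$ all-ones matrix. Then $\hat{P}_{ij}^h>0$ for every $h\ge 1$, giving strong irreducibility; and the two self-loops force $1\in\mathcal{D}_i$ for $i\in\{\U,\B\}$, hence $\gcd\mathcal{D}_i=1$ and strong aperiodicity.

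For Definition 6 I would let $\tilde{\mathcal{P}}^*$ depend on $j$. When $j=\B$ the unbinding transition is input-independent, so $\mathbf{Q}_\B p=(\beta,1-\beta)^\top$ for every $p\in\mathcal{P}^*$ and $I(p,\mathbf{Q}_\B)=0$ identically; taking $\tilde{\mathcal{P}}^*$ to be any singleton makes (a) and (b) trivial and renders (c) vacuous, since the condition is only asserted for nonidentical pairs. When $j=\U$, take $\tilde{\mathcal{P}}^*$ to be the one-parameter family of distributions supported on $\{c_\L,c_\H\}$, parameterized by $p_\H\in[0,1]$. Condition (a) holds because $\mathbf{Q}_\U p=(\bar{\alpha}_p,1-\bar{\alpha}_p)^\top$ sweeps out the same interval $\bar{\alpha}_p\in[\alpha_\L,\alpha_\H]$ on either set. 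For (b), maximizing $I(p,\mathbf{Q}_\U)$ subject to $\bar{\alpha}_p=\bar{\alpha}$ is equivalent to minimizing $\sum_i p_i\binent(\alpha_i)$ under the linear constraints $\sum_i p_i\alpha_i=\bar{\alpha}$ and $\sum_i p_i=1$; the same mass-shifting argument used to prove Theorem \ref{thm:maxmin} then forces any optimizer to be supported on $\{c_\L,c_\H\}$, placing it in $\tilde{\mathcal{P}}^*$.

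Condition (c) at $j=\U$ is the technical crux, and I expect it to be the main place where care is needed. Along $\tilde{\mathcal{P}}^*$, $I$ is a smooth function of $p_\H$ with $I''(p_\H)=(\alpha_\H-\alpha_\L)^2\binent''(\bar{\alpha}_p)$; strict ordering keeps $\bar{\alpha}_p$ inside $[\alpha_\L,\alpha_\H]\subset(0,1)$, where $\binent''$ is bounded above by some negative constant $-K$, giving uniform strong concavity. To translate into Chen and Berger's directional-derivative formulation, for $p,q\in\tilde{\mathcal{P}}^*$ the displacement $p-q$ has only two nonzero coordinates (at $c_\L$ and $c_\H$) of opposite sign, so $\|p-q\|=\sqrt{2}\,|p_\H-q_\H|$. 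Writing $\partial I/\partial\ell=\sum_i \ell_i(\alpha_i\binent'(\bar{\alpha}_p)-\binent(\alpha_i))$ with $\ell=(p-q)/\|p-q\|$ and invoking the mean value theorem on $\binent'$ at some $\xi\in[\alpha_\L,\alpha_\H]$ yields
\begin{equation*}
\frac{\partial I(p,\mathbf{Q}_\U)}{\partial\ell}-\frac{\partial I(q,\mathbf{Q}_\U)}{\partial\ell}=\tfrac{1}{2}(\alpha_\H-\alpha_\L)^2\binent''(\xi)\|p-q\|\le -\lambda\|p-q\|
\end{equation*}
with $\lambda=K(\alpha_\H-\alpha_\L)^2/2>0$.

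The only real obstacle is the directional-derivative bookkeeping for (c); the singleton trick cleanly disposes of $j=\B$, and (a), (b) at $j=\U$ essentially reuse the concavity argument already developed in Theorem \ref{thm:maxmin}.
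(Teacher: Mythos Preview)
Your proposal is correct and follows essentially the same architecture as the paper's proof: the all-ones $\hat{\mathbf{P}}$ for part~1, the singleton $\tilde{\mathcal{P}}^*$ at $j=\B$, and the two-point--support family $\tilde{\mathcal{P}}^*=\{p:p_2=\cdots=p_{m-1}=0\}$ at $j=\U$, with (b) handled via the mass-shifting argument of Theorem~\ref{thm:maxmin}. The only divergence is in condition~(c) at $j=\U$: the paper reduces the directional-derivative difference to the explicit inequality $(\bar{\alpha}_p-\bar{\alpha}_q)\log\frac{\bar{\alpha}_q(1-\bar{\alpha}_p)}{\bar{\alpha}_p(1-\bar{\alpha}_q)}<0$ and stops at strict negativity, whereas your mean-value-theorem route through $\binent''$ actually exhibits the constant $\lambda=\tfrac{1}{2}K(\alpha_\H-\alpha_\L)^2$, which is a small but genuine gain in rigor (minor typo: $\mathbf{Q}_\B p=(1-\beta,\beta)^\top$ per the row ordering in~(\ref{eqn:QMatrix}), but this is immaterial).
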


\begin{proof}
	To prove the first part of the lemma, if 
	the parameters are strictly ordered, then
	$\hat{\mathbf{P}}$ is an all-one matrix, so
	$Y^n$ is strongly irreducible (with $h = 1$); further, since the positive powers of an all-one
	matrix can never have zero elements, $\mathcal{D}_i$ contains all positive integers from 1 to $n$,
	whose greatest common divisor is 1, so $Y^n$ is strongly aperiodic.
	
	To prove the second part of the lemma,
	we first show that the definition is satisfied for $\mathbf{Q}_\B$, given by
\begin{equation}
	\label{eqn:QB}
	\mathbf{Q}_\B = 
	\left[ 
		\begin{array}{cccc} 
			1-\beta & 1-\beta & \ldots & 1-\beta \\ 
			\beta & \beta & \ldots & \beta
		\end{array}
	\right] .
\end{equation}
We choose the subset 
$\tilde{\mathcal{P}}^*$ to consist of a single point $p \in \mathcal{P}^*$ (it can be any point, as all
points give the same result).
The columns of $\mathbf{Q}_\B$ are
identical, since the output is not dependent
on the input in state $\B$.
Then for every $p\in\mathcal{P}^*$, 
\begin{eqnarray}
	\mathbf{Q}_\B p & = & 
		\left[ \begin{array}{ccc} 1-\beta & \ldots & 1-\beta \\ \beta & \ldots & \beta \end{array}\right] 
		\left[ \begin{array}{c} p_{1 |\B} \\ p_{2 |\B} \\ \vdots \\ p_{m|\B} \end{array}\right] \\
	& = & \left[ \begin{array}{c} (1-\beta) \sum_{j=1}^{m} p_{j |\B}\\ 
		\beta \sum_{j=1}^{m} p_{j |\B} \end{array}\right] \\
	& = & \left[ \begin{array}{c} (1-\beta) \\ \beta \end{array}\right] .
\end{eqnarray}
This is also true of the single point in $\tilde{\mathcal{P}}^*$,
so condition (a) is satisfied.
Similarly, by inspection of (\ref{eqn:QB}), when $Y_0 = \B$, 
the output $Y_1$ is not dependent on the input $X_1$, so 
$I(p,\mathbf{Q}_\B) = 0$ for all $p \in \mathcal{P}$. 
Since all $p \in \mathcal{P}^*$ ``maximize'' $I(p,\mathbf{Q}_\B)$
and have identical values of $\mathbf{Q}p$ (including the single point in $\tilde{\mathcal{P}}^*$),
then the single point $p \in \tilde{\mathcal{P}}^*$ is always in both sets, and the intersection
(\ref{eqn:Defn6Part2})
is nonempty; so condition (b) is satisfied. There is only one point in 
$\tilde{\mathcal{P}}^*$, so there is no pair of nonidentical points, and 
condition (c) is satisfied trivially.

Now we show that the conditions are satisfied for $\mathbf{Q}_\U$, given by
\begin{equation}
	\mathbf{Q}_\U = 
	\left[ 
		\begin{array}{cccc} 
			\alpha_1 & \alpha_2 & \ldots & \alpha_m \\ 
			1-\alpha_1 & 1-\alpha_2 & \ldots & 1-\alpha_m
		\end{array}
	\right] .
\end{equation}
%
Since the parameters are strictly ordered, $\mathrm{rank}(\mathbf{Q}_\U) = 2$. (The lemma is satisfied if $\mathrm{rank}(\mathbf{Q}_\U) = 1$, by the same argument we gave above, though in this case
$\alpha_1 = \ldots = \alpha_m$ and the capacity is zero.)
Now we have
\begin{equation}
	\mathbf{Q}_\U p =
	\left[
		\begin{array}{c}
			\sum_{j=1}^m p_i \alpha_i \\
			1 - \sum_{j=1}^m p_i \alpha_i 
		\end{array}
	\right]
	=
	\left[
		\begin{array}{c}
			\bar{\alpha}_p \\
			1 - \bar{\alpha}_p 
		\end{array}
	\right].
\end{equation}
Since the parameters are strictly ordered, $\bar{\alpha}_p$ can take any value on the interval $[\alpha_1,\alpha_m]$.

Let $\tilde{\mathcal{P}}^*$ represent the set of input distributions $p$ from Corollary \ref{cor:maxmin},
with $p_2 = p_3 = \ldots = p_{m-1} = 0$. For $p \in \tilde{\mathcal{P}}^*$,
\begin{equation}
	\mathbf{Q}_\U p =
	\left[
		\begin{array}{c}
			p_1 \alpha_1 + p_m \alpha_m \\
			1 - p_1 \alpha_1 - p_m \alpha_m 
		\end{array}
	\right] ,
\end{equation}
and $p_1 \alpha_1 + p_m \alpha_m$ can take any value on the interval $[\alpha_1,\alpha_m]$.
Therefore, condition (a) is satisfied.

From Corollary \ref{cor:maxmin}, all distributions $p$ maximizing 
$I(p,\mathbf{Q}_i)$ have $p_2 = p_3 = \ldots = p_{m-1} = 0$. 
Thus, all maximizing distributions in $\mathcal{P}^*$ are also in $\tilde{\mathcal{P}}^*$, and condition
(b) is satisfied.

Finally, by the definition of the directional derivative, condition (c) is equivalent to
\begin{equation}
	\label{eqn:DirectionalDerivative1}
	(p - q) \cdot \Big( \nabla_p I(p,\mathbf{Q}_\U) - \nabla_q I(q,\mathbf{Q}_\U) \Big) < 0 ,
\end{equation}
where $\cdot$ represents vector dot product. 
Inequality (\ref{eqn:DirectionalDerivative1}) reduces to
\begin{equation}
	\frac{1}{\mathrm{log_e}\: 2} (\bar{\alpha}_p - \bar{\alpha}_q) \log 
	\frac{
		\bar{\alpha}_q - \bar{\alpha}_p\bar{\alpha}_q
	}
	{
		\bar{\alpha}_p - \bar{\alpha}_p\bar{\alpha}_q
	}
	 < 0
\end{equation}
By inspection, this inequality is satisfied as long as $\bar{\alpha}_p \neq \bar{\alpha}_q$. To check
when this is satisfied
in the subset $\tilde{P}^*$, we can write
\begin{align}
	\label{eqn:ConditionCSatisfied}
	\bar{\alpha}_q - \bar{\alpha}_p &= 
	\alpha_1 q_1 + \alpha_m q_m - \alpha_1 p_1 - \alpha_m p_m \\
	&= \alpha_1 q_1  + \alpha_m (1-q_1) - \alpha_1 p_1 - \alpha_m (1-p_1)\\
	&= (q_1 - p_1) (\alpha_1 - \alpha_m) .
\end{align}
where (\ref{eqn:ConditionCSatisfied}) follows from the definition of $\tilde{P}^*$.
By assumption, $\alpha_m \neq \alpha_1$. Thus, $\bar{\alpha}_q \neq \bar{\alpha}_p$ so
long as $q_1 \neq p_1$, i.e., for any distinct points in $\tilde{P}^*$. Thus, condition (c) is 
satisfied, and the lemma follows.
\end{proof}
Closely related 
results were given in the (unfortunately unpublished) \cite{yin-unpublished}, as well as stronger results for all possible binary-input, binary-output, unit-memory Markov channels.

\subsection{Entropy Rates via the Sum-Product Algorithm}
\label{sec:journalpaper-appendix-sum-product}

In this appendix, we briefly explain how to use the sum-product algorithm
\cite{KschischangFreyLoeliger2001IEEE}, both to calculate bounds on mutual information 
and to perform the 
Monte Carlo simulations that were discussed in Section \ref{sec:MarkovInputs}.

The channel is specified by the conditional probabilities $p_{Y_{i+1}|X_i,Y_i}(y_{i+1}|x_i,y_{i})$, with a Markov input process governed by transition probabilities $p_{X_{i+1}|X_i}(x_{i+1}|x_{i})$.  
%
In order to approximately calculate the mutual information rate, 
\begin{align}
\mathcal{I}(X,Y) 
&= \lim_{n\to\infty}\left(H(Y_n|Y_0^{n-1})-H(Y_n|X_0^{n-1},Y_0^{n-1})\right)
\end{align}
the second term reduces (in the case of our Markov channel) to $H(Y_n|X_0^{n-1},Y_0^{n-1})=H(Y_n|X_{n-1},Y_{n-1})$, which is available in closed form.
Thus,
we need a way to estimate the first term, $H(Y_n|Y_0^{n-1})$, which requires the
calculation of two quantities: $p_{Y_k | Y_0^{k-1}}(y_k \given y_0^{k-1})$
and $p_{Y_k | X_0,Y_0^{k-1}}(y_k \given x_0,y_0^{k-1})$,
for various values of $k$.

%

Calculation of $p_{Y_k|Y_0^{k-1}}\left(y_k|y_0^{k-1}\right)$ can be accomplished efficiently using the sum-product algorithm.
By defining a sequence of functions $\varphi_i(x_i)$, which act as ``messages" propagating along the factor graph, one obtains a recursive algorithm:
\begin{eqnarray}
\label{eqn:SumProduct1}
\nonumber\lefteqn{\varphi_0(x_0,y_0) =}&&\\
& &p_{X_0,Y_0}(x_0,y_0)\\
\label{eqn:SumProduct2}
\nonumber\lefteqn{\varphi_i(x_i,y_i) =}&&\\
\nonumber &&\sum_{x_{i-1}}p_{Y_i|X_{i-1},Y_{i-1}}(y_i|x_{i-1},y_{i-1})p_{X_i|X_{i-1}}(x_i | x_{i-1})\\
&&\:\:\cdot\varphi_{i-1}(x_{i-1},y_{i-1}),\mbox{ for }1<i\le k \\
\label{eqn:SumProduct3}
\nonumber\lefteqn{p_{Y_0^k}\left(y_0^k\right) =}&&\\
\nonumber&&\sum_{x_{k-1}}p_{Y_k|X_{k-1},Y_{k-1}}(y_k|x_{k-1},y_{k-1})\varphi_{k-1}(x_{k-1},y_{k-1}) ,\\
\end{eqnarray} 
where the probability on the right side of (\ref{eqn:SumProduct1}) is the steady-state probability
for the Markov process $(x_i,y_i)$.
This well-known algorithm arises from the decomposition of the probability $p_{Y_0^k}(y_0^k)$ into a sum of products:
\begin{eqnarray*}
\nonumber\lefteqn{p_{Y_0^k}(y_0^k)}&&\\
&=&\sum_{x_0^k} p_{X_0^k,Y_0^k}(x_0^k,y_0^k)\\
\nonumber &=&\sum_{x_0^k}p_{X_0,Y_0}(x_0,y_0)\\
&&\cdot\Bigg[\prod_{i=1}^k 
p_{Y_i|X_{i-1},Y_{i-1}}(y_i|x_{i-1},y_{i-1})
p_{X_i|X_{i-1}}(x_i | x_{i-1})
\Bigg] ,
\end{eqnarray*}
valid for our channel driven by a Markov input source.
Finally, we obtain $p_{Y_k|Y_0^{k-1}}\left(y_k|y_0^{k-1}\right)$ by
\begin{equation}
	\label{eqn:SumProductFinal}
	p_{Y_k|Y_0^{k-1}}\left(y_k|y_0^{k-1}\right) =
	\frac{p_{Y_0^k}(y_0^k)}{\sum_{y_k}p_{Y_0^k}(y_0^k)}
\end{equation}

Calculation of $p_{Y_k|X_0,Y_0^{k-1}}\left(y_k|x_0,y_0^{k-1}\right)$
proceeds similarly, except for $k = 1$, (\ref{eqn:SumProduct2}) is replaced by
\begin{equation}
	\varphi_1(x_1,y_1) = p_{Y_1|X_{0},Y_{0}}(y_1|x_{0},y_{0})p_{X_1|X_{0}}(x_1 | x_{0})\varphi(x_{0},y_{0}),
\end{equation}
i.e., we do not sum over $x_0$. The final result in (\ref{eqn:SumProduct3}) 
is then the joint probability $p_{Y_0^k,X_0}(y_0^k,x_0)$.
Finally, we obtain
\begin{equation}
	\label{eqn:SumProductFinalX0}
	p_{Y_k|X_0,Y_0^{k-1}}\left(y_k|x_0,y_0^{k-1}\right) =
	\frac{p_{Y_0^k}(y_0^k,x_0)}{\sum_{y_k}p_{Y_0^k}(y_0^k,x_0)} .
\end{equation}

The upper and lower bounds from Section \ref{sec:MarkovInputs} are obtained
by substituting (\ref{eqn:SumProductFinal})
into (\ref{eq:H-upper-bound-general}), and (\ref{eqn:SumProductFinalX0}) into
(\ref{eq:H-lower-bound-general}), respectively.

To calculate a Monte Carlo
estimate of the information rate, we obtain an estimate of
\begin{equation}
\label{eqn:MonteCarlo}
H\left(Y_k|Y_0^{k-1}\right)=\E\left[\log\left( \frac{1}{p_{Y_k|Y_0^{k-1}}\left(y_k|y_0^{k-1}\right)}  \right) \right]
\end{equation}
for sufficiently large $k$. 
Here we
generate sample sequences $y_0^k$ with the correct distribution,
calculate
$p_{Y_k|Y_0^{k-1}}\left(y_k|y_0^{k-1}\right)$ 
using (\ref{eqn:SumProductFinal}), and take the sample mean
to obtain the term under the expectation
in (\ref{eqn:MonteCarlo}).

\subsection{Critical point of the continuous time information rate}
\label{app:continuous}
In \S\ref{sssec:criticalpoint} we consider the information rate as the time step $\epsilon$ goes to zero.  We assume the mutual information rate has an interior maximum as a function of the high-state probability $x\equiv p_\H$ \bl{(recall that mutual information is concave with respect to the input distribution)}. Here we show that this maximum is unique.  Setting the derivative of the mutual information rate \eqref{eqn:MutualInfo} equal to zero gives a necessary and sufficient condition for the maximum, Equation \eqref{eq:interiormax}.
We may simplify \eqref{eq:interiormax} by introducing $\bar{\alpha}=x \alpha_\H + (1-x)\alpha_\L$, which gives
\begin{eqnarray*}
\nonumber\lefteqn{0 =}&&\\
&&-\alpha_\H \beta \epsilon  \log \left(\frac{1}{1-\epsilon\bar{\alpha}}\right)
   +\alpha_\L \beta \epsilon  \log \left(\frac{1}{1-\epsilon\bar{\alpha}}\right)\\
   &&+\beta \epsilon  (\alpha_\H-\alpha_\L) \log
   \left(\frac{1}{\epsilon\bar{\alpha}}\right)
   -\alpha_\H \epsilon  (\alpha_\L+\beta) \log \left(\frac{1}{\alpha_\H \epsilon
   }\right)\\
   &&+\alpha_\L \epsilon  (\alpha_\H+\beta) \log \left(\frac{1}{\alpha_\L \epsilon
   }\right)-\alpha_\H \log \left(\frac{1}{1-\epsilon\bar{\alpha}}\right)\\
   &&+\alpha_\L \log \left(\frac{1}{1-\epsilon\bar{\alpha} }\right)
   -\alpha_\L \log \left(\frac{1}{1-\alpha_\H \epsilon}\right)\\
 &&  +\alpha_\H \alpha_\L \epsilon  \log \left(\frac{1}{1-\alpha_\H \epsilon
   }\right)
   +\alpha_\H \log \left(\frac{1}{1-\alpha_\L \epsilon }\right)\\
 &&  -\alpha_\H
   \alpha_\L \epsilon  \log \left(\frac{1}{1-\alpha_\L \epsilon }\right)
   -\beta \log
   \left(\frac{1}{1-\alpha_\H \epsilon }\right)\\
   &&+\alpha_\H \beta \epsilon  \log
   \left(\frac{1}{1-\alpha_\H \epsilon }\right)
   +\beta \log \left(\frac{1}{1-\alpha_\L
   \epsilon }\right)\\
&&  -\alpha_\L \beta \epsilon  \log \left(\frac{1}{1-\alpha_\L \epsilon
   }\right).
\end{eqnarray*}
Inverting,
\begin{eqnarray*}\nonumber\lefteqn{0 =}&&\\
&&\alpha_\H \beta \epsilon  \log \left(1-\epsilon\bar{\alpha}\right)
   -\alpha_\L \beta \epsilon  \log \left(1-\epsilon\bar{\alpha}\right)\\
   &&-\beta \epsilon  (\alpha_\H-\alpha_\L) \log
   \left(\epsilon\bar{\alpha}\right)
   +\alpha_\H \epsilon  (\alpha_\L+\beta) \log \left(\alpha_\H \epsilon\right)\\
   &&-\alpha_\L \epsilon  (\alpha_\H+\beta) \log \left(\alpha_\L \epsilon\right)
   +\alpha_\H \log \left(1-\epsilon\bar{\alpha}\right)\\
   &&-\alpha_\L \log \left(1-\epsilon\bar{\alpha} \right)
   +\alpha_\L \log \left(1-\alpha_\H \epsilon\right)\\
&&   -\alpha_\H \alpha_\L \epsilon  \log \left(1-\alpha_\H \epsilon\right)
   -\alpha_\H \log \left(1-\alpha_\L \epsilon \right)\\
&&   +\alpha_\H   \alpha_\L \epsilon  \log \left(1-\alpha_\L \epsilon \right)
   +\beta \log   \left(1-\alpha_\H \epsilon \right)\\
   &&-\alpha_\H \beta \epsilon  \log
   \left(1-\alpha_\H \epsilon \right)
   -\beta \log \left(1-\alpha_\L\epsilon \right)\\
&&  +\alpha_\L \beta \epsilon  \log \left(1-\alpha_\L \epsilon\right).
\end{eqnarray*}
Gathering like terms,
\begin{eqnarray*}0&=&
\alpha_\H \beta \epsilon  \log \left(1-\epsilon\bar{\alpha}\right)
   -\alpha_\L \beta \epsilon  \log \left(1-\epsilon\bar{\alpha}\right)\\
   &&+\alpha_\H \log \left(1-\epsilon\bar{\alpha}\right)-\alpha_\L \log \left(1-\epsilon\bar{\alpha} \right)\\
   &&-\beta \epsilon  (\alpha_\H-\alpha_\L) \log
   \left(\epsilon\bar{\alpha}\right)\\
   &&+\alpha_\H \epsilon  (\alpha_\L+\beta) \log \left(\alpha_\H \epsilon\right)\\
   &&-\alpha_\L \epsilon  (\alpha_\H+\beta) \log \left(\alpha_\L \epsilon\right)
   \\
   &&
   +\alpha_\L \log \left(1-\alpha_\H \epsilon\right)
   -\alpha_\H \alpha_\L \epsilon  \log \left(1-\alpha_\H \epsilon\right)\\
   &&+\beta \log   \left(1-\alpha_\H \epsilon \right)-\alpha_\H \beta \epsilon  \log
   \left(1-\alpha_\H \epsilon \right)\\
   &&-\alpha_\H \log \left(1-\alpha_\L \epsilon \right)
   +\alpha_\H   \alpha_\L \epsilon  \log \left(1-\alpha_\L \epsilon \right)\\
&&   -\beta \log \left(1-\alpha_\L\epsilon \right)
  +\alpha_\L \beta \epsilon  \log \left(1-\alpha_\L \epsilon\right)\\
  &=&
\left(  \alpha_\H \beta \epsilon  \log 
   -\alpha_\L \beta \epsilon +\alpha_\H -\alpha_\L \right)\log \left(1-\epsilon\bar{\alpha} \right)\\
&&   -\beta \epsilon  (\alpha_\H-\alpha_\L) \log
   \left(\epsilon\bar{\alpha}\right)\\
   &&+\alpha_\H \epsilon  (\alpha_\L+\beta) \log \left(\alpha_\H \epsilon\right)-\alpha_\L \epsilon  (\alpha_\H+\beta) \log \left(\alpha_\L \epsilon\right)
   \\
   &&
   +\left(\alpha_\L 
   -\alpha_\H \alpha_\L \epsilon +\beta -\alpha_\H \beta \epsilon \right) \log
   \left(1-\alpha_\H \epsilon \right)\\
   &&+\left(-\alpha_\H   +\alpha_\H   \alpha_\L \epsilon    -\beta  +\alpha_\L \beta \epsilon\right)  \log \left(1-\alpha_\L \epsilon\right).
\end{eqnarray*}
Only the terms involving $\bar{\alpha}$ depend on $x$.  In order for equality to hold as $\epsilon\to 0^+$, we require the value $x(\epsilon)$ for which we have
\begin{align}
\nonumber\lefteqn{f(x,\epsilon)}&\\
\nonumber=&\left(  \alpha_\H \beta \epsilon   
   -\alpha_\L \beta \epsilon +\alpha_\H -\alpha_\L \right)\log \left(1-\epsilon\bar{\alpha} \right)
   \\ & -\beta \epsilon  (\alpha_\H-\alpha_\L) \log
   \left(\epsilon\bar{\alpha}\right)\\ \nonumber
  =
 &-\left( \alpha_\H \epsilon  (\alpha_\L+\beta) \log \left(\alpha_\H \epsilon\right)-\alpha_\L \epsilon  (\alpha_\H+\beta) \log \left(\alpha_\L \epsilon\right)\right.   \\
   &+\left(\alpha_\L 
   -\alpha_\H \alpha_\L \epsilon +\beta -\alpha_\H \beta \epsilon \right) \log
   \left(1-\alpha_\H \epsilon \right)\nonumber\\
&   \left.+\left(-\alpha_\H   +\alpha_\H   \alpha_\L \epsilon    -\beta  +\alpha_\L \beta \epsilon\right)  \log \left(1-\alpha_\L \epsilon\right)  
    \right)\\
    =&g(\epsilon).
\end{align}
Expanding both sides in orders of $\epsilon$, and using the expansion $\log(1+u)=u-u^2/2+O(u^3), $ as $\epsilon\to 0^+$, we have:
\begin{eqnarray*}
\lefteqn{f(x,\epsilon) =} &&\\
&&\left\{(\alpha_\H-\alpha_\L)\beta\epsilon+(\alpha_\H-\alpha_\L)\right\}\left\{-\epsilon\bar{\alpha}-\frac{\epsilon^2\bar{\alpha}^2}{2}+O\left(\epsilon^3\right)  \right\}\\
&&
-\beta\epsilon(\alpha_\H-\alpha_\L)\left\{\log\epsilon + \log\bar{\alpha}\right\}
\end{eqnarray*}
That is to say, we have the regular perturbation expansion
\begin{eqnarray*}
f(x,\epsilon)&=&\epsilon\log(\epsilon) f_0(x) + \epsilon f_1(x) + \epsilon^2 f_2(x) + O(\epsilon^3) ,
\end{eqnarray*}
as $\epsilon\to 0^+$, with
\begin{eqnarray*}
f_0(x)&=&-\beta(\alpha_\H-\alpha_\L)\\
f_1(x)&=&-(\alpha_\H-\alpha_\L)\bar{\alpha}(x) - \beta(\alpha_\H-\alpha_\L)\log(\bar{\alpha}(x))\\
f_2(x)&=&-\bar{\alpha}(x)(\alpha_\H-\alpha_\L)\beta - \frac{\bar{\alpha}(x)^2}{2}(\alpha_\H-\alpha_\L).
\end{eqnarray*}
Note that $f_0$ does not, in fact, depend on $x$.
For the right hand side we have:
\begin{eqnarray*}
\lefteqn{g(\epsilon)} &&\\
&=& 
- \alpha_\H \epsilon  (\alpha_\L+\beta) \log \left(\alpha_\H \epsilon\right)
+ \alpha_\L \epsilon  (\alpha_\H+\beta) \log \left(\alpha_\L \epsilon\right)   \\
   &&
   -\left(\alpha_\L   -\alpha_\H \alpha_\L \epsilon +\beta -\alpha_\H \beta \epsilon \right) \log   \left(1-\alpha_\H \epsilon \right)\\
&&   -\left(-\alpha_\H   +\alpha_\H   \alpha_\L \epsilon    -\beta  +\alpha_\L \beta \epsilon\right)  \log \left(1-\alpha_\L \epsilon\right)  \\
   &=&
   - \alpha_\H \epsilon  (\alpha_\L+\beta) \log \left(\alpha_\H \epsilon\right)
+ \alpha_\L \epsilon  (\alpha_\H+\beta) \log \left(\alpha_\L \epsilon\right)   \\
   &&
+   \left(-\alpha_\L   +\alpha_\H \alpha_\L \epsilon -\beta +\alpha_\H \beta \epsilon \right) \log   \left(1-\alpha_\H \epsilon \right)\\
&&  + \left(\alpha_\H   -\alpha_\H   \alpha_\L \epsilon    +\beta  -\alpha_\L \beta \epsilon\right)  \log \left(1-\alpha_\L \epsilon\right)  \\
&=&
   - \alpha_\H \epsilon  (\alpha_\L+\beta) \log \left(\alpha_\H \epsilon\right)\\
&&+ \alpha_\L \epsilon  (\alpha_\H+\beta) \log \left(\alpha_\L \epsilon\right)  \\
&&  + \left(-\alpha_\L   +\alpha_\H \alpha_\L \epsilon -\beta +\alpha_\H \beta \epsilon \right) \log   \left(1-\alpha_\H \epsilon \right)\\
&& +  \left(\alpha_\H   -\alpha_\H   \alpha_\L \epsilon    +\beta  -\alpha_\L \beta \epsilon\right)  \log \left(1-\alpha_\L \epsilon\right)  \\
&=&
   - \alpha_\H   (\alpha_\L+\beta) \log \left(\alpha_\H \right)\epsilon
   - \alpha_\H   (\alpha_\L+\beta) \epsilon\log \left( \epsilon\right)\\
&&
+ \alpha_\L   (\alpha_\H+\beta) \log \left(\alpha_\L \right)\epsilon
+ \alpha_\L   (\alpha_\H+\beta) \epsilon\log \left( \epsilon\right)  \\
&&  
+ \left(-\alpha_\L   -\beta+(\alpha_\H \alpha_\L   +\alpha_\H \beta) \epsilon \right) 
\left\{-\alpha_\H \epsilon - \frac{\alpha_\H^2\epsilon^2}{2} + O(\epsilon^3) \right\} \\ 
&& +  \left(\alpha_\H  +\beta  -(\alpha_\H   \alpha_\L      + \alpha_\L \beta )\epsilon\right)  
\left\{-\alpha_\L \epsilon - \frac{\alpha_\L^2\epsilon^2}{2} + O(\epsilon^3) \right\},
\\
\end{eqnarray*}
as $\epsilon \rightarrow 0^+$.
Therefore, as $\epsilon\to 0^+$, we have 
\begin{eqnarray*}
g(\epsilon)&=&\epsilon\log(\epsilon) g_0 + \epsilon g_1 + \epsilon^2 g_2 + O(\epsilon^3),\text{ as }\epsilon\to 0^+,\text{ with}\\
g_0&=& -\alpha_\H(\alpha_\L+\beta) +\alpha_\L(\alpha_\H+\beta)      \\
g_1&=&  - \alpha_\H   (\alpha_\L+\beta) \log \left(\alpha_\H \right) 
               + \alpha_\L   (\alpha_\H+\beta) \log \left(\alpha_\L \right)\\&&    
               - \alpha_\H (-\alpha_\L   -\beta) - \alpha_\L(\alpha_\H  +\beta)\\
g_2&=&  -\alpha_\H(\alpha_\H \alpha_\L   +\alpha_\H \beta)
               +\alpha_\L(\alpha_\H   \alpha_\L      + \alpha_\L \beta ) \\&&
               +(-\alpha_\L-\beta)\left(-\frac{\alpha_\H^2}{2}\right)
               +(\alpha_\H+\beta)\left(-\frac{\alpha_\L^2}{2}\right).
\end{eqnarray*}
Comparing the terms of order $\epsilon\log\epsilon$, we see that $f_0=-\beta(\alpha_\H-\alpha_\L)=g_0$ holds independently of $x$.

Moving to the $O(\epsilon)$ terms, we require $x\in(0,1)$ for which $f_1(x)=g_1$.  That is, we require that 
\begin{eqnarray*}
f_1(x)&=&-(\alpha_\H-\alpha_\L)\bar{\alpha}(x) - \beta(\alpha_\H-\alpha_\L)\log(\bar{\alpha}(x))\\
&=& - \alpha_\H   (\alpha_\L+\beta) \log \left(\alpha_\H \right) 
               + \alpha_\L   (\alpha_\H+\beta) \log \left(\alpha_\L \right)\\&&    
               - \alpha_\H (-\alpha_\L   -\beta) - \alpha_\L(\alpha_\H  +\beta)\\
&=&g_1.
\end{eqnarray*}
If we introduce the function $\psi(x)=\bar{\alpha}(x)+\beta(1+\log(\bar{\alpha}(x)))$,
and a constant
\begin{equation}
\mathcal{G}=\frac{ \alpha_\H   (\alpha_\L+\beta) \log \left(\alpha_\H \right) 
               - \alpha_\L   (\alpha_\H+\beta) \log \left(\alpha_\L \right)    
                }{\alpha_\H-\alpha_\L},
\end{equation}
 then we have the equivalent requirement on $x$:
\begin{equation}\label{eq:phi-of-x}
\psi(x)= \mathcal{G}.
\end{equation}
Since $\bar{\alpha}(x)=\alpha_\H x+\alpha_\L (1-x)$, we have
$$\frac{d\psi(x)}{dx}=(\alpha_\H-\alpha_\L)\left(1+\frac{\beta}{\bar{\alpha}}\right) > 0,$$
and so $\psi$ is monotonically increasing on $(0,1)$, and has a smooth inverse $\psi^{-1}$.  The range of $\psi$ is $\psi(0)=\alpha_\L+\beta(1+\log\alpha_\L)< \psi(x) < \psi(1)=\alpha_\H+\beta(1+\log\alpha_\H)$, so Equation (\ref{eq:phi-of-x}) has a unique solution, provided $\mathcal{G}$ lies in this range.  To check, we need to verify that
\begin{eqnarray*}
&\alpha_\L+\beta(1+\log\alpha_\L)&\\
&<\frac{ \alpha_\H   (\alpha_\L+\beta) \log \left(\alpha_\H \right) 
               - \alpha_\L   (\alpha_\H+\beta) \log \left(\alpha_\L \right)    
                }{\alpha_\H-\alpha_\L}<&\\
&\alpha_\H+\beta(1+\log\alpha_\H).&
\end{eqnarray*}
Upon assuming that $\beta>0$ and $0\le\alpha_\L<\alpha_\H\le1$, and setting $y=(\alpha_\H-\alpha_\L)/\alpha_\H$, these inequalities reduce to showing that 
$$\frac{y}{1+y}<\log(1+y)<y$$
for $0<y\le 1$, which are readily verified.

This calculation shows that, for small $\epsilon>0$, there can only be one maximum for $\tilde{I}_\epsilon(x)$ in the interior of the unit interval. Moreover,  $\tilde{I}_\epsilon(x)\to 0$ for $x\to0$ and $x\to 1$, and $\tilde{I}_\epsilon(x)>0$ for $0<x<1$.  Therefore,  $\tilde{I}_\epsilon$ has to have at least one maximum, but it can have at most one critical point (by the preceding argument) so it has a unique maximum.

The corresponding value of $x$ will be the asymptotically optimal value $x_\text{opt}$, as $\epsilon\to 0^+$.

\subsection{Capacity- and mutual-information--maximizing parameter values}
\label{sec:CapMaxParameters}

\dr{In \S \ref{sec:Capacity}, Equation \ref{eqn:CIIDLH}  gives the mutual information for the discrete time BIND channel.  Here we show that the mutual information is bounded with respect to the three channel parameters $\alpha_\L, \alpha_\H$ and $\beta$, and that the capacity is an increasing
function of $\beta$ and $\alpha_\H$, and is decreasing in $\alpha_L$.}
(\dr{Consequently, for a fixed time step, the extremizing values of} these parameters \dr{all} equal  either 0 or 1, thus violating the strict ordering assumption.)

Dropping the maximization from (\ref{eqn:CIIDLH}), mutual information is written
\begin{equation}
	\label{eqn:MutualInfo}
	I(X;Y) = 
	\frac{\binent(\alpha_\H p_\H + \alpha_\L p_\L) 
		- p_\H \binent(\alpha_\H) - p_\L \binent(\alpha_\L)}
	{1 + (p_\H \alpha_\H + p_\L \alpha_\L)/\beta} .
\end{equation}
We assume that the parameters are strictly ordered (see Definition \ref{defn:StrictlyOrdered}).
With this assumption, $I(X;Y) > 0$; therefore, the same is true of the
numerator in (\ref{eqn:MutualInfo}), since the denominator is positive. 
Also note that
\begin{equation} 
	\frac{d}{dp} \binent(p) = \log \frac{1-p}{p} .
\end{equation}
We will use these properties below.

First consider $\beta$. By inspection of (\ref{eqn:MutualInfo}), $\beta$ only appears in the denominator,
and the denominator decreases with increasing $\beta$. Thus, $I(X;Y)$ is increasing in $\beta$,
and $\beta = 1$ is optimal.


Now consider $\alpha_\L$.
By inspection of (\ref{eqn:MutualInfo}), the denominator is increasing in $\alpha_\L$.
We can show that the numerator is decreasing in $\alpha_\L$: we can write
\begin{align}
	\nonumber
	\lefteqn{\frac{d}{d \alpha_\L} \left( \binent(\alpha_\H p_\H + \alpha_\L p_\L) 
		- p_\H \binent(\alpha_\H) - p_\L \binent(\alpha_\L) \right)} \\
	&= p_\L \log \frac{(1 - \alpha_\H p_\H - \alpha_\L p_\L) \alpha_\L}
		{(\alpha_\H p_\H + \alpha_\L p_\L)(1-\alpha_\L)} \\
	\label{eqn:Derivative}
	&= p_\L \log \frac{\alpha_\L - \alpha_\L(\alpha_\H p_\H + \alpha_\L p_\L)}
		{\alpha_\H p_\H + \alpha_\L p_\L - \alpha_\L(\alpha_\H p_\H + \alpha_\L p_\L)} \\
	&\leq 0 ,
\end{align}
where the final inequality follows since
$\alpha_\L \leq \alpha_\H p_\H + \alpha_\L p_\L$ (since $\alpha_\L \leq \alpha_\H$).
Thus, $I(X;Y)$ is decreasing in $\alpha_\L$, and $\alpha_\L = 0$ is optimal.

Finally, consider $\alpha_\H$: this case is slightly trickier than $\alpha_\L$, since 
both the numerator and denominator of (\ref{eqn:MutualInfo}) are increasing.
For simplicity, we start by substituting $\beta = 1$ and $\alpha_\L = 0$: we have
\begin{equation}
	I(X;Y) = \frac{\binent(\alpha_\H p_\H) 
		- p_\H \binent(\alpha_\H)}
	{1 + p_\H \alpha_\H} .
\end{equation}
To show that this quantity is increasing with $\alpha_\H$,
the first derivative with respect to $\alpha_\H$ is
\begin{align}
	\nonumber\lefteqn{\frac{d}{d\alpha_\H} I(X;Y) 
		= }&\\
\label{eqn:DerivativeAH}
		&\frac{(1+p_\H \alpha_\H) 
			p_\H \log\frac{(1-\alpha_\H p_\H)}{(1-\alpha_\H)p_\H}
			- p_\H (\binent(\alpha_\H p_\H) 
		- p_\H \binent(\alpha_\H))}{(1+\alpha_\H p_\H)^2} .
\end{align}
The goal is to determine whether $dI(X;Y)/d\alpha_\H$ is positive. 
It is useful to write
\begin{align}
	\nonumber\lefteqn{\binent(\alpha_\H p_\H) 
		- p_\H \binent(\alpha_\H)
	=}&\\
	&\log \frac{(1-\alpha_\H)^{p_\H}}{1-\alpha_\H p_\H}
		+ \alpha_\H p_\H \log \frac{(1-\alpha_\H p_\H)}{(1-\alpha_\H)p_\H} .
\end{align}
Thus, 
(\ref{eqn:DerivativeAH}) becomes
\begin{align}
	\nonumber\lefteqn{ \frac{d}{d\alpha_\H} I(X;Y)}&\\
	\nonumber =& \frac{p_\H}{(1+\alpha_\H p_\H)^2} \Bigg[(1+p_\H \alpha_\H) 
			\log\frac{(1-\alpha_\H p_\H)}{(1-\alpha_\H)p_\H}\\
	& - 
			\log \frac{(1-\alpha_\H)^{p_\H}}{1-\alpha_\H p_\H}
		- \alpha_\H p_\H \log \frac{(1-\alpha_\H p_\H)}{(1-\alpha_\H)p_\H}
		\Bigg] \\
	\label{eqn:DerivativeAH2}
	=& \frac{p_\H}{(1+\alpha_\H p_\H)^2}
		\log\frac{(1-\alpha_\H p_\H)^2}{(1-\alpha_\H)^{p_\H+1}p_\H} .
\end{align}
By inspection of (\ref{eqn:DerivativeAH2}), the derivative is positive when
\begin{equation}
	\label{eqn:DerivativeAH3}
	\frac{(1-\alpha_\H p_\H)^2}{(1-\alpha_\H)^{p_\H+1}p_\H} \geq 1 .
\end{equation}
Inequality (\ref{eqn:DerivativeAH3}) is satisfied for $\alpha_\H = 0$ (as $1/p_\H \geq 1$); 
to show that
it is satisfied for all strictly ordered $\alpha_\H$, we show that the left side of (\ref{eqn:DerivativeAH3}) is
increasing for $\alpha_\H \geq 0$. After some manipulation, we have
\begin{align}
	\nonumber \lefteqn{\frac{d}{d\alpha_\H} 
		\frac{(1-\alpha_\H p_\H)^2}{(1-\alpha_\H)^{p_\H+1}p_\H}} & \\
%
		&= \frac{p_\H(1-\alpha_\H)^{p_\H}(1-\alpha_\H p_\H) (1-p_\H)(1+\alpha_\H p_\H)}
		{((1-\alpha_\H)^{p_\H+1}p_\H)^2} ,
\end{align}
which is positive for all strictly ordered parameters. Thus, 
$I(X;Y)$ is increasing in $\alpha_\H$, and
$\alpha_\H =1$ is optimal.

The proceeding analysis is true for any valid setting of $p_\H$ and $p_\L$. Therefore, it applies to capacity as well as mutual information. 

\dr{This optimization calculation applies to the discrete time model for any (fixed) time step.  Within the framework of the continuous time BIND channel (\S \ref{sec:continuoustime}), there is no \textit{a priori} upper limit on the reaction rate constants $k_+$ and $k_-$. The calculation in this Appendix thus shows that, \textit{ceteris paribus}, a ligand-receptor system would have a higher capacity, the faster its binding rate $k_+$ and  its unbinding rate $k_-$, provided it could toggle the ligand concentration arbitrarily close to zero when sending the ``low" input signal.  Thermodynamic and other physical limitations prevent channels from obtaining arbitrarily large binding and unbinding rates, and reducing the signal concentration strictly to zero is generally not possible in biological systems.  The practical limits in specific signaling systems provide appealing topics for future investigation.}

\section*{Acknowledgments}

The authors are grateful for early support for this project from Terrence J.~Sejnowski and the Howard Hughes Medical Institute. We also thank Toby Berger, Tom Bartol, Hillel Chiel, Patrick Fitzsimmons, Peter Kotelenez, Marshall Leitman, Andries Lenstra, Vladimir I.~Rotar, Robin Snyder, and Elizabeth Wilmer for helpful discussions.  PJT thanks the Oberlin College Library for research support.

\bibliographystyle{IEEEtran}

\begin{IEEEbiography}[{\includegraphics[width=1in,height=1.25in,clip,keepaspectratio]{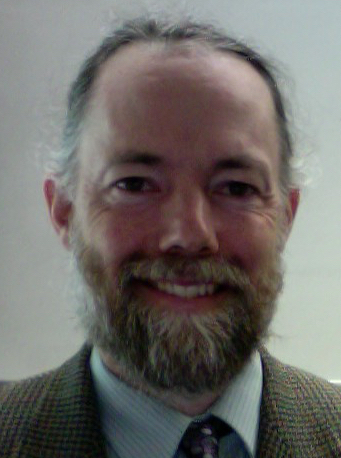}}]{Peter J.~Thomas}
studies communication and control in complex adaptive biological systems.
He obtained an M.A.~in the Conceptual Foundations of Science and 
a Ph.D.~in Mathematics from The University of Chicago in 2000. Following postdoctoral
work in the Computational Neurobiology Laboratory at The Salk Institute for
Biological Studies, he taught mathematics, neuroscience, and computational biology first at Oberlin College and now at Case Western Reserve University, where
he is Associate Professor of Mathematics, Applied Mathematics, and Statistics.
He holds secondary appointments in Biology, in Cognitive Science, and in Electrical Engineering and Computer Science. 
\end{IEEEbiography}

\begin{IEEEbiography}[{\includegraphics[width=1in,height=1.25in,clip,keepaspectratio]{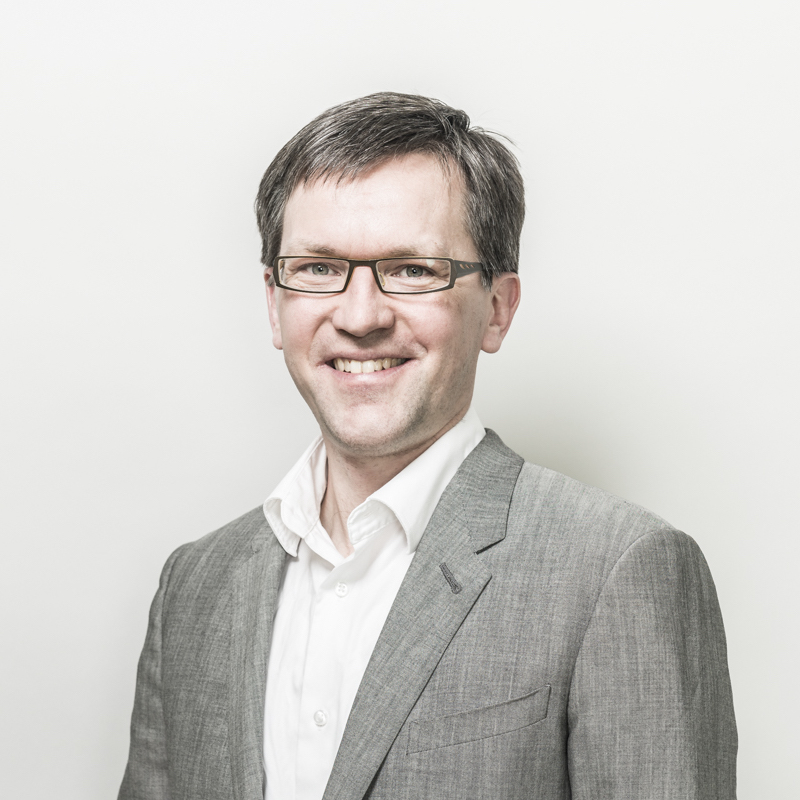}}]
{Andrew W. Eckford} received the B.Eng. degree from the Royal Military College of Canada, in 1996, and the M.A.Sc. and Ph.D. degrees from the University of Toronto, in 1999 and 2004, respectively, all in electrical engineering. He is an Associate Professor in the Department of Electrical Engineering and Computer Science at York University, Toronto, Ontario. He held postdoctoral fellowships at the University of Notre Dame and the University of Toronto, prior to taking up a faculty position at York in 2006. His research interests include the application of information theory to nonconventional channels and systems, especially the use of molecular and biological means to communicate. Dr. Eckford's research has been covered in media including The Economist, The Wall Street Journal, and IEEE Spectrum. He is also a co-author of the textbook Molecular Communication, published by Cambridge University Press, and was a finalist for the 2014 Bell Labs Prize.
\end{IEEEbiography}

\end{document}